\pdfoutput=1
\synctex=1
\documentclass[a4paper,reqno]{amsart}

\usepackage{mathtools}
\usepackage[american]{babel}

\usepackage{microtype}


\usepackage[dvipsnames]{xcolor}
\usepackage{hyperref}
\hypersetup{
    colorlinks=true,
    citecolor=Green,
    linkcolor=NavyBlue,
    filecolor=magenta,
    urlcolor=BrickRed
    }

\usepackage{enumitem}
\usepackage{fontawesome5} 
\usepackage{tikz-cd}
\usepackage[capitalize]{cleveref}

\usepackage[backend=biber,%
minalphanames=3,%
maxalphanames=4,%
maxnames=5,%
style=alphabetic,%
useprefix=true]{biblatex}
\addbibresource{references.bib}

\usetikzlibrary{decorations.markings,decorations.pathreplacing,
  shapes.geometric,matrix,arrows,chains,positioning,scopes}

\pgfdeclarearrow{
  name = pxto,
  setup code = {
    \pgfarrowssettipend{1.5\pgflinewidth}
    \pgfarrowssetbackend{-2.5508\pgflinewidth}
    \pgfarrowssetlineend{-.25\pgflinewidth}
    \pgfarrowssetvisualbackend{-0.021\pgflinewidth}
    \pgfarrowsupperhullpoint{1.5\pgflinewidth}{0\pgflinewidth}
    \pgfarrowsupperhullpoint{-2.0085\pgflinewidth}{3.6525\pgflinewidth}
    \pgfarrowsupperhullpoint{-2.5508\pgflinewidth}{3.0763\pgflinewidth}
  },
  drawing code = {
    \pgfsetdash{}{0pt}%
    \pgfpathmoveto{\pgfpoint{1.5\pgflinewidth}{0.0254\pgflinewidth}}%
    \pgfpathlineto{\pgfpoint{-2.0085\pgflinewidth}{3.6525\pgflinewidth}}%
    \pgfpathlineto{\pgfpoint{-2.5508\pgflinewidth}{3.0763\pgflinewidth}}%
    \pgfpathlineto{\pgfpoint{-0.4322\pgflinewidth}{0.5\pgflinewidth}}%
    \pgfpathlineto{\pgfpoint{-0.4322\pgflinewidth}{-0.5\pgflinewidth}}%
    \pgfpathlineto{\pgfpoint{-2.5508\pgflinewidth}{-3.0763\pgflinewidth}}%
    \pgfpathlineto{\pgfpoint{-2.0085\pgflinewidth}{-3.6525\pgflinewidth}}%
    \pgfpathclose%
    \pgfusepathqfill
  }
}
\tikzset{>=pxto}
\tikzcdset{arrow style=tikz}

\newcommand*{\pbcorner}[1][dr]{\ar[#1,phantom,"\lrcorner" , very near start]}
\newcommand*{\pocorner}[1][dr]{\ar[#1,phantom,"\ulcorner" , very near end]}

\DeclarePairedDelimiter{\pa}{(}{)}
\newcommand{\colonequiv}{\mathrel{\vcentcolon\mspace{-1mu}\equiv}}

\DeclarePairedDelimiter{\squash}{\|}{\|}
\DeclarePairedDelimiter{\tosquash}{|}{|}
\DeclarePairedDelimiter\Trunc{\lVert}{\rVert} 
\DeclarePairedDelimiter{\Bgen}{\B\langle}{\rangle}
\DeclarePairedDelimiter{\gen}{\langle}{\rangle}

\DeclareMathOperator{\id}{id}

\DeclareMathOperator{\inl}{inl}
\DeclareMathOperator{\inr}{inr}
\DeclareMathOperator{\refl}{refl}
\DeclareMathOperator{\ap}{ap}
\DeclareMathOperator{\ev}{ev}
\DeclareMathOperator{\fib}{fib}
\DeclareMathOperator{\Hom}{Hom}
\newcommand*{\B}{\mathrm{B}}
\newcommand*{\F}{\mathrm{F}}
\newcommand*{\BF}{\B\F}
\newcommand*{\BA}{\mathrm{BA}}
\newcommand*{\BH}{\mathrm{BH}}
\newcommand*{\AG}{\mathrm{A}} 
\newcommand*{\HG}{\mathrm{H}} 
\DeclareMathOperator{\Aut}{Aut}
\DeclareMathOperator{\north}{N}
\DeclareMathOperator{\south}{S}
\DeclareMathOperator{\merid}{m}
\newcommand*{\pt}{{\mathrm{pt}}}
\DeclareMathOperator{\const}{const}
\DeclareMathOperator{\im}{im}

\DeclareMathOperator{\susp}{\Sigma}

\newcommand*{\ptdto}{\to_\pt}%
\newcommand*{\join}{\mathbin{\ast}}
\newcommand*{\smashpr}{\mathbin{\wedge}}
\newcommand*{\wedgesum}{\mathbin{\vee}}
\newcommand*{\popr}{\mathbin{\square}}
\newcommand*{\Zero}{\mathbf{0}}
\newcommand*{\One}{\mathbf{1}}
\newcommand*{\Two}{\mathbf{2}}
\newcommand*{\Z}{\mathbb Z}
\newcommand*{\Circle}{S^1}
\newcommand*{\Sphere}[1]{S^{#1}}
\newcommand*{\loopc}{\mathrm{loop}}

\newcommand*{\Set}{\mathrm{Set}}
\newcommand*{\upH}{\mathrm{H}}
\newcommand*{\upK}{\mathrm{K}}

\newcommand*{\HatcherStr}{\textrm{Hatcher-structure}}

\DeclareMathOperator{\basept}{pt}

\DeclareMathOperator{\colim}{colim}

\DeclareMathOperator{\Grp}{\mathsf{Grp}}

\newcommand{\formalized}{{\color{NavyBlue!75!White}{\raisebox{-0.5pt}{\scalebox{0.8}{\faCog}}}}}
\newcommand{\flinkurl}[1]{\href{#1}{\formalized}}

\newcommand{\flinkspec}[3]{\flinkurl{https://unimath.github.io/agda-unimath/#1.#2.html\##3}}

\theoremstyle{plain}
\newtheorem{theorem}{Theorem}[section]
\newtheorem{lemma}[theorem]{Lemma}
\newtheorem{proposition}[theorem]{Proposition}
\newtheorem{corollary}[theorem]{Corollary}
\theoremstyle{definition}
\newtheorem{definition}[theorem]{Definition}

\newtheorem{principle}[theorem]{Principle}
\theoremstyle{remark}
\newtheorem{remark}[theorem]{Remark}


\begin{document}

\title[Epimorphisms and Acyclic Types in Univalent Foundations]%
{Epimorphisms and Acyclic Types \\ in Univalent Foundations}

\author[Buchholtz]{Ulrik Buchholtz}
\author[de Jong]{Tom de Jong}
\address{University of Nottingham, UK}
\email{{\href{mailto:ulrik.buchholtz@nottingham.ac.uk}{\texttt{ulrik.buchholtz@nottingham.ac.uk}} \\
    {\href{mailto:tom.dejong@nottingham.ac.uk}{\texttt{tom.dejong@nottingham.ac.uk}}}}}
\urladdr{\url{https://ulrikbuchholtz.dk/} \\ \url{https://tdejong.com}}

\author[Rijke]{Egbert Rijke}
\address{University of Ljubljana, Slovenia and Johns Hopkins University, MD, USA}
\email{\href{mailto:erijke1@jhu.edu}{erijke1@jhu.edu}}
\urladdr{\url{https://egbertrijke.github.io/}}

\begin{abstract}
  We characterize the epimorphisms in homotopy type theory (HoTT) as the
  fiberwise acyclic maps and develop a type-theoretic treatment of acyclic maps
  and types in the context of synthetic homotopy theory
  as developed in univalent foundations.
  We present examples and applications in group theory, such as the acyclicity
  of the Higman group, through the identification of groups with 0-connected,
  pointed 1-types.
  Many of our results are formalized as part of the agda-unimath library.
\end{abstract}

\keywords{Univalent Foundations, Homotopy Type Theory, Synthetic Homotopy
  Theory, Acyclic Space, Epimorphism, Suspension, Higman Group}

\maketitle

\section{Introduction}

Univalent Foundations relies on a homotopical refinement of the
\emph{propositions as types} approach to logical reasoning in dependent type
theory, thence also known as homotopy type theory (HoTT)~\cite[Ch.~3]{HoTTBook}.
One virtue of HoTT is that many advanced concepts from homotopy theory can be
expressed in simple logical terms, sidestepping encodings in terms of
combinatorial or point-set topological notions of spaces.  The corresponding
program is known as \emph{synthetic homotopy
  theory}~\cite{Awodey2012,Buchholtz2019,Shulman2021}.  Additional benefits are
that most results can be developed in a basic system of very modest
proof-theoretic strength~\cite{Rathjen2018}, way below that of classical
second-order arithmetic, and that the results apply more generally than
classical homotopy theory, namely in any higher topos~\cite{Shulman2019}.  Here,
we consider the notion of epimorphism of types in HoTT---%
and its deep connections to synthetic homotopy theory---%
paying close attention to the logical principles needed throughout.

A map \(f :A \to B\) is an \emph{epimorphism} if it has the desirable property
that for any map \(f' : A \to X\), there is at most one extension (dashed in the
diagram below) of \(f'\) along \(f\).
\begin{equation}\label{ext-diagram}
  \begin{tikzcd}
    A \ar[d,"f"']\ar[r,"f'"] & X \\
    B \ar[ur,dashed]
  \end{tikzcd}
\end{equation}
In (\(1\)-)category theory, this property is often equivalently phrased as: for any
two maps \(g,h : B \to X\), if \(g \circ f = h \circ f\), then \(g = h\).
It is well known that a map between sets is an epimorphism precisely when it is
surjective.
In HoTT one also considers \emph{higher
  types} that don't necessarily behave as sets, because in general, equality
types can have non-trivial structure.
As a consequence, the notion of epimorphism in HoTT becomes more involved and
rather interesting. We shall illustrate this with an example.

\subsection*{Epimorphisms and the circle}

To see that something unusual is going on in the presence of higher types, we
will show that, while the terminal map \(\Two\to\One\) is an epimorphism of
sets, it is \emph{not} an epimorphism of (higher) types. %
In fact, we claim that the type of extensions of \(\Two\to\Circle\) along
\(\Two\to\One\) is equivalent to \(\mathbb Z\).

Recall that the circle \(\Circle\) is the higher inductive type
generated by a base point \(\basept : \Circle\) and an identification
\({\loopc : \basept = \basept}\).
A standard result~\cite[Sec.~8.1]{HoTTBook} is that the loop space
\(\basept = \basept\) is equivalent to the type of integers \(\mathbb Z\).

Now consider diag.~\eqref{ext-diagram} where \(f\) is the map \(\Two\to\One\)
and \(f'\) is the constant map \(\Two\to\Circle\) pointing at the base point.
The type of extensions of \(f'\) along \(f\) is then equivalent to
\( \sum_{x : \Circle}\pa*{x = \basept} \times \pa*{x = \basept}\). Since
\(\sum_{x:\Circle}\pa*{x=\basept}\) is contractible it follows that the type of
extensions of \(f'\) along \(f\) is equivalent to the loop space
\(\basept =\basept\), which is in turn equivalent to~\(\mathbb Z\). Therefore we
see that the type of extensions of \(g\) along \(f\) has infinitely many
elements.  Thus, in a suitable, higher sense, the map \(\Two \to \One\) is
\emph{not} an epimorphism of (higher) types.

\subsection*{Related work}

The first main result of our paper is the characterization of epimorphisms in
homotopy type theory (HoTT) as those maps whose fibers are all
\emph{acyclic}. This result is expected from classical results in algebraic
topology~\cite{HausmannHusemoller1979,Alonso1983} and higher topos
theory~\cite{Hoyois2019,Raptis2019}, but new in HoTT.
Traditionally, acyclic spaces are important in K-theory~\cite{Kbook} and a space
is defined to be acyclic if its reduced integral homology vanishes. We instead
define a type to be acyclic if its \emph{suspension} is contractible and we relate
these two definitions in \cref{sec:homology}.

One can understand our results purely type-theoretically, but, at the same time,
our results apply to all Grothendieck \((\infty,1)\)-topoi---and not just the
\(\infty\)-topos of spaces---since HoTT can be seen as an internal language of
higher topoi~\cite{Shulman2019}.
This highlights an important difference between our work and that of
Raptis~\cite{Raptis2019} and Hoyois~\cite{Hoyois2019}.
The former applies to the \(\infty\)-topos of spaces only and sometimes relies
on tools only available there such as Whitehead's
Principle~\cite[Sec.~8.8]{HoTTBook}.
Hoyois establishes the acyclic maps as the left class of a modality on an
arbitrary \(\infty\)-topos, but uses site presentations to do so. In contrast,
the arguments of our work are fully internal.
The closure results of this paper (\cref{sec:acyclic-closure-properties})
along with the Blakers--Massey theorem and its dual (\cref{sec:blakers-massey}) would
follow if we could construct the modality of acyclic maps in HoTT, but since we
don't (yet) know how to do this, we offer direct proofs instead.
In the case of Blakers--Massey, Raptis gives a non-constructive argument, while
we offer a more constructive account, but still relying on an axiom.
In algebraic topology, such direct proofs were given for acyclic maps between
CW-spaces (spaces having the homotopy type of a CW-complex) by Hausmann and
Husemoller~\cite[Sec.~2]{HausmannHusemoller1979}; and Alonso~\cite[Sec.~4]{Alonso1983}
further studied acyclic maps between (path-connected) CW-spaces.
In a few places (i.e.,
\cref{fiber-is-cofiber,acyclic-extension,acyclic-iff-balanced}) we give direct
references to analogous results in
\cite{HausmannHusemoller1979,Alonso1983,Raptis2019} for comparison.

We emphasize that the proofs of the nontriviality and acyclicity of the examples
in \cref{sec:examples-acyclic} are new. For the first example,
Hatcher~\cite[Ex.~2.38]{HatcherAT} proves acyclicity of the complex using a
calculation in homology, whereas we derive acyclicity fairly directly from
Eckmann--Hilton~\cite[Thm.~2.1.6]{HoTTBook}.
The second example is the classifying type of the Higman group~\cite{Higman1951}
and the originality of our proof that this group is nontrivial is further
commented on below and in \cref{sec:higman}.

Finally, we mention that all our proofs are fully constructive and do not rely
on classical principles such as the axiom of choice and excluded
middle. We also do not make use of impredicativity in the form of
propositional resizing~\cite[Sec.~3.5]{HoTTBook}.

\subsection{Outline}

\begin{itemize}
\item Our first main result is \cref{acyclic-characterization} which
  characterizes the epimorphisms in homotopy type theory as the fiberwise
  acyclic maps.
  In \cref{sec:epis-and-acyclic-maps} we further prove closure properties and
  show that the epimorphisms are also exactly the balanced maps of
  Raptis~\cite{Raptis2019}.
\item \cref{sec:k-acyclic} introduces relativized notions of acyclicity and
  epimorphisms to \(k\)\nobreakdash-types (\cref{characterization-of-k-epis})
  with applications in group theory via the delooping of a group as a pointed
  connected \(1\)-type.
\item Although we leave establishing the acyclic maps as the left class of an
  orthogonal factorization system to future work, we do prove Blakers--Massey
  for acyclic maps (\cref{BM-acyclic}), and we study what the corresponding right
  class should be: the hypoabelian maps (\cref{sec:hypoab}).
  These results depend on an additional axiom that we call the \emph{plus
    principle} (\cref{sec:plus}).
\item In \cref{sec:homology} we relate our definition of an acyclic type to the
  classical definition which requires its reduced integral homology to be
  zero.
  We also discuss the situation for maps, which is a bit more subtle.
\item Finally, we exhibit interesting examples of acyclic types
  (\cref{sec:examples-acyclic}).
  \begin{itemize}
  \item The first example is a type-theoretic incarnation of Hatcher's
    two-dimensional complex~\cite[Ex.~2.38]{HatcherAT} and positively answer a
    question raised by Rezk~\cite[p.~11]{Rezk2019}.
  \item The second example is the classifying type of the Higman
    group~\cite{Higman1951}.
    In proving its nontriviality we make use of recent results by
    David W\"arn~\cite{Warn2023}.
    It is noteworthy
    that our proof is constructive and avoids
    combinatorial group theory~\cite{LyndonSchupp2001}, relying on higher
    categorical tools instead.
  \end{itemize}
\end{itemize}

\subsection{Foundations and preliminaries}
We work in homotopy type theory (HoTT), also known as univalent foundations, and
employ the conventions and notations of the HoTT book~\cite{HoTTBook}. We make
use of the univalence axiom and its consequences (such as function
extensionality) without mentioning this explicitly.

We also assume familiarity with \emph{pullbacks}~\cite[Exer.~2.11]{HoTTBook},
\emph{higher inductive types}~\cite[Sec.~6]{HoTTBook} and specifically,
\emph{pushouts}~\cite[Sec.~6.8]{HoTTBook}.
We also assume familiarity with \(k\)-types~\cite[Sec.~7.1]{HoTTBook} and recall
that a type \(A\) is \emph{\(k\)-connected}~\cite[Sec.~7.5]{HoTTBook} if its
\emph{\(k\)-truncation}~\cite[Sec.~6.9]{HoTTBook} is contractible, i.e.\
\(\squash{A}_k \simeq \One\), and that this notion extends to maps by
considering \emph{fibers}~\cite[Def~4.2.4]{HoTTBook}.

A recurring idea, developed in~\cite{BDR2018,Symmetry}, is to regard a group
\(G\) via its \emph{classifying type} \(\B G\): this is a \(0\)-connected,
pointed \(1\)-type such that taking loops at the point recovers the group \(G\),
i.e., we have an isomorphism of groups \(G \cong (\pt =_{\B G} \pt)\).
(The latter is indeed a group as we can compose and invert loops and these
operations are neutral on the trivial constant loop.)

\subsection{Formalization}

We used the Agda proof assistant~\cite{Agda} to formalize substantial parts of
this paper in the agda-unimath library~\cite{agda-unimath}.
Where appropriate, definitions, lemmas, theorems, etc.\ are marked with the
symbol \(\formalized\) that is a link to (the HTML rendering of) the relevant
file in the agda-unimath library.

\section{Epimorphisms and acyclic maps}\label{sec:epis-and-acyclic-maps}

In the theory of (1-)categories, a map \(f : A \to B\) is an epimorphism if any
two maps \(g,h : B \to X\) are equal as soon as \({g \circ f = h \circ f}\). In
other words, \(f\) is an epimorphism if precomposition with \(f\) is injective.
To get a homotopically well~behaved notion, i.e., to ensure that being epic is a
\emph{property} of a map, we replace the notion of injection by the
notion of embedding in our definition of
epimorphism. Recall~\cite[Def.~4.6.1]{HoTTBook} that a map is an embedding if
its action on identity types is an equivalence.
For sets the notions of embedding and injection coincide, so that
we recover the original notion of epimorphism in such cases.

\begin{definition}[\flinkspec{foundation}{epimorphisms}{definitions} %
  Epimorphism]\label{def:epic}
  A map \({f : A \to B}\) is an \emph{epimorphism} if for every type \(X\), the
  precomposition map
  \[
    (B \to X) \xrightarrow{f^\ast} (A \to X)
  \]
  is an embedding. We also say that \(f\) is an \emph{epi} or that it is
  \emph{epic}.
\end{definition}

Thus, \(f\) is epic precisely when the canonical function
\(g = h \to g \circ f = h \circ f\) is an equivalence for all maps
\(g,h : B \to X\).

\begin{remark}\label{extensions-as-fiber}
  Notice that the fiber of \(f^\ast\) at a map \(g : B \to X\) is exactly the
  type of extensions of \(g\) along \(f\), as considered at the start of the
  introduction to this paper.
  Since embeddings can be characterized as those maps whose fibers are all
  propositions~\cite[Thm.~12.2.3]{Rijke2022}, we see that \(f\) is an
  epimorphism if and only if for all \(g : B \to X\) the type of extensions of
  \(g\) along \(f\) is a proposition, i.e.\ every such \(g\) has at most one
  extension along \(f\).
\end{remark}

The universal quantification over types \(X\) in \cref{def:epic} should be
understood as ranging over types \(X\) in a type universe, so a priori this
definition only makes sense relative to a universe. However, it is consequence
of the characterization of epimorphisms (\cref{acyclic-characterization}) that
the notion is actually independent of the type universe.

\begin{lemma}
  [\flinkspec{foundation}{epimorphisms}{the-codiagonal-of-an-epimorphism-is-an-equivalence}]%
  \label{epic-iff-pushout-square}
  A map \(f : A \to B\) is epic if and only if the commutative square
  \[
    \begin{tikzcd}
      A \ar[r,"f"] \ar[d,"f"'] & B \ar[d,"\id"] \\
      B \ar[r,"\id"'] & B
    \end{tikzcd}
  \]
  is a pushout.
\end{lemma}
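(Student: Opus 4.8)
The plan is to reformulate both sides as statements about a single comparison map into the type of cocones, and then to recognize that this comparison map is nothing but the diagonal of the precomposition map \(f^\ast\).

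First I would recall the universal property of pushouts: by univalence, the displayed square is a pushout if and only if for every type \(X\) the map induced by the square,
\[
  (B \to X) \longrightarrow \sum_{g, h : B \to X} (g \circ f = h \circ f),
\]
which sends \(k\) to the cocone \((k, k, \refl)\) obtained by postcomposing the square with \(k\), is an equivalence. Here the codomain is the type of cocones under the span \(B \xleftarrow{f} A \xrightarrow{f} B\) with vertex \(X\), where I use function extensionality to present the commuting datum as an identification \(g \circ f = h \circ f\) rather than a homotopy.

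The key observation is that this cocone type is literally the self-pullback of \(f^\ast\): since \(f^\ast g\) is by definition \(g \circ f\), the identification \(f^\ast g = f^\ast h\) \emph{is} \(g \circ f = h \circ f\), so
\[
  \sum_{g, h : B \to X} (g \circ f = h \circ f) \;=\; (B \to X) \times_{(A \to X)} (B \to X),
\]
and under this identification the cocone comparison map above becomes the diagonal \(\delta_{f^\ast} : (B \to X) \to (B \to X) \times_{(A \to X)} (B \to X)\), \(k \mapsto (k, k, \refl)\). Now I invoke the standard characterization of embeddings: a map \(u\) is an embedding if and only if its diagonal into the self-pullback is an equivalence (equivalently, \(u\) is \((-1)\)-truncated exactly when its diagonal is \((-2)\)-truncated, i.e.\ an equivalence). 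Applying this to \(u = f^\ast\), the square is a pushout iff \(\delta_{f^\ast}\) is an equivalence for every \(X\), iff \(f^\ast\) is an embedding for every \(X\), which is precisely the statement that \(f\) is epic by \cref{def:epic}.

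I do not expect a genuine obstacle here, since both ingredients---the cocone description of the pushout's universal property and the diagonal characterization of embeddings---are standard imported facts; the content of the lemma is assembling them. The one point requiring care is the middle identification: unfolding the universal property of the pushout far enough to see that its comparison map coincides \emph{on the nose} with \(\delta_{f^\ast}\), and in particular checking that the two legs of the cocone and the two legs of the self-pullback are matched in the correct order so that no twist is introduced. This is bookkeeping with currying and function extensionality rather than a real difficulty.
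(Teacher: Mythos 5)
Your proposal is correct and follows essentially the same route as the paper: both start from the universal property of the pushout, identify the cocone type \(\sum_{g,h : B \to X}(g \circ f = h \circ f)\) with data about the precomposition map \(f^\ast\), and conclude by a standard characterization of embeddings. The only (immaterial) difference is in the last step: you recognize the comparison map as the diagonal \(\delta_{f^\ast}\) into the self-pullback and use that a map is an embedding iff its diagonal is an equivalence, whereas the paper rewrites the cocone type as \(\sum_{g}\fib_{f^\ast}(g \circ f)\) and uses that a map is an embedding iff its fibers over image points are contractible.
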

\begin{proof}
  The square is a pushout if and only if the map
  \begin{align*}
    (B \to X) &\to \sum_{g : B \to X}\sum_{h : B \to X}\pa*{g \circ f = h \circ f} \\
    k &\mapsto (k,k,\refl)
  \end{align*}
  is an equivalence for every type \(X\).
  Note that the right hand side is equivalent to
  \(\sum_{g : B \to X}\fib_{(-) \circ f}(g \circ f)\), so that we have this
  equivalence exactly when \(\fib_{(-) \circ f}(g \circ f)\) is contractible for
  all types \(X\) and maps \(g : B \to X\).
  But this happens if and only if \((-) \circ f\) is an embedding, i.e., when \(f\) is epic.
\end{proof}

Since we work in dependent type theory, it is natural to also
consider the following, seemingly stronger notion:

\begin{definition}
  [\flinkspec{foundation}{dependent-epimorphisms}{definitions} Dependent epimorphism]
  A \emph{dependent epimorphism} is a map \(f : A \to B\) such that for every
  type family \(P\) over \(B\), the precomposition map
  \[
    \prod_{b : B}P(b) \xrightarrow{f^\ast} \prod_{a : A}P(f(a))
  \]
  is an embedding.
\end{definition}

Note that every equivalence is a (dependent) epimorphism.
Obviously, every dependent epimorphism is an epimorphism, but the converse holds
as well. In fact, this will be a consequence of our characterization of the
epimorphisms as the acyclic maps, to which we now turn.

\subsection{Acyclic maps}
The notion of an acyclic map is defined fiberwise using the suspension of a type
which we recall from~\cite[Sec.~6.5]{HoTTBook}:

\begin{definition}[\flinkspec{synthetic-homotopy-theory}{suspensions-of-types}{definitions} %
  Suspension \(\Sigma A\)]
  The \emph{suspension} of a type \(A\) is the pushout of the terminal maps
  \(\One \leftarrow A \rightarrow \One\) and we denote it by \(\Sigma A\).
  Equivalently, it is the higher inductive type generated by two points
  \(\north,\south : \Sigma A\) (north and south) and for every \(a : A\), an
  identification \(\merid_a : \north = \south\) (the meridians).
\end{definition}

\begin{definition}[\flinkspec{synthetic-homotopy-theory}{acyclic-types}{definition} %
  Acyclicity]\label{def:acyclic}
  A type is \emph{acyclic} if its suspension is contractible; a map is
  acyclic if all of its fibers are acyclic types.
  Note that a type \(A\) is acyclic precisely when the unique map \(A \to \One\)
  is acyclic.
\end{definition}

~\cref{sec:homology} explains the relation to the traditional formulation of
acyclicity (using reduced homology), while \cref{sec:examples-acyclic} presents
examples of acyclic types.
It is natural to consider variations on the notion of acyclicity where one instead
requires the \(n\)-fold suspension (for \(n \geq 2\)) to become
contractible. \cref{acyclic-stabilization} shows that these notions reduce to the
above notion of an acyclic type, at least in the presence of an additional
principle (\cref{sec:plus}).
For now, we work towards characterizing the
epimorphisms as the acyclic maps.

\begin{definition}[\flinkspec{synthetic-homotopy-theory}{codiagonals-of-maps}{definitions} %
  Codiagonal \(\nabla_f\)]\label{def:codiagonal}
  The \emph{codiagonal} \(\nabla_f\) of a map \(f : A \to B\) is the dashed map
  in the pushout diagram:
  \[
    \begin{tikzcd}
      A \ar[r,"f"] \ar[d,"f"'] \pocorner{}
      & B \ar[d,"\inr"] \ar[ddr,bend left=20,"\id"]  \\
      B \ar[drr,bend right=20,"\id"'] \ar[r,"\inl"']
      & B +_{A} B \ar[dr,dashed,"\nabla_f"] \\
      & & B
    \end{tikzcd}
  \]
\end{definition}

The reason for introducing the codiagonal is that it is the ``fiberwise
suspension'' as made precise by the following:
\begin{lemma}[\flinkspec{synthetic-homotopy-theory}{codiagonals-of-maps}{the-codiagonal-is-the-fiberwise-suspension}]%
  \label{codiagonal-is-fiberwise-suspension}
  For every \(f : A \to B\) and \(b : B\) we have an equivalence
  \(\fib_{\nabla_f}(b) \simeq \susp{\fib_f(b)}\) between the fiber of\/
  \(\nabla_f\) at \(b\) and the suspension of the fiber of\/~\(f\) at \(b\).
\end{lemma}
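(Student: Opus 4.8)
The plan is to compute $\fib_{\nabla_f}(b)$ by pulling the pushout defining $B +_A B$ back along the point $b$, using the fact that pushouts in HoTT are stable under base change (the flattening lemma, \cite[Sec.~6.12]{HoTTBook}). By construction (\cref{def:codiagonal}), $B +_A B$ is the pushout of the span $B \xleftarrow{f} A \xrightarrow{f} B$, and $\nabla_f$ is the map induced by the cocone whose two legs $B \to B$ are both $\id$ and whose coherence is the trivial homotopy identifying $f$ with itself. Composing $\nabla_f$ with the two injections therefore returns $\id$ on each copy of $B$, while the composite $A \to B +_A B \xrightarrow{\nabla_f} B$ is $f$.

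First I would consider the type family $P$ over $B +_A B$ defined by $P(x) \colonequiv \pa*{\nabla_f(x) = b}$, whose total space $\sum_{x : B +_A B} P(x)$ is exactly $\fib_{\nabla_f}(b)$. By the flattening lemma this total space is the pushout of the span whose three vertices are the total spaces of $P$ restricted, via the structure maps of the pushout, to the two copies of $B$ and to $A$. Since $\nabla_f \circ \inl = \nabla_f \circ \inr = \id$, each copy of $B$ contributes $\sum_{b' : B}\pa*{b' = b}$, which is contractible; since the composite through $A$ is $f$, the vertex over $A$ contributes $\sum_{a : A}\pa*{f(a) = b}$, which is $\fib_f(b)$. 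The two induced span maps are then forced to be the terminal maps $\fib_f(b) \to \One$.

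Hence $\fib_{\nabla_f}(b)$ is equivalent to the pushout of $\One \leftarrow \fib_f(b) \to \One$, which is precisely $\susp{\fib_f(b)}$ by the definition of the suspension. The main work, and the only real obstacle, lies in the bookkeeping when applying the flattening lemma: one must verify that composing $\nabla_f$ with the cocone injections genuinely recovers $\id$ and $f$, carefully tracking the homotopies built into $\nabla_f$, and that the resulting span maps really are the terminal maps, so that the computed pushout agrees with the suspension on the nose. An alternative would be to invoke the more abstract principle that base change preserves pushouts (universality of colimits), which avoids naming the family $P$ explicitly but still requires the same identification of the pulled-back span.
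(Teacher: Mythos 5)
Your proposal is correct and is essentially the paper's own proof: both apply the flattening lemma to pull the defining pushout of \(B +_A B\) back along the point \(b : \One \to B\), identify the resulting span as \(\fib_{\id}(b) \leftarrow \fib_f(b) \rightarrow \fib_{\id}(b)\) with contractible outer vertices, and conclude that \(\fib_{\nabla_f}(b)\) is the pushout of \(\One \leftarrow \fib_f(b) \rightarrow \One\), i.e.\ the suspension. Your explicit naming of the family \(P(x) \colonequiv (\nabla_f(x) = b)\) is just a spelled-out version of the paper's ``pull back the pushout square along \(b\)''.
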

\begin{proof}
  By the flattening
  lemma~\cite[\href{https://unimath.github.io/agda-unimath/synthetic-homotopy-theory.flattening-lemma-pushouts.html}%
  {The flattening lemma for pushouts}]{agda-unimath}
  (cf.~\cite[Lem.~6.12.2]{HoTTBook}), we can pull back the pushout square
  in~\cref{def:codiagonal} along a point inclusion \(b : \One \to B\) to obtain
  the pushout square
  \[
    \begin{tikzcd}[row sep=5.5mm]
      \fib_f(b)
      \ar[d]
      \ar[r]
      \pocorner{}
      & \fib_{\id}(b) \ar[d] \\
      \fib_{\id}(b) \ar[r]
      & \fib_{\nabla_f}(b)
    \end{tikzcd}
  \]
  But the spans \(\fib_{\id}(b) \leftarrow \fib_f(b) \rightarrow \fib_{\id}(b)\)
  and \(\One \leftarrow \fib_f(b) \rightarrow \One\) are equivalent, so that
  \(\fib_{\nabla_f}(b)\) is also the pushout of the second span, i.e.\ it is the
  suspension of \(\fib_f(b)\), as desired.
\end{proof}

\begin{theorem}[\flinkspec{synthetic-homotopy-theory}{acyclic-maps}{a-map-is-acyclic-if-and-only-if-it-is-an-dependent-epimorphism} %
  Characterization of epimorphisms]\label{acyclic-characterization}
  The following are equivalent for a map \(f : A \to B\):
  \begin{enumerate}[label=(\roman*)]
  \item\label{item-epi} \(f\) is an epi,
  \item\label{item-dep-epi} \(f\) is a dependent epi,
  \item\label{item-acyclic} \(f\) is acyclic,
  \item\label{item-codiag-equiv} its codiagonal \(\nabla_f\) is an equivalence.
  \end{enumerate}
\end{theorem}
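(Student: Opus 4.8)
The plan is to prove all four conditions equivalent by first observing that three of them are immediately interderivable from the two preparatory lemmas, and then closing the loop with the one genuinely substantial implication, namely that acyclicity forces the dependent epimorphism property. The direction (ii) $\Rightarrow$ (i) is immediate: restricting a dependent type family to a constant one recovers \cref{def:epic} from the dependent version, so every dependent epimorphism is an epimorphism.

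For (i) $\Leftrightarrow$ (iv) I would appeal directly to \cref{epic-iff-pushout-square}: \(f\) is epic iff the square with legs \(f,f\) and \(\id,\id\) is a pushout. By the universal property, that square is a pushout precisely when the comparison map out of the genuine pushout \(B +_A B\) induced by the two copies of \(\id\) is an equivalence, and by \cref{def:codiagonal} this comparison map is exactly the codiagonal \(\nabla_f\). Hence \(f\) is epic iff \(\nabla_f\) is an equivalence.

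For (iv) $\Leftrightarrow$ (iii) I would use that a map is an equivalence iff all of its fibers are contractible, so \(\nabla_f\) is an equivalence iff \(\fib_{\nabla_f}(b)\) is contractible for every \(b : B\). By \cref{codiagonal-is-fiberwise-suspension} we have \(\fib_{\nabla_f}(b) \simeq \susp \fib_f(b)\), so this is equivalent to \(\susp \fib_f(b)\) being contractible for all \(b\), i.e.\ to every fiber of \(f\) being acyclic, which is precisely the definition (\cref{def:acyclic}) of \(f\) being acyclic. At this point (i), (iii), and (iv) are all equivalent.

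It remains to derive the dependent epimorphism property from acyclicity, and this is where I expect the real work to lie. The key structural fact I would exploit is that acyclicity is stable under base change: for any \(g : C \to B\) the base-changed map \(f_C : C \times_B A \to C\) satisfies \(\fib_{f_C}(c) \simeq \fib_f(g(c))\), so if every fiber of \(f\) is acyclic then so is every fiber of \(f_C\); combined with the already-established equivalence (iii) $\Leftrightarrow$ (i), every base change of \(f\) is again an epimorphism. The remaining task is to show that such a map is a dependent epimorphism. Given a family \(P\) over \(B\), I would write \(\pi : \sum_{b:B} P(b) \to B\) for the projection, identify dependent sections in \(\prod_{b:B} P(b)\) with sections of \(\pi\), and identify precomposition by \(f\) with restriction of sections along the base change of \(\pi\) to \(A\); the embedding condition on \(f^\ast\) then reduces to a uniqueness-of-extension statement that I would obtain from \cref{epic-iff-pushout-square} applied in the slice over \(\sum_{b:B} P(b)\), transporting the pushout square along \(\pi\) via descent. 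The delicate point is the coherence between equality of maps into \(\sum_{b:B} P(b)\) and equality of sections of \(\pi\): since the section condition is itself a dependent datum, a single instance of the non-dependent epimorphism property of the total space does not suffice, and one must feed in the full strength of base-change stability (equivalently, the universality of the pushout of \cref{epic-iff-pushout-square}).
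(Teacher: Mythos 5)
Your handling of (ii)\(\Rightarrow\)(i), (i)\(\Leftrightarrow\)(iv) and (iii)\(\Leftrightarrow\)(iv) coincides with the paper's: specialization to constant families, \cref{epic-iff-pushout-square}, and \cref{codiagonal-is-fiberwise-suspension} together with the fact that a map is an equivalence iff its fibers are contractible. The problem lies in the implication you yourself identify as the substantial one, acyclicity \(\Rightarrow\) dependent epi: there your text is a plan rather than a proof, and the plan as stated does not close. You reduce the statement to an embedding condition on sections of \(\pi : \sum_{b:B}P(b) \to B\), flag the coherence problem (an identification of sections is an identification of maps \(B \to \sum_{b:B}P(b)\) \emph{together with} a compatibility of the homotopies witnessing \(\pi \circ \sigma = \id_B\)), and then assert that it is handled by ``the full strength of base-change stability, equivalently the universality of the pushout''. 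That is exactly where the gap is: universality, i.e.\ stability of the pushout square of \cref{epic-iff-pushout-square} under pullback along \(\pi\), yields only that the base change \(A \times_B \pa*{\sum_{b:B}P(b)} \to \sum_{b:B}P(b)\) is again an epimorphism, and an epimorphism property --- of \(f\) or of any of its base changes --- controls maps into types, never sections together with their coherence data. So the tool you name does not discharge the point you flagged; the coherence problem is the actual content of the theorem and remains unproved in your write-up.

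What does close it is either the paper's argument or a genuinely \emph{dependent} form of the universal property. The paper sidesteps sections altogether: from \(A \simeq \sum_{b:B}\fib_f(b)\) one gets \(\prod_{a:A}P(f(a)) \simeq \prod_{b:B}\pa*{\fib_f(b) \to P(b)}\), under which \(f^\ast\) becomes the functorial action of \(\prod\) on the constants maps \(P(b) \to \pa*{\fib_f(b) \to P(b)}\); each of these is an embedding because \(\fib_f(b)\) is acyclic, hence its terminal map is epic, and \(\prod\) preserves fiberwise embeddings. Alternatively, your route can be repaired in two ways: (a) apply the dependent universal property (induction principle) of \(B +_A B\) to the family \(P \circ \nabla_f\), using that \(\ap_{\nabla_f}\) sends the glue paths to \(\refl\), so that the transport appearing in the coherence condition is trivial --- note that it is the dependent induction principle, not universality, that does the work here; or (b) consider the commutative square whose horizontal maps are \(f^\ast : (B \to E) \to (A \to E)\) and \(f^\ast : (B \to B) \to (A \to B)\) for \(E \colonequiv \sum_{b:B}P(b)\) (both embeddings, since \(f\) is epic) and whose vertical maps are postcomposition with \(\pi\); the induced map on fibers over \(\id_B\) and \(f\) is precisely the dependent precomposition map, and it is an embedding because the gap map of that square is one, by pullback-stability and left cancellation for embeddings. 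Either repair uses only tools already at your disposal, but neither is what you wrote; as it stands, the key implication is missing.
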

\begin{proof}
  The equivalence of \eqref{item-acyclic} and \eqref{item-codiag-equiv} follows
  from~\cref{codiagonal-is-fiberwise-suspension}.
  Moreover, \eqref{item-epi} and \eqref{item-codiag-equiv} are seen to be
  equivalent by~\cref{epic-iff-pushout-square}.
  Finally, suppose that \(f\) is an epi; we show that it a dependent epi as
  well. We need to prove that precomposition by \(f\) is an embedding
  \(\prod_{b : B}P(b) \hookrightarrow \prod_{a : A}P(f(a))\) for an arbitrary
  type family \(P\) over \(B\).
  The equivalence \(A \simeq \sum_{b : B}\fib_f(b)\) induces a commutative
  square
  \[
    \begin{tikzcd}
      \prod_{b : B}P(b) \ar[r,"f^\ast"] \ar[d]
      & \prod_{a : A}P(f(a)) \\
      \prod_{b : B}\pa*{\fib_f(b) \to P(b)} \ar[r,"\simeq"]
      & \prod_{w : \sum_{b : B}\fib_f(b)}P(\pi_1(w)) \ar[u,"\simeq"]
    \end{tikzcd}
  \]
  so it suffices to prove that the leftmost map, which is the functorial action
  of \(\prod\) at the constants map \(P(b) \to (\fib_f(b) \to P(b))\), is an
  embedding.
  By a special case of \cite[Exer.~13.12~(a)]{Rijke2022}, which is formalized
  in
  \cite[\href{https://unimath.github.io/agda-unimath/foundation.functoriality-dependent-function-types.html}%
  {Functoriality of dependent function types}]{agda-unimath},
  it suffices for this constants map to be an embedding for all \(b : B\).
  But this is indeed the case, because \(\fib_f(b)\) is acyclic, so that
  precomposition by the terminal map \(\fib_f(b) \to \One\) is an embedding.
\end{proof}

We will state further results in terms of acyclicity, often tacitly using that
the acyclic maps are exactly the epis.

For acyclic \emph{types}, we arrive at the following:

\begin{corollary}[\flinkspec{synthetic-homotopy-theory}{acyclic-maps}{a-type-is-acyclic-if-and-only-if-the-constant-map-from-any-type-is-an-embedding}]%
  \label{characterization-of-acyclic-types}
  The following are equivalent:
  \begin{enumerate}[label=(\roman*)]
  \item\label{item-type-acyclic} the type \(A\) is acyclic,
  \item\label{item-const-emb} for all types \(B\), the constants map
    \(B \to (A \to B)\) is an embedding,
  \item\label{item-const-equiv} for all types \(B\) and elements \(x,y : B\),
    the constants map \(x = y \to (A \to x = y)\) is an equivalence.
  \end{enumerate}
\end{corollary}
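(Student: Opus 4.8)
The plan is to prove this corollary as a direct specialization of the preceding theorem (\cref{acyclic-characterization}) to the case where the map $f$ is the terminal map $A \to \One$. By \cref{def:acyclic}, a type $A$ is acyclic precisely when $A \to \One$ is an acyclic map, so item~\ref{item-type-acyclic} is literally the statement that $A \to \One$ satisfies condition~\ref{item-acyclic} of the theorem. The theorem then tells us this is equivalent to $A \to \One$ being an epimorphism, and to it being a dependent epimorphism. My strategy is therefore to unfold what these epimorphism conditions say when $f$ is the terminal map, and to match them against~\ref{item-const-emb} and~\ref{item-const-equiv}.

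First I would establish the equivalence of~\ref{item-type-acyclic} and~\ref{item-const-emb}. Applying \cref{def:epic} to the terminal map $f : A \to \One$, the precomposition map $f^\ast : (\One \to B) \to (A \to B)$ must be an embedding for every type $B$. Since $\One \to B$ is equivalent to $B$ (evaluation at the unique point of $\One$), precomposition with the terminal map is, up to this equivalence, exactly the constants map $B \to (A \to B)$ sending $b$ to the constant function at $b$. Thus $A \to \One$ is an epimorphism if and only if the constants map $B \to (A \to B)$ is an embedding for all $B$, which is~\ref{item-const-emb}. Combined with \cref{acyclic-characterization} (equivalence of~\ref{item-epi} and~\ref{item-acyclic}), this gives \ref{item-type-acyclic}~$\Leftrightarrow$~\ref{item-const-emb}.

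Next I would handle~\ref{item-const-equiv}, which I expect to be the main point requiring care. A map is an embedding exactly when its action on identity types is an equivalence, so the constants map $B \to (A \to B)$ being an embedding means that for all $x, y : B$, the induced map on identifications $(x =_B y) \to (\const_x =_{A \to B} \const_y)$ is an equivalence. By function extensionality, the identity type $\const_x = \const_y$ is equivalent to $\prod_{a : A}(x = y)$, i.e.\ to $A \to (x = y)$, and under this identification the action on paths becomes precisely the constants map $(x = y) \to (A \to x = y)$. Hence the embedding condition~\ref{item-const-emb} is equivalent to the statement that this latter constants map is an equivalence for all $x, y : B$, which is exactly~\ref{item-const-equiv}. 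The delicate step is verifying that the action-on-paths of the constants map corresponds, under function extensionality, to the constants map at the identity-type level; this is a standard but slightly fiddly computation comparing $\ap_{\const_{(-)}}$ with $\const$, and I would cite the characterization of embeddings via identity types together with the function extensionality computation of path spaces in function types.

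Putting these together yields the chain \ref{item-type-acyclic}~$\Leftrightarrow$~\ref{item-const-emb}~$\Leftrightarrow$~\ref{item-const-equiv}, completing the proof. I expect the argument to be short since essentially all the work is done by \cref{acyclic-characterization}; the only genuine content beyond citing the theorem is the two unfolding computations identifying precomposition-with-the-terminal-map as the constants map, and identifying its action on identity types as the constants map on path spaces.
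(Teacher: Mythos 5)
Your proposal is correct and follows essentially the same route as the paper: both specialize \cref{acyclic-characterization} to the terminal map $A \to \One$, identify precomposition along it with the constants map via the equivalence $B \simeq (\One \to B)$, and then obtain the equivalence of \labelcref{item-const-emb} and \labelcref{item-const-equiv} by characterizing embeddings through their action on identity types and transporting that action across function extensionality to the constants map $(x=y) \to (A \to x=y)$. The ``fiddly'' step you flag is handled in the paper by exactly the commutative diagram you describe, so nothing further is missing.
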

\begin{proof}
  Writing \(!_A : A \to \One\) for the unique map from \(A\) to the unit type,
  the commutative diagram
  \[
    \begin{tikzcd}
      (\One \to B) \ar[r,"(!_A)^\ast"] & (A \to B) \\
      B \ar[ur, "\const_{B,A}"'] \ar[u,"\simeq"]
    \end{tikzcd}
  \]
  informs us that \((!_A)^\ast\) is an embedding if and only if \(\const_{B,A}\)
  is, which proves the equivalence of \eqref{item-type-acyclic} and
  \eqref{item-const-emb} via \cref{acyclic-characterization}.
  For the equivalence of \eqref{item-const-emb} and \eqref{item-const-equiv} we
  let \(x,y : B\) and use the commutative diagram
  \[
    \begin{tikzcd}[row sep=4mm]
      (x = y) \ar[ddr,bend right,"\const_{x=y,A}"'] \ar[r,"\ap^{\const_{B,A}}_{x,y}"]
      &(\const_{B,A}(x) = \const_{B,A}(y))
      \ar[d,"\simeq"] \\
      &\prod_{a : A} \const_{B,A}(x)(a) = \const_{B,A}(y)(a)
      \ar[d,"\simeq"] \\
      &(A \to x = y)
    \end{tikzcd}
  \]
  to see the map \(\ap^{\const_{B,A}}_{x,y}\) is an equivalence if and only if
  \(\const_{x=y,A}\) is.
  But the former condition for all \(x,y : B\) says precisely that
  \(\const_{B,A}\) is an embedding.
\end{proof}

Examples of acyclic types, other than the unit type, are presented
in~\cref{sec:examples-acyclic}.

\subsection{Closure properties}\label{sec:acyclic-closure-properties}

We show that acyclic types and maps enjoy several closure
properties. By~\cref{acyclic-characterization} these properties also hold for
epimorphisms of course.

\begin{lemma}[\flinkspec{synthetic-homotopy-theory}{acyclic-maps}{the-class-of-acyclic-maps-is-closed-under-composition-and-has-the-right-cancellation-property}]%
  \label{acyclic-composition-and-right-cancellation}
  The class of acyclic maps is closed under composition and has the \emph{right
    cancellation property}: if \(g \circ f\) and \(f\) are acyclic, then so is
  \(g\).
\end{lemma}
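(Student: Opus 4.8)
The plan is to reduce everything to the characterization of acyclic maps as epimorphisms from \cref{acyclic-characterization}, and then to read off both closure properties from the corresponding well-known facts about embeddings. Recall that, by definition, a map $f$ is epic precisely when for every type $X$ the precomposition map $f^\ast : (B \to X) \to (A \to X)$ is an embedding. So it suffices to establish the two statements at the level of these precomposition embeddings.

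For closure under composition, suppose $f : A \to B$ and $g : B \to C$ are acyclic, hence epic. For each type $X$ the precomposition map of $g \circ f$ factors as the composite $(C \to X) \xrightarrow{g^\ast} (B \to X) \xrightarrow{f^\ast} (A \to X)$, that is, $(g \circ f)^\ast = f^\ast \circ g^\ast$. Since $f^\ast$ and $g^\ast$ are embeddings and embeddings are closed under composition, $(g \circ f)^\ast$ is an embedding, so $g \circ f$ is epic and therefore acyclic.

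For the right cancellation property, suppose $g \circ f$ and $f$ are acyclic, hence epic, and fix a type $X$. Using the same factorization $(g \circ f)^\ast = f^\ast \circ g^\ast$, the left-hand side is an embedding. I would then invoke the left-cancellation property of embeddings: if a composite $p \circ q$ is an embedding, then $q$ is an embedding. Applying this with $p = f^\ast$ and $q = g^\ast$ shows that $g^\ast$ is an embedding, so $g$ is epic and therefore acyclic.

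Neither part presents a genuine obstacle: all the content sits in the two standard facts that embeddings are closed under composition and satisfy left cancellation, both of which follow immediately from the characterization of embeddings as the maps all of whose fibers are propositions (as already used in \cref{extensions-as-fiber}). It is worth noting that the hypothesis that $f$ is acyclic is in fact superfluous for right cancellation: the cancellation property of embeddings gives that $g^\ast$ is an embedding from the assumption that $(g \circ f)^\ast$ is, with no condition on $f^\ast$. I would include the hypothesis only to match the usual formulation of right cancellation.
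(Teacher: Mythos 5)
Your overall strategy is exactly the paper's: pass to precomposition maps via \cref{acyclic-characterization}, note \((g \circ f)^\ast = f^\ast \circ g^\ast\), and conclude using the composition and left cancellation properties of embeddings. However, there is a genuine error in how you state left cancellation, and it propagates into a false claim at the end. In HoTT it is \emph{not} true that ``if a composite \(p \circ q\) is an embedding, then \(q\) is an embedding'': take \(q : \One \to \Circle\) the base point inclusion and \(p : \Circle \to \One\); then \(p \circ q = \id_\One\) is an embedding, but \(q\) is not (its fiber at the base point is \(\Omega\Circle \simeq \Z\)). The correct cancellation property, which is what the paper invokes, requires the \emph{outer} map \(p\) to be an embedding as well: then \(\ap_{p \circ q} = \ap_p \circ \ap_q\) is an equivalence with \(\ap_p\) an equivalence, so \(\ap_q\) is an equivalence. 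The unrestricted version you state is a property of injections between \emph{sets}; it fails for embeddings of higher types, which is precisely the set-versus-higher-types subtlety this paper is about.

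Consequently, your closing remark that ``the hypothesis that \(f\) is acyclic is in fact superfluous for right cancellation'' is false. Concretely, let \(f = \inl : \One \to \Two\) and \(g : \Two \to \One\). Then \(g \circ f = \id_\One\) is acyclic, but \(g\) is not: its fiber is \(\Two\) and \(\susp\Two \simeq \Circle\) is not contractible (as the paper notes when observing that acyclic types are not closed under coproducts). At the level of precomposition maps, \((g\circ f)^\ast\) is an equivalence, yet \(g^\ast\) is the diagonal \(X \to X \times X\), which is not an embedding when \(X = \Circle\). Your main argument does reach the correct conclusion, but only because the lemma's hypothesis that \(f\) is acyclic makes \(f^\ast\) an embedding, so the \emph{correct} form of left cancellation applies---which is exactly the paper's proof. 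You should repair the statement of the cancellation lemma and delete the final remark.
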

\begin{proof}
  Suppose that \(f\) is acyclic, i.e., that precomposition with \(f\) is an
  embedding.
  Then, for any type \(X\), the composite
  \[
    (C \to X) \xrightarrow{g^\ast} (B \to X) \xrightarrow{f^\ast} (A \to X),
  \]
  which is \(\pa{g \circ f}^\ast\), is an embedding if and only if \(g^\ast\)
  is, by the composition and \emph{left} cancellation properties of the class of
  embeddings.
\end{proof}

It follows from~\cref{acyclic-composition-and-right-cancellation} that the acyclic types are closed under
dependent sums:

\begin{proposition}[\flinkspec{synthetic-homotopy-theory}{acyclic-maps}{acyclic-types-are-closed-under-dependent-pair-types}]%
  \label{acyclic-Sigma}
  If \(A\) is an acyclic type and \(B\) is a family of acyclic types over \(A\),
  then \(\sum_{a : A}B(a)\) is acyclic.
\end{proposition}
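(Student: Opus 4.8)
The plan is to realize the terminal map out of \(\sum_{a:A}B(a)\) as a composite of two acyclic maps and then invoke the closure of acyclic maps under composition (\cref{acyclic-composition-and-right-cancellation}). Recall from \cref{def:acyclic} that a type is acyclic precisely when its unique map to \(\One\) is acyclic, so it suffices to show that the terminal map \(!_{\sum_{a:A}B(a)} : \sum_{a:A}B(a) \to \One\) is acyclic.

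First I would factor this terminal map through \(A\), writing it as the composite
\[
  \sum_{a:A}B(a) \xrightarrow{\fst} A \xrightarrow{!_A} \One,
\]
where \(\fst\) is the first projection and \(!_A\) is the terminal map on \(A\). That this composite is (homotopic to) \(!_{\sum_{a:A}B(a)}\) is immediate, since any two maps into \(\One\) agree.

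Next I would check that each of the two factors is acyclic. For the projection \(\fst\), the key fact is that its fiber over a point \(a : A\) is equivalent to \(B(a)\); this is the standard computation of the fiber of a first projection out of a dependent pair type. Since each \(B(a)\) is acyclic by hypothesis, all fibers of \(\fst\) are acyclic, so \(\fst\) is an acyclic map. The map \(!_A\) is acyclic exactly because \(A\) is an acyclic type, again by \cref{def:acyclic}.

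Finally, applying the closure of the class of acyclic maps under composition from \cref{acyclic-composition-and-right-cancellation} to these two acyclic maps shows that \(!_{\sum_{a:A}B(a)}\) is acyclic, which is precisely the statement that \(\sum_{a:A}B(a)\) is an acyclic type. I do not expect any serious obstacle here: the only point requiring care is the identification of the fiber of \(\fst\) with \(B(a)\), which is routine, after which the result follows formally from the composition property.
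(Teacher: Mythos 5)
Your proposal is correct and follows exactly the paper's own argument: factor the terminal map as \(\sum_{a:A}B(a) \xrightarrow{\fst} A \to \One\), note that \(\fib_{\fst}(a) \simeq B(a)\) is acyclic and that \(A\) is acyclic, and conclude by closure under composition (\cref{acyclic-composition-and-right-cancellation}). No gaps; the fiber identification you flag as the only delicate point is indeed routine and is used tacitly in the paper as well.
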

\begin{proof}
  The composite \(\sum_{a : A}B(a) \xrightarrow{\pi_1} A \to \One\) is acyclic,
  because \(A\) and \(\fib_{\pi_1}(a) \simeq B(a)\) are acyclic types.
\end{proof}

\begin{corollary}[\flinkspec{synthetic-homotopy-theory}{acyclic-maps}{acyclic-types-are-closed-under-binary-products}]
  Finite products of acyclic types are acyclic.
\end{corollary}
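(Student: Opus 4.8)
The plan is to deduce the corollary directly from \cref{acyclic-Sigma} by exhibiting a finite product as an iterated dependent sum over acyclic types. First I would treat the binary case: given acyclic types $A$ and $C$, observe that the product $A \times C$ is definitionally (or at least equivalently) the dependent sum $\sum_{a : A} C$, where $C$ is regarded as the constant family over $A$. Since $A$ is acyclic by hypothesis and the constant family $\lambda a.\, C$ consists of acyclic types (each fiber being $C$, which is acyclic), \cref{acyclic-Sigma} immediately yields that $A \times C$ is acyclic.

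To handle arbitrary finite products I would proceed by induction on the number of factors. The empty product is the unit type $\One$, which is acyclic since its suspension $\susp\One$ is contractible (indeed $\One$ is contractible, hence acyclic, as noted after \cref{def:acyclic}). For the inductive step, a product of $n+1$ acyclic types is the product of the first factor with the product of the remaining $n$, the latter being acyclic by the induction hypothesis; the binary case then finishes the step.

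I do not anticipate a genuine obstacle here, as the corollary is essentially a specialization of \cref{acyclic-Sigma} to constant families, combined with the acyclicity of $\One$. The only point requiring minor care is the identification of $A \times C$ with $\sum_{a:A} C$ and the verification that the constant family is a family of acyclic types, but both are routine. In the formalization one would simply instantiate the dependent-sum result at a constant type family, so the proof is short.

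\begin{proof}
  The unit type $\One$ is acyclic, being contractible. Given acyclic types $A$ and $C$, the product $A \times C$ is equivalent to $\sum_{a : A}C$, the dependent sum of the constant family at $C$ over $A$. Since every fiber of this family is the acyclic type $C$, \cref{acyclic-Sigma} shows that $A \times C$ is acyclic. The general case follows by induction on the number of factors, using that an $(n+1)$-fold product is the binary product of one factor with an $n$-fold product.
\end{proof}
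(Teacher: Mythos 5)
Your proof is correct and matches the paper's, which simply notes that the corollary is the non-dependent version of \cref{acyclic-Sigma}; your instantiation at the constant family $\lambda a.\, C$ is exactly that specialization. Spelling out the empty product ($\One$ is contractible, hence acyclic) and the induction on the number of factors is fine but routine elaboration of the same idea.
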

\begin{proof}
  This is the non-dependent version of the previous result.
\end{proof}

Another consequence is that (inhabited) locally acyclic types are acyclic:

\begin{lemma}[\flinkspec{synthetic-homotopy-theory}{acyclic-maps}{inhabited-locally-acyclic-types-are-acyclic}]%
  \label{acyclic-Id}
  If all identity types of an inhabited type~\(A\) are acyclic, then so is
  \(A\).
\end{lemma}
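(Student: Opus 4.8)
The plan is to reduce the statement to the right cancellation property of acyclic maps from \cref{acyclic-composition-and-right-cancellation}. The key preliminary observation is that being acyclic is a \emph{proposition}: it asserts that \(\susp A\) is contractible, and contractibility is a proposition. Since the goal is therefore a proposition, the mere inhabitation of \(A\) allows us to extract an honest point \(a_0 : A\) by the universal property of the propositional truncation. From here the argument proceeds by factoring the terminal map \({!_A} : A \to \One\) through this chosen point.

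With \(a_0\) in hand, first I would consider the point inclusion \(a_0 : \One \to A\). Its fiber over \(x : A\) is the identity type \((a_0 = x)\), which is acyclic by hypothesis; hence \(a_0 : \One \to A\) is an acyclic map. Next, the composite
\[
  \One \xrightarrow{a_0} A \xrightarrow{{!_A}} \One
\]
is the identity map on \(\One\). Every equivalence is acyclic, since its fibers are contractible and contractible types are acyclic (the suspension of a contractible type is again contractible), so in particular this composite is acyclic.

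Finally I would invoke the right cancellation property of \cref{acyclic-composition-and-right-cancellation} with \(f = a_0 : \One \to A\) and \(g = {!_A} : A \to \One\): both \(g \circ f\) and \(f\) are acyclic, and we conclude that \(g\) is acyclic, which is exactly to say that \(A\) is acyclic. The only delicate step is the passage from mere inhabitation to a genuine point, and this is justified precisely because the goal we are proving is a proposition; everything else is a direct application of the closure properties already established.
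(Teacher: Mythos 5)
Your proof is correct and is essentially identical to the paper's own argument: both extract a point \(a_0 : A\) using the fact that acyclicity is a proposition, observe that the point inclusion \(\One \to A\) is acyclic because its fibers are the identity types \((a_0 = x)\), note that the composite \(\One \to A \to \One\) is acyclic, and conclude by the right cancellation property of \cref{acyclic-composition-and-right-cancellation}. The only difference is that you spell out the intermediate justifications (why acyclicity is a proposition, why the composite identity map is acyclic) which the paper leaves implicit.
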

\begin{proof}
  Since being acyclic is a property, we may take a point \(a : A\).
  The composite \(\One \xrightarrow{\ulcorner{a}\urcorner} A \to \One\) is
  acyclic and so is \(\fib_{\ulcorner{a}\urcorner}(x) \simeq (a = x)\) for every
  \(x : A\) by assumption.
  But then \(A \to \One\) is acyclic too by~\cref{acyclic-composition-and-right-cancellation}.
\end{proof}

The converse of~\cref{acyclic-Id} fails, as we will see later when we have
produced an example of an acyclic type.

\begin{proposition}[\flinkspec{synthetic-homotopy-theory}{acyclic-maps}{acyclic-maps-are-closed-under-pushouts}]%
  \label{acyclic-pullback-and-pushout-stable}
  The acyclic maps are stable under pullbacks and pushouts.
\end{proposition}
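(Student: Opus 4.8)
The plan is to dispatch the two stability properties by separate arguments: pullbacks by a direct fiberwise computation, and pushouts through the epimorphism characterization of \cref{acyclic-characterization}. For pullbacks, I would start from a pullback square exhibiting $f' : A' \to B'$ as the base change of an acyclic map $f : A \to B$ along some $g : B' \to B$. Since acyclicity of a map is \emph{defined} fiberwise, and since the fibers of a pullback map are equivalent to the fibers of the original map, for every $b' : B'$ we obtain an equivalence $\fib_{f'}(b') \simeq \fib_f(g(b'))$. As $f$ is acyclic, the right-hand side is an acyclic type, and being acyclic transports across the equivalence, so $\fib_{f'}(b')$ is acyclic for all $b'$. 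Hence $f'$ is acyclic. This direction is essentially formal, relying only on the standard identification of the fibers of a pullback map.

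For pushouts I would instead use that the acyclic maps are exactly the epimorphisms (\cref{acyclic-characterization}) and argue at the level of precomposition maps. Consider a pushout square
\[
  \begin{tikzcd}
    A \ar[r,"g"] \ar[d,"f"'] \pocorner{} & C \ar[d,"f'"] \\
    B \ar[r] & B +_{A} C
  \end{tikzcd}
\]
with $f$ acyclic, so that $f'$ is its cobase change. Fix a type $X$. By the universal property of the pushout, the type $(B +_A C) \to X$ is equivalent to the type of cocones $\sum_{u : B \to X}\sum_{v : C \to X}(u \circ f = v \circ g)$, and under this equivalence the precomposition map $(f')^\ast$ corresponds to the projection $(u,v,\alpha) \mapsto v$. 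The fiber of this projection over a fixed $v_0 : C \to X$ is $\sum_{u : B \to X}(u \circ f = v_0 \circ g)$, which is precisely $\fib_{f^\ast}(v_0 \circ g)$. Since $f$ is epic, $f^\ast$ is an embedding and hence all of its fibers are propositions; therefore the projection is an embedding, and so is $(f')^\ast$. As $X$ was arbitrary, $f'$ is an epimorphism, i.e.\ acyclic.

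The main obstacle is the bookkeeping in the pushout direction: one must carefully identify $(f')^\ast$ with the cocone projection via the universal property, and then recognize the fiber of that projection over $v_0$ as the fiber of $f^\ast$ over $v_0 \circ g$. Once this identification is in place, the conclusion follows immediately from the fact that epimorphisms are exactly the maps with embedding precomposition, together with the characterization of embeddings as maps with propositional fibers. The pullback direction, by contrast, needs nothing beyond the fiberwise definition of acyclicity.
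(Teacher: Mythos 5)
Your proposal is correct and follows essentially the same route as the paper: pullback-stability via the standard identification of fibers of a base change with fibers of the original map, and pushout-stability by translating precomposition along the cobase change, via the universal property of the pushout, into the extension problem along $f$, whose fibers are propositions since $f$ is an epimorphism. Your cocone-projection bookkeeping is just a more explicit spelling-out of the paper's one-line appeal to the universal property together with \cref{extensions-as-fiber}.
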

\begin{proof}
  Given a pullback square
  \[
    \begin{tikzcd}
      A\times_C B \ar[very near start,"\lrcorner",phantom,dr]
      \ar[d,"\pi_1"'] \ar[r,"\pi_2"] & B \ar[d,"f"] \\
      A \ar[r,"g"'] & C
    \end{tikzcd}
  \]
  we have, for every \(a : A\), an equivalence 
  \(\fib_{\pi_1}(a) \simeq \fib_f(g(a))\) by~\cite[Lem.~7.6.8]{HoTTBook}, which
  proves that \(\pi_1\) is acyclic if \(f\) is.

  Assume $f$ is acyclic and consider the pushout diagram
  \[
    \begin{tikzcd}
      A \ar[d,"f"'] \ar[r,"g"] \pocorner{}
      & C \ar[d,"\inr"] \\
      B \ar[r,"\inl"']
      & B +_A C
    \end{tikzcd}
  \]
  The type of extensions of a map \(h \colon C \to X\) along \(\inr\) is
  equivalent to the type of extensions of \(h \circ g\) along \(f\) by the
  universal property of pushout.
  By acyclicity of \(f\), the latter extension problem has at most one solution
  (recall \cref{extensions-as-fiber}). But then so does the former, so \(\inr\)
  is acyclic.
\end{proof}

\begin{remark}
  Note that the acyclic types are not closed under coproducts: while \(\One\) is
  acyclic, the coproduct \(\One + \One\) is not, since
  \(\susp (\One + \One) \simeq \Circle\) which is, of course, not contractible.
\end{remark}

\begin{lemma}[\flinkspec{synthetic-homotopy-theory}{acyclic-types}{acyclic-types-are-closed-under-retracts}]%
  \label{acyclic-retract}
  Acyclic types are closed under retracts.
\end{lemma}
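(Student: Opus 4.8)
The plan is to exploit the functoriality of suspension together with the fact that contractibility is inherited by retracts. Suppose \(A\) is a retract of an acyclic type \(B\), witnessed by a section \(s \colon A \to B\), a retraction \(r \colon B \to A\), and a homotopy \(r \circ s \sim \id_A\). The goal is to show that \(\susp A\) is contractible, knowing that \(\susp B\) is.

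First I would recall that the suspension operation \(\susp\) has a functorial action on maps satisfying \(\susp(\id) \sim \id\) and \(\susp(g \circ f) \sim \susp g \circ \susp f\), and that it also acts on homotopies. Applying this to the retract data produces maps \(\susp s \colon \susp A \to \susp B\) and \(\susp r \colon \susp B \to \susp A\) together with a homotopy
\[
  \susp r \circ \susp s \sim \susp(r \circ s) \sim \susp(\id_A) \sim \id_{\susp A},
\]
obtained by pasting the functoriality homotopy for composition, the action of \(\susp\) on the given homotopy \(r \circ s \sim \id_A\), and the functoriality homotopy for identities. This exhibits \(\susp A\) as a retract of \(\susp B\).

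It then remains to invoke that \(\susp B\) is contractible, since \(B\) is acyclic, and that retracts of contractible types are contractible. Combining these, \(\susp A\) is contractible, which is by definition the statement that \(A\) is acyclic.

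The only genuinely delicate point is the functoriality of the suspension—specifically producing the coherence homotopies \(\susp(\id) \sim \id\) and \(\susp(g \circ f) \sim \susp g \circ \susp f\) and the action of \(\susp\) on a homotopy. These are established via the universal property of the defining pushout (equivalently, by suspension induction on the two poles and the meridians) and are already available in the agda-unimath library. Once they are in hand the retract argument is immediate, so I expect no further obstacles.
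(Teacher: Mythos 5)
Your proof is correct and follows exactly the paper's argument: functoriality of suspension turns the retract data for \(A\) over \(B\) into retract data for \(\susp A\) over \(\susp B\), and then contractibility of \(\susp B\) passes to its retract \(\susp A\). The extra detail you give on the coherence homotopies \(\susp(\id) \sim \id\) and \(\susp(g \circ f) \sim \susp g \circ \susp f\) is precisely what the paper's phrase ``by functoriality of the suspension'' packages up, so there is no gap.
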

\begin{proof}
  If \(A\) is a retract of \(B\), then \(\Sigma A\) is a retract of \(\Sigma B\)
  by functoriality of the suspension. If \(B\) is acyclic, then
  \(\Sigma B \simeq \One\) and retracts of contractible types are contractible.
\end{proof}

\begin{corollary}
  The acyclic maps are also closed under retracts (in the sense
  of~\cite[Def.~4.7.2]{HoTTBook}).
\end{corollary}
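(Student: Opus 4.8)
The plan is to reduce the statement for maps to the already-established statement for types (\cref{acyclic-retract}) via the standard fact that the fibers of a retract of maps are themselves retracts of fibers. Concretely, suppose \(g : A \to B\) is a retract of an acyclic map \(f : X \to Y\) in the sense of~\cite[Def.~4.7.2]{HoTTBook}. By \cref{def:acyclic} it suffices to show that every fiber \(\fib_g(b)\) is an acyclic type.

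First I would unpack the notion of a retract of maps: part of the data is a retraction on the codomains, i.e.\ a section \(s : B \to Y\) and a retraction \(Y \to B\) whose composite is the identity, together with the compatibility with the horizontal retraction on domains. The key consequence I would invoke is the fiberwise one: for every \(b : B\), the fiber \(\fib_g(b)\) is a retract of the fiber \(\fib_f(s(b))\). This is precisely~\cite[Lem.~4.7.3]{HoTTBook}, so it can be cited directly rather than re-derived.

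With that in hand the argument closes immediately. Since \(f\) is acyclic, the type \(\fib_f(s(b))\) is acyclic; and since acyclic types are closed under retracts by \cref{acyclic-retract}, the retract \(\fib_g(b)\) is acyclic as well. As \(b : B\) was arbitrary, all fibers of \(g\) are acyclic, so \(g\) is acyclic.

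The only step carrying any real content is the passage from a retract of maps to the fiberwise retracts, where one must track the coherences in the retract diagram; this is exactly where I expect the main (though routine, and already handled in the literature) work to lie, which is why I would lean on~\cite[Lem.~4.7.3]{HoTTBook} rather than reprove it. Everything surrounding it is a direct application of \cref{acyclic-retract}.
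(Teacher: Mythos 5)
Your proof is correct and follows exactly the paper's argument: both reduce the map statement to \cref{acyclic-retract} by invoking~\cite[Lem.~4.7.3]{HoTTBook}, which says that fibers of a retract of maps are retracts of the corresponding fibers. The paper states this in one line; your version simply spells out the same reduction in more detail.
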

\begin{proof}
  Since fibers of a retract are retracts of fibers~\cite[Lem.~4.7.3]{HoTTBook},
  this follows from \cref{acyclic-retract}.
\end{proof}

\begin{remark}
  The truncation of an acyclic type need not be acyclic.
  For a counterexample, we turn to classical K-theory and we don't spell out the
  details.
  The Volodin space $X(R)$ of a ring $R$ is the
  fiber at the base point of the acyclic map
  $\mathrm{BGL}(R) \to \mathrm{BGL}(R)^+$, and hence acyclic~\cite[Ex.~IV.1.3.2]{Kbook}.
  Its fundamental
  group \(\mathrm{St}(R)\) is the Steinberg group of $R$, and its higher
  homotopy groups contain the higher K-theory of $R$ (by the long exact sequence
  of a fibration). The Steinberg group is not acyclic, since
  $\mathrm{H}_3(\mathrm{St}(R)) \cong \mathrm{K}_3(R)$ \cite[Ex.~IV.1.9]{Kbook}, and,
  e.g., $\mathrm{K}_3(\mathbb{F}_q) \cong \mathbb{Z}/(q^2-1)$ by Quillen's computation
  of the K-theory of a finite field~\cite[Cor.~IV.1.13]{Kbook}.
  Since the $1$\nobreakdash-truncation of $X(R)$
  is the classifying space of the Steinberg group of $R$~\cite[Cor.~IV.1.7.2]{Kbook},
  this shows that the truncation of an acyclic type need not be acyclic.
\end{remark}

\subsection{Balanced maps}
We connect the acyclic maps to the notion of \emph{balanced
  maps} due to Raptis~\cite{Raptis2019}.
To do so, we first recall the construction of joins and smash
products~\cite[p.~33]{Brunerie2016}, relate them in
\cref{join-is-suspension-of-smash} and prove a general lemma about joins with an
acyclic type (\cref{join-with-acyclic-type-is-contractible}).

\begin{definition}[\flinkspec{synthetic-homotopy-theory}{joins-of-types}{definitions} %
  Join \(A \join B\)]\label{def:join}
  The \emph{join} of two types \(A\) and \(B\) is the pushout of the projections
  \(A \leftarrow A \times B \rightarrow B\) and we denote it by \(A \join B\).
\end{definition}

\begin{definition}[\flinkspec{synthetic-homotopy-theory}{smash-products-of-pointed-types}{definition} %
  Smash product \(A \smashpr B\)]\label{def:smash}
  The \emph{smash product} of two pointed types \(A\) and \(B\) is the pushout
  of the span \(\One \leftarrow A \wedgesum B \rightarrow A \times B\), where
  right leg is the induced (dashed) map in the following diagram:
  \[
    \begin{tikzcd}
      \One \pocorner \ar[r,"\pt_B"] \ar[d,"\pt_A"']
      & B \ar[ddr,bend left, "\pt_A \times \id_B"] \ar[d] \\
      A \ar[r] \ar[drr, bend right, "\id_A \times \pt_A"']
      & A \wedgesum B \ar[dr,dashed] \\
      & & A \times B
    \end{tikzcd}
  \]
  We denote the smash product by \(A \smashpr B\).
\end{definition}

\begin{lemma}\label{join-is-suspension-of-smash}
  For pointed types \(A\) and \(B\), their join is equivalent to the suspension
  of their smash product, i.e., \({A \join B} \simeq {\susp(A \smashpr B)}\).
\end{lemma}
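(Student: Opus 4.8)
The plan is to deduce the equivalence from a single application of the \(3\times3\)-lemma for pushouts (the statement that in a commuting \(3\times 3\) array of types, taking pushouts of the three rows and then of the induced span agrees with taking pushouts of the three columns and then of the induced span; this is a standard consequence of descent and the flattening lemma). Concretely, I would set up the grid
\[
  \begin{tikzcd}
    \One & \One \ar[l] \ar[r] & \One \\
    A \ar[u] \ar[d] & A \wedgesum B \ar[l,"r_A"'] \ar[r,"r_B"] \ar[u] \ar[d] & B \ar[u] \ar[d] \\
    A & A \times B \ar[l,"\pi_1"] \ar[r,"\pi_2"'] & B
  \end{tikzcd}
\]
in which every row is a span with apex in the middle column and every column is a span with apex in the middle row. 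The horizontal maps out of the centre are the folding retractions \(r_A,r_B\) (collapsing the other summand to its basepoint), the horizontal maps in the bottom row are the projections \(\pi_1,\pi_2\); the vertical maps out of the middle row are the terminal maps going up, and going down they are the identities in the outer columns and the canonical inclusion \(A\wedgesum B\to A\times B\) in the centre. Commutativity of the four squares is immediate; for instance, in the lower-left square both composites \(A\wedgesum B\to A\times B\xrightarrow{\pi_1}A\) and \(A\wedgesum B\xrightarrow{r_A}A\) equal \(r_A\).

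Computing the column-pushouts first, the two outer columns have a leg that is an identity, so their pushouts are \(\One\), while the middle column is exactly the defining pushout of the smash product. This produces the span \(\One\leftarrow A\smashpr B\to\One\), whose pushout is \(\susp(A\smashpr B)\). Computing the row-pushouts first, the top row gives \(\One\), the bottom row is the defining pushout of the join and gives \(A\join B\), and the middle row gives \(W\colonequiv A +_{A\wedgesum B} B\), the pushout of the two folding maps. This yields the span \(\One\leftarrow W\to A\join B\), and the crucial point is that \(W\) is contractible: a map \(W\to X\) is a pair \(g\colon A\to X\), \(h\colon B\to X\) with \(g\circ r_A = h\circ r_B\), and precomposing this identification with the two summand inclusions \(A,B\hookrightarrow A\wedgesum B\) forces \(g\) and \(h\) to be constant at a common value, so \(\Hom(W,X)\simeq X\) naturally and hence \(W\simeq\One\). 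Since the left leg \(W\to\One\) of the row-span is then an equivalence, its pushout is \(A\join B\). The \(3\times3\)-lemma identifies the two iterated pushouts, giving \(A\join B\simeq\susp(A\smashpr B)\).

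The hard part is organisational rather than conceptual: one must assemble the grid together with all of its coherence data (the four commuting squares as genuine, mutually compatible homotopies) so that the \(3\times3\)-lemma genuinely applies, and one must pin down the contractibility of \(W\) — either through the corepresentability argument above or by producing an explicit contraction — carefully enough that it collapses the outer row-pushout to the join.
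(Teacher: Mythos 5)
Your proof is correct, but it takes a genuinely different route from the paper's. The paper argues by corepresentability in pointed types: for every pointed \(X\) it chains the natural equivalences
\[
  (\susp(A \smashpr B) \ptdto X) \simeq (A \smashpr B \ptdto \Omega X) \simeq (A \ptdto (B \ptdto \Omega X)) \simeq (A \join B \ptdto X),
\]
using the loop--suspension adjunction together with two cited results on pointed mapping types (the smash--hom adjunction from van Doorn's thesis and the join--hom equivalence of Cagne--Buchholtz--Kraus--Bezem), and then concludes by a Yoneda-style argument. You instead compute directly with unpointed pushouts via the \(3\times3\)-lemma: your columns recover the defining pushout \(\One \leftarrow A \smashpr B \rightarrow \One\) of \(\susp(A\smashpr B)\), and your rows recover \(\One \leftarrow W \rightarrow A \join B\) with \(W \colonequiv A +_{A \wedgesum B} B\) along the two collapse maps. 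What your route buys is independence from the two pointed-adjunction results; what it costs is the \(3\times3\)-lemma itself, which is coherence-heavy to prove but is a legitimate, available tool in HoTT (it appears in Brunerie's thesis \cite{Brunerie2016}, where it is used for the closely related associativity of joins, and has been formalized). Note also that the input data of that lemma is just the nine types, twelve maps, and four square-filling homotopies, with no higher compatibilities among the four homotopies required, so the organisational burden you worry about is smaller than you suggest.

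The one soft spot is your treatment of the contractibility of \(W\): ``precomposing forces \(g\) and \(h\) to be constant at a common value'' elides exactly the basepoint coherences (the glue path of the wedge) that make such arguments delicate in HoTT, and upgrading it to a natural equivalence \((W \to X) \simeq X\) takes real bookkeeping. The claim is true, and there is a cleaner argument by pasting of pushout squares. First, pasting the defining pushout square of \(A \wedgesum B\) with the square defining the cofiber of \(\incl_B : B \to A \wedgesum B\) shows that this cofiber is \(A\) with cofiber map \(r_A\); that is, the square with top \(\incl_B\), left leg \(B \to \One\), right leg \(r_A\), and bottom \(\pt_A : \One \to A\) is a pushout. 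Second, pasting that square with the defining square of \(W\) and using \(r_B \circ \incl_B = \id_B\) exhibits \(W\) as the pushout of \(\One \leftarrow B \xrightarrow{\id} B\), which is \(\One\). With that replacement your argument is complete.
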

\begin{proof}
  For an arbitrary pointed type \(X\), we have a sequence of natural
  equivalences
  \begin{align*}
    &\phantom{\hspace{14pt}} (\susp(A \smashpr B) \ptdto X) \\
    &\simeq
      (A \smashpr B \ptdto \Omega X)
    &\text{(loop-suspension adjunction \cite[Lem.~6.5.4]{HoTTBook})} \\
    &\simeq
      (A \ptdto (B \ptdto \Omega X))
    &\text{(by \cite[Thm.~4.3.28]{vanDoorn2018})} \\
    &\simeq
      (A \join B \ptdto X)
    &\text{(by \cite[Lem.~6.1]{CagneBuchholtzKrausBezem2024})}
  \end{align*}
  so \(A \join B\) and \(\susp(A \smashpr B)\) must be equivalent.
\end{proof}

\begin{lemma}\label{join-with-acyclic-type-is-contractible}
  The join of an acyclic type with an inhabited type is contractible.
\end{lemma}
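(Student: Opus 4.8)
The plan is to show that the constant map \(\const \colon X \to (A \join B \to X)\) is an equivalence for every type \(X\), and then to invoke the standard fact that a type \(T\) is contractible precisely when \(\const \colon X \to (T \to X)\) is an equivalence for all \(X\). Since contractibility is a proposition, I may use the inhabitedness of \(B\) freely along the way.

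First I would unfold the universal property of the pushout defining the join (\cref{def:join}) to obtain, naturally in \(X\),
\[
  (A \join B \to X) \simeq \sum_{g : A \to X}\sum_{h : B \to X}\prod_{a : A}\prod_{b : B}\pa*{g(a) = h(b)}.
\]
The data of a map \(h : B \to X\) together with a proof \(\prod_{a}\prod_{b}\pa*{g(a) = h(b)}\) is exactly the data of a function assigning to each \(b : B\) a point \(h(b) : X\) equipped with a witness that \(g\) is constant at \(h(b)\); that is, an element of the fiber \(\fib_{\const}(g)\) of \(\const \colon X \to (A \to X)\) over \(g\). Hence the display simplifies to \(\sum_{g : A \to X}\pa*{B \to \fib_{\const}(g)}\).

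Now the acyclicity of \(A\) enters through \cref{characterization-of-acyclic-types}: it says that \(\const \colon X \to (A \to X)\) is an embedding, so every fiber \(\fib_{\const}(g)\) is a proposition (\cref{extensions-as-fiber}). As \(B\) is inhabited and \(\fib_{\const}(g)\) is a proposition, we have \(\pa*{B \to \fib_{\const}(g)} \simeq \fib_{\const}(g)\), and therefore
\[
  (A \join B \to X) \simeq \sum_{g : A \to X}\fib_{\const}(g) \simeq X,
\]
the last equivalence being the standard identification of the total space of the fibers of a map with its domain.

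It remains to confirm that this composite equivalence is inverse to \(\const \colon X \to (A \join B \to X)\), rather than merely some abstract equivalence: tracing \(\const_x\) backwards through the pushout property yields the triple \((\const_x, \const_x, \refl)\), which corresponds to \((x, \refl) : \fib_{\const}(\const_x)\) and hence to \(x\). This bookkeeping — keeping track of the maps underlying each equivalence so that the end result is genuinely realized by the constant map — is the main point to be careful about, since only then does the chain of equivalences prove contractibility.
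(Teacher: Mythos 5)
Your proof is correct, but it takes a genuinely different route from the paper's. The paper reduces to the case where both types are pointed and then invokes \cref{join-is-suspension-of-smash} to rewrite \(A \join B \simeq \susp(A \smashpr B)\), finishing with a chain of smash-product manipulations (\(\susp Y \simeq \Circle \smashpr Y\), associativity of \(\smashpr\)) that collapses everything to \(\One \smashpr B \simeq \One\) using \(\susp A \simeq \One\); this leans on several imported results on smash products. You instead work directly with the pushout universal property of the join: after the \(\prod\)--\(\sum\) reshuffling, the mapping type \((A \join B \to X)\) becomes \(\sum_{g : A \to X}\pa{B \to \fib_{\const}(g)}\), acyclicity of \(A\) enters only through the already-established \cref{characterization-of-acyclic-types} (the fibers \(\fib_{\const}(g)\) are propositions), inhabitedness of \(B\) collapses \(B \to \fib_{\const}(g)\) to \(\fib_{\const}(g)\), and contracting the total space of fibers gives \(X\); your final bookkeeping step, checking that the resulting equivalence is realized by \(\const\), is genuinely needed (an unnatural family of equivalences \((T \to X)\simeq X\) would not suffice) and you handle it properly. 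What each approach buys: yours is more elementary and self-contained---no smash products, no pointedness of \(A\), no external citations beyond results proved earlier in the paper---and it is close in spirit to the paper's own extension-type arguments for \cref{acyclic-pullback-and-pushout-stable} and \cref{acyclic-iff-balanced}, making the application in the latter nearly immediate. The paper's smash-product route, by contrast, sets up machinery that is reused later: \cref{suspension-of-join-is-smash-of-suspensions} and the quantitative connectivity estimate of \cref{join-acyclic} for \(k\)-acyclic types both run through the same smash calculus, where your embedding-based argument has no obvious analogue since it cannot track connectivity degrees.
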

\begin{proof}
  Suppose that \(A\) is acyclic and \(B\) is inhabited. In particular, \(A\) is
  also inhabited. Since we are proving a proposition, we may assume that both
  types are actually pointed so that \(A \join B \simeq \susp(A \smashpr B)\)
  by \cref{join-is-suspension-of-smash}.
  Now we just calculate:
  \begin{align*}
    \susp(A \smashpr B) &\simeq \Circle \smashpr (A \smashpr B)
    &\text{(by \cite[Prop.~4.2.1]{Brunerie2016})} \\
    &\simeq (\Circle \smashpr A) \smashpr B
    &\text{(by \cite[Prop.~9]{Ljungstrom2024})} \\
    &\simeq \susp A \smashpr B
    &\text{(by \cite[Prop.~4.2.1]{Brunerie2016})} \\
    &\simeq \One \smashpr B
    &\text{(by acyclicity of \(A\))} \\
    &\simeq \One,
    &\text{(straightforward)}
  \end{align*}
  so that \(A \join B\) is contractible, as desired.
\end{proof}

\begin{definition}[Balanced map]\label{def:balanced}
  A map $f : A \to B$ is \emph{balanced} if for every surjection $g : X \to B$,
  the pullback square
  \[
    \begin{tikzcd}
      F \ar[d]\ar[r]\pbcorner & A\ar[d,"f"] \\
      X \ar[r,"g"'] & B
    \end{tikzcd}
  \]
  is also a pushout square.
\end{definition}

\begin{lemma}[Pushouts are fiberwise]\label{pushouts-are-fiberwise}
  A square
  \[
    \begin{tikzcd}
      F \ar[d,"f'"']\ar[r,"g'"] & A\ar[d,"f"] \\
      X \ar[r,"g"'] & B
    \end{tikzcd}
  \]
  is a pushout square if and only if the induced squares of
  fibers
  \[
    \begin{tikzcd}
      \fib_{g \circ f'}(b) \ar[d]\ar[r] & \fib_f(b) \ar[d] \\
      \fib_g(b) \ar[r] & \One
    \end{tikzcd}
  \]
  are pushouts for every \(b : B\).
\end{lemma}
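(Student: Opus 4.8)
The plan is to reduce both conditions to a single statement about the fibers of the \emph{cogap map}. Write $P \colonequiv X +_F A$ for the pushout of the span $X \xleftarrow{f'} F \xrightarrow{g'} A$, and let $c \colon P \to B$ be the map induced by the cocone $(g,f)$ together with the filler of the given square, so that $c \circ \inl = g$ and $c \circ \inr = f$. By the definition of a pushout square, the original square is a pushout if and only if $c$ is an equivalence; and since a map is an equivalence exactly when all of its fibers are contractible, this holds if and only if $\fib_c(b)$ is contractible for every $b : B$.

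The key step is to compute $\fib_c(b)$ by means of the flattening lemma, exactly as in the proof of \cref{codiagonal-is-fiberwise-suspension}. Applying it to the pushout $P$ with the family $p \mapsto (c(p) = b)$---equivalently, pulling the defining pushout square for $P$ back along the point inclusion $b \colon \One \to B$---identifies $\fib_c(b)$ with the pushout of the span $\fib_g(b) \leftarrow \fib_{g \circ f'}(b) \to \fib_f(b)$, naturally in $b$. In other words, the square
\[
\begin{tikzcd}
\fib_{g \circ f'}(b) \ar[d]\ar[r] & \fib_f(b) \ar[d] \\
\fib_g(b) \ar[r] & \fib_c(b)
\end{tikzcd}
\]
is always a pushout.

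It then remains to compare this with the asserted fiber squares, whose lower-right corner is $\One$. A cocone square with corner $\One$ on the span $\fib_g(b) \leftarrow \fib_{g \circ f'}(b) \to \fib_f(b)$ is a pushout if and only if the induced map from its genuine pushout---namely $\fib_c(b)$, by the previous paragraph---to $\One$ is an equivalence, i.e.\ if and only if $\fib_c(b)$ is contractible. Combining the three observations: for each $b : B$ the fiber square is a pushout precisely when $\fib_c(b)$ is contractible, and the original square is a pushout precisely when this holds for all $b$, which yields the desired equivalence.

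The hard part will be the flattening-lemma computation of $\fib_c(b)$; everything else is formal, relying only on the characterization of equivalences as the maps with contractible fibers and on the fact that a square over a span is a pushout exactly when the comparison map out of the genuine pushout is an equivalence.
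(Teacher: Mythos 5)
Your proof is correct and takes essentially the same route as the paper's: both arguments rest on the flattening lemma applied along a point inclusion $b : \One \to B$, combined with the characterization of pushouts via contractibility of the fibers of the cogap map. The only difference is organizational—you apply flattening once, to the defining pushout of $P$, and extract both directions from the single identification $\fib_c(b) \simeq \fib_g(b) +_{\fib_{g \circ f'}(b)} \fib_f(b)$, whereas the paper invokes it separately in each direction (to the given square in the forward direction, to the cogap pushout in the converse); this is a slight streamlining, not a different idea.
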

\begin{proof}
  If the first square is a pushout square, then by the flattening
  lemma~\cite[\href{https://unimath.github.io/agda-unimath/synthetic-homotopy-theory.flattening-lemma-pushouts.html}%
  {The flattening lemma for pushouts}]{agda-unimath}
  (cf.~\cite[Lem.~6.12.2]{HoTTBook}), we can pull it back along any point
  \(b : {\One \to B}\) to get that the induced square of fibers is a pushout.
  Conversely, suppose that each fiber square is a pushout and consider the cogap
  map~\(h\) in the diagram
  \[
    \begin{tikzcd}
      F \ar[d,"f'"']\ar[r,"g'"]\pocorner & A\ar[ddr,bend left,"f"] \ar[d] \\
      X \ar[drr,bend right,"g"'] \ar[r] & P \ar[dr,"h",dashed] \\
      & & B
    \end{tikzcd}
  \]
  We once again use the flattening lemma to obtain pushout squares
  \[
    \begin{tikzcd}
      \fib_{g \circ f'}(b) \ar[d]\ar[r]\pocorner & \fib_f(b) \ar[d] \\
      \fib_g(b) \ar[r] & \fib_h(b)
    \end{tikzcd}
  \]
  for every \(b : B\). By assumption, \(\fib_h(b) \simeq \One\), so \(h\) is an
  equivalence, as desired.
\end{proof}

The following result for the \(\infty\)-category of spaces is due to
Raptis~\cite[Thm.~2.1]{Raptis2019} where it is established using different
techniques.

\begin{theorem}\label{acyclic-iff-balanced}
  A map is acyclic if and only if it is balanced.
\end{theorem}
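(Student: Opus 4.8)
The plan is to prove both implications by reducing everything to the fiberwise pushout criterion of \cref{pushouts-are-fiberwise} and recognizing the resulting fiber spans as join spans. The key preliminary computation is this: given any map \(g : X \to B\), form the pullback
\[
  \begin{tikzcd}
    F \ar[d,"f'"']\ar[r,"g'"]\pbcorner & A\ar[d,"f"] \\
    X \ar[r,"g"'] & B
  \end{tikzcd}
\]
Since \(g \circ f' = f \circ g'\) and \(F \simeq X \times_B A\), the fiber of the composite satisfies \(\fib_{g \circ f'}(b) \simeq \fib_g(b) \times \fib_f(b)\), with the two induced maps into \(\fib_f(b)\) and \(\fib_g(b)\) being the projections. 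Hence the fiber square appearing in \cref{pushouts-are-fiberwise} is exactly the span \(\fib_f(b) \leftarrow \fib_g(b) \times \fib_f(b) \rightarrow \fib_g(b)\) together with the cone to \(\One\); its pushout is by definition the join \(\fib_g(b) \join \fib_f(b)\). Therefore the pullback of \(f\) along \(g\) is a pushout if and only if \(\fib_g(b) \join \fib_f(b)\) is contractible for every \(b : B\).

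For the forward direction, suppose \(f\) is acyclic and let \(g : X \to B\) be a surjection. Then each \(\fib_f(b)\) is acyclic and each \(\fib_g(b)\) is inhabited, so \(\fib_g(b) \join \fib_f(b)\) is contractible by \cref{join-with-acyclic-type-is-contractible}. By the criterion above, the pullback square is a pushout, and since \(g\) was an arbitrary surjection, \(f\) is balanced.

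For the converse, suppose \(f\) is balanced. The idea is to feed it a cleverly chosen surjection that turns the join into a suspension. I would use the fold map \(\nabla \colon B + B \to B\), which is a surjection since each of its fibers is equivalent to \(\Sphere{0} \simeq \Two\). Balancedness makes the pullback of \(f\) along \(\nabla\) a pushout, so by the computation above \(\Sphere{0} \join \fib_f(b)\) is contractible for every \(b\). Using the equivalence \(\Sphere{0} \join T \simeq \susp T\) (the join of \(T\) with the \(0\)-sphere is its suspension), this says exactly that \(\susp \fib_f(b) \simeq \One\), i.e.\ each \(\fib_f(b)\) is acyclic, so \(f\) is acyclic.

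The two substantive points, rather than genuine obstacles, are the identification of the fiber span of a pullback as a join span (so that \cref{pushouts-are-fiberwise} converts the balancedness condition into a statement purely about joins of fibers) and, for the converse, the choice of the fold map together with the identification \(\Sphere{0} \join T \simeq \susp T\), which is what converts "the join with an \(\Sphere{0}\)-fiber is contractible" into the acyclicity condition "the suspension is contractible." Once these are in place, both directions are immediate applications of \cref{join-with-acyclic-type-is-contractible} and \cref{pushouts-are-fiberwise}.
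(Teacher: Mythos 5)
Your proposal is correct and follows essentially the same route as the paper: the forward direction is the paper's reduction via \cref{pushouts-are-fiberwise} to joins of fibers plus \cref{join-with-acyclic-type-is-contractible}, and your converse using the fold map \(B + B \to B\) is the paper's argument with \(\Two \times B \to B\) (the same surjection up to the equivalence \(\Two \times B \simeq B + B\)), ending in the same identification \(\One \simeq \Two \join \fib_f(b) \simeq \susp \fib_f(b)\).
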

\begin{proof}
  In the forward direction, we note that \cref{pushouts-are-fiberwise} implies
  that it suffices to prove the proposition for acyclic maps into the unit type,
  i.e.\ when \(B \equiv \One\) in \cref{def:balanced}.
  In this case, \(F \simeq A \times X\) and \(P \simeq A \join X\)
  (recall~\cref{def:join}) and we have to show that the latter is contractible.
  But this follows from \cref{join-with-acyclic-type-is-contractible} because
  \(A\) is acyclic and \(X\) is inhabited (as \(g : X \to \One\) is assumed to
  be surjective).

  Conversely, take $X = \Two\times B$ and let $g : \Two \times B \to B$ be the
  projection.  Let $b:B$ be arbitrary and let $F$ be the fiber of $f$ at $b$.
  By pullback-stability of pushout squares, the pushout square on the left below
  pulls back to the pushout square on the right:
  \[
    \begin{tikzcd}
      \Two \times A\ar[r]\ar[d,"\Two \times f"']\pbcorner\pocorner & A\ar[d,"f"] \\
      \Two \times B\ar[r] & B
    \end{tikzcd}
    \qquad
    \begin{tikzcd}
      \Two \times F\ar[r]\ar[d]\pbcorner\pocorner & F\ar[d] \\
      \Two \ar[r] & \One
    \end{tikzcd}
  \]
  The square on the right being a pushout means
  $\One \simeq \Two \join F \simeq \Sigma F$, so $F$ is acyclic.
\end{proof}

The following appears for CW-spaces as
\cite[Thm~2.5]{HausmannHusemoller1979} and \cite[Cor.~2.10(b)]{Alonso1983}, and
is discussed in the context of the \(\infty\)-category of spaces on
\cite[p.~774]{Raptis2019}.

\begin{corollary}\label{fiber-is-cofiber}
  If $f : A \to B$ is an acyclic map of connected types, then the fiber sequence
  for any $b:B$ is also a cofiber sequence.
  That is, the pullback square
  \[
    \begin{tikzcd}[row sep=5mm,column sep=5mm]
      \fib_f(b) \ar[r] \ar[d] & A \ar[d,"f"] \\
      \One \ar[r,"b"] & B
    \end{tikzcd}
  \]
  is also a pushout square.
\end{corollary}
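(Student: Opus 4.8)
The plan is to obtain the statement as a direct special case of the balanced characterization of acyclic maps from \cref{acyclic-iff-balanced}. Since $f$ is acyclic, it is balanced in the sense of \cref{def:balanced}: for every surjection $g : X \to B$, the pullback of $f$ along $g$ is also a pushout square. The whole argument then reduces to identifying the fiber square over $b$ with the square produced by a suitable surjection.

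The key observation is that the point inclusion $b : \One \to B$ is \emph{itself} a surjection precisely because $B$ is connected. Indeed, the fiber of $b$ over an arbitrary $b' : B$ is equivalent to the identity type $b = b'$, and connectedness of $B$ (that is, $\squash{B}_0 \simeq \One$) says exactly that each such type is merely inhabited, which is the definition of $b$ being surjective. Feeding this surjection into the balanced property, the pullback of $f$ along $b$ is by definition $\fib_f(b)$, so the square that balancedness guarantees to be a pushout is verbatim the fiber square in the statement. This already yields the desired conclusion.

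I expect no genuine obstacle here: the only thing to notice is that the connectedness hypothesis on $B$ is precisely the assertion that the chosen basepoint $b$ is a surjection, after which balancedness does all the work. In particular, the connectedness of $A$ plays no role in this direction of the argument and is included only to match the classical formulation of the fiber--cofiber correspondence for connected spaces.
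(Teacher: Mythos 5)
Your proposal is correct and is exactly the argument the paper intends: the corollary is stated immediately after \cref{acyclic-iff-balanced} precisely because the point inclusion $b : \One \to B$ is a surjection when $B$ is connected, so balancedness applied to this surjection gives the fiber square as a pushout. Your side remark is also accurate --- the connectedness of $A$ is not needed for this implication and is there only to match the classical formulation.
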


The dual Blakers--Massey theorem holds
for all left classes of modalities~\cite[Thm.~3.27]{ABFJ2020},
but the cited proof only uses stability under base change,
so it also holds for acyclic maps, as also observed by Raptis~\cite[Sec.~3.3]{Raptis2019},
so we record that as well.

\begin{proposition}[dual Blakers--Massey for acyclic maps]
  \label{dual-blakers-massey-acyclic}
  If in a pullback square
  \[
    \begin{tikzcd}
      Q \ar[d]\ar[r]\pbcorner & B\ar[d,"g"] \\
      A \ar[r,"f"'] & C
    \end{tikzcd}
  \]
  the pushout product $f \mathbin\square g$ is acyclic,
  then so is the cogap map $A +_Q B \to C$.
\end{proposition}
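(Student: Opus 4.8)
The plan is to transfer the proof of the dual Blakers--Massey theorem for the left class of a modality, recorded as \cite[Thm.~3.27]{ABFJ2020}, to the class of acyclic maps. The essential point is that, although that theorem is phrased for the left class \(\mathcal{L}\) of a modality, its proof uses only a single property of \(\mathcal{L}\): closure under base change. Since the acyclic maps are stable under pullback by \cref{acyclic-pullback-and-pushout-stable}, the identical argument proves the present proposition, notwithstanding that we do not (yet) know the acyclic maps to assemble into the left class of a modality.

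In detail, I would keep the pullback square of the hypothesis, with \(Q \simeq A \times_C B\), and form the pushout \(A +_Q B\) with its cogap map to \(C\); the hypothesis is that the pushout product \(f \popr g\) of the two legs is acyclic. I would then run the argument of \cite{ABFJ2020} unchanged. That argument proceeds by a descent analysis over \(C\), in which one examines the square fiberwise --- the kind of reduction made available in our setting by \cref{pushouts-are-fiberwise} --- and base change is precisely what permits passage between the total square and its fibers while preserving membership in the distinguished class. Each step either produces new pullback squares from the given data or invokes closure of the class under base change; the right class, the orthogonal factorization, and closure under composition of \(\mathcal{L}\) play no role. Substituting ``acyclic map'' for ``map in \(\mathcal{L}\)'' throughout, and citing \cref{acyclic-pullback-and-pushout-stable} for the one closure property consumed, yields the result. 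The same transfer was noted by Raptis \cite[Sec.~3.3]{Raptis2019}.

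The main obstacle is bookkeeping rather than new mathematics: one must audit the proof of \cite[Thm.~3.27]{ABFJ2020} line by line to confirm that no step covertly depends on a modality property stronger than base-change stability, in particular that the manipulations of pushout products and the descent steps remain valid for a class that is merely pullback-stable and need not form half of a factorization system. Once this verification is complete, nothing further is needed, since the sole hypothesis the argument consumes --- pullback-stability --- is furnished by \cref{acyclic-pullback-and-pushout-stable}.
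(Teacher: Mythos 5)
Your proposal matches the paper's own argument exactly: the paper proves \cref{dual-blakers-massey-acyclic} precisely by observing that the proof of \cite[Thm.~3.27]{ABFJ2020} uses only stability of the left class under base change, which for acyclic maps is supplied by \cref{acyclic-pullback-and-pushout-stable}, citing the same observation of Raptis~\cite[Sec.~3.3]{Raptis2019}. Your added caveat about auditing the cited proof is exactly the verification the paper asserts to have carried out, so nothing is missing.
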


\section{Acyclic maps and epimorphisms of \texorpdfstring{\(k\)}{k}-types}\label{sec:k-acyclic}
It turns out that we can get nice characterizations and examples if we consider
a notion of epimorphism with respect to \(k\)-types only.

\begin{definition}[\flinkspec{foundation}{dependent-epimorphisms-with-respect-to-truncated-types}{definitions} %
  (Dependent) \(k\)-epimorphism]\label{def:k-epic}
  We define a map \({f : A \to B}\) to be a \emph{\(k\)\nobreakdash-epimorphism} if
  for all \(k\)-types \(X\), the precomposition map
  \(
    (B \to X) \xrightarrow{f^\ast} (A \to X)
  \)
  is an embedding.
  We say that \(f\) is a \emph{dependent \(k\)-epimorphism} if for every family
  \(P\) of \(k\)-types over~\(B\), the precomposition map
  \[
    \prod_{b : B}P(b) \xrightarrow{f^\ast} \prod_{a : A}P(f(a))
  \]
  is an embedding.
\end{definition}

Note that we did not require \(f\) to be a map between \(k\)-types in the above
definition. However, as the following result shows, being \(k\)-epic is stable
under \(k\)-truncation:

\begin{lemma}[\flinkspec{foundation}{epimorphisms-with-respect-to-truncated-types}{a-map-is-a-k-epimorphism-if-and-only-if-its-k-truncation-is-a-k-epimorphism}]%
  \label{k-epi-iff-epi-of-k-types}
  A map \(f : A \to B\) is \(k\)-epic if and only if its
  \(k\)-truncation \(\squash{f}_k : \squash{A}_k \to \squash{B}_k\) is.
\end{lemma}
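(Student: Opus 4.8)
The plan is to reduce both the hypothesis and the conclusion to a single condition by means of the universal property of the \(k\)-truncation. Recall that for any \(k\)-type \(X\), precomposition with the truncation unit \(\eta_A \colon A \to \squash{A}_k\) is an equivalence \(\eta_A^* \colon (\squash{A}_k \to X) \xrightarrow{\simeq} (A \to X)\), and likewise for \(B\).

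First I would fix a \(k\)-type \(X\) and organize the naturality of \(\eta\) into a commuting square of precomposition maps. Since truncation is functorial with \(\eta_B \circ f = \squash{f}_k \circ \eta_A\), precomposing an arbitrary map \(g \colon \squash{B}_k \to X\) shows that the square
\[
  \begin{tikzcd}
    (\squash{B}_k \to X) \ar[r, "(\squash{f}_k)^*"] \ar[d, "\eta_B^*"'] & (\squash{A}_k \to X) \ar[d, "\eta_A^*"] \\
    (B \to X) \ar[r, "f^*"'] & (A \to X)
  \end{tikzcd}
\]
commutes, and both vertical maps are equivalences.

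Next I would read off the equivalence of the two conditions from the closure properties of embeddings: equivalences are embeddings, embeddings are closed under composition, and if a composite \(g \circ h\) is an embedding then \(h\) is an embedding. Traversing the square in each direction then gives that \(f^*\) is an embedding if and only if \((\squash{f}_k)^*\) is. As this biconditional holds for every \(k\)-type \(X\), unfolding \cref{def:k-epic} yields that \(f\) is \(k\)-epic precisely when \(\squash{f}_k\) is.

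The argument is purely formal, so there is no genuine obstacle; the only step needing care is verifying that the displayed square commutes, which is immediate from the naturality of \(\eta\). The remainder is bookkeeping with the universal property of the truncation and the cancellation properties of embeddings.
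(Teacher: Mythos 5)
Your proof is essentially the paper's own: the same naturality square for the truncation unit, with the vertical precomposition maps being equivalences by the universal property of \(k\)-truncation, and the same conclusion that the top map is an embedding if and only if the bottom one is. One caution: your stated cancellation principle --- that \(g \circ h\) being an embedding implies \(h\) is an embedding --- is false in general for higher types (e.g.\ the composite \(\One \to \Circle \to \One\) is the identity, yet the basepoint inclusion \(\One \to \Circle\) is not an embedding); what is true, and all your argument needs, is cancellation when the outer map is itself an embedding, which holds here since \(\eta_A^*\) is an equivalence --- or, equivalently, just compose with the inverse equivalences.
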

\begin{proof}
  For every \(k\)-type \(X\), we have a commutative square
  \[
    \begin{tikzcd}
      (\squash*{B}_k\to X) \arrow[r,"\squash{f}_k^\ast"]
      \arrow[d,swap,"\tosquash{-}_k^\ast"]
      & (\squash*{A}_k\to X) \arrow[d,"\tosquash{-}_k^\ast"] \\
      (B\to X) \arrow[r,"f^\ast"] & (A\to X)
    \end{tikzcd}
  \]
  Since the vertical maps are equivalences by \cite[Lem.~7.5.7 and
  Cor.~7.5.8]{HoTTBook}, it follows that the top map is an embedding if and only
  if the bottom map is an embedding.
\end{proof}

We work towards characterizing the \(k\)-epimorphisms.

\begin{definition}[\flinkspec{synthetic-homotopy-theory}{truncated-acyclic-types}{definition} %
  \(k\)-acyclicity]
  A type is \emph{\(k\)-acyclic} if its suspension is \(k\)-connected (i.e., the
  \(k\)-truncation of its suspension is contractible).
  A map is \(k\)-acyclic if all of its fibers~are.
\end{definition}

Note that every \((k+1)\)-acyclic type is \(k\)-acyclic. Since suspensions are
inhabited, every type is \((-1)\)-acyclic.

As with acyclic types, it is natural to consider variations on the notion of
\(k\)\nobreakdash-acyclicity where one instead requires the \(n\)-fold
suspension (for \(n \geq 2\)) to become
\(k\)-connected. In \cref{k-acyclic-stabilization} we show that these notions
suitably reduce to the above notion.

We recall the notion of a \(k\)-equivalence from~\cite{CORS2020} (where the
notion was introduced for an arbitrary modality):

\begin{definition}[\flinkspec{foundation}{truncation-equivalences}{definition} \(k\)-equivalence]
  A map is a \emph{\(k\)-equivalence} if its \(k\)-truncation is an equivalence.
\end{definition}

In general, not every \(k\)-equivalence is \(k\)-connected. For example, the
``degree 2'' map \({d_2 : \Circle \to \Circle}\) that sends \(\loopc\) to
\(\loopc \bullet \loopc\) is a 0-equivalence, but is not 0-connected.
Alternatively, any map \(\One \to \Two\) is a \((-1)\)-equivalence but no such
map is \((-1)\)-connected (=~surjective).
However, for retractions the notions \emph{do} coincide and this observation
proves very useful in the characterization of \(k\)-epis.

\begin{lemma}[\flinkspec{foundation}{truncation-equivalences}{a-k-equivalence-with-a-section-is-k-connected}]%
  \label{retractions-equivalence-connected}
  A retraction is a \(k\)-equivalence if and only if it is \(k\)-connected.
\end{lemma}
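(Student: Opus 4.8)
\emph{Plan.} The forward direction is general and requires no hypothesis on \(f\): every \(k\)-connected map is a \(k\)-equivalence, since \(k\)-connected maps induce equivalences on \(k\)-truncations (this holds for arbitrary maps, e.g.\ by \cite[Cor.~7.5.8]{HoTTBook}). So the entire content is the converse: a retraction that is a \(k\)-equivalence is \(k\)-connected. Accordingly, fix a section \(s : B \to A\) of \(r\) together with a homotopy \(H : r \circ s \sim \id_B\).

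My plan for the converse is to use the characterization of \(k\)-connected maps by the dependent universal property (\cite[Lem.~7.5.7]{HoTTBook}): \(r\) is \(k\)-connected if and only if, for every family \(P\) of \(k\)-types over \(B\), the precomposition map \(r^\ast : \prod_{b}P(b) \to \prod_a P(r a)\) is an equivalence. The first step, and the clean part, is to observe that the section already produces a canonical \emph{retraction} of \(r^\ast\) for free: define \(e : \prod_a P(ra) \to \prod_b P(b)\) by transporting, \(e(t)(b) \colonequiv (H_b)_\ast\, t(s b)\). Then \(e \circ r^\ast = \id\) holds for every family \(P\), the required identification \((H_b)_\ast\, u(r s b) = u(b)\) being exactly \(\mathsf{apd}_u(H_b)\). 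Hence \(r^\ast\) is a split monomorphism for every \(k\)-type family, and it remains only to upgrade this to an equivalence. For \emph{constant} families this upgrade is precisely the assumption that \(r\) is a \(k\)-equivalence, rephrased through the universal property of truncation: for every \(k\)-type \(X\) the map \(r^\ast : (B \to X) \to (A \to X)\) is an equivalence, equivalently \((s\circ r)^\ast = \id\) on \((A \to X)\).

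The main obstacle is promoting the constant case to arbitrary \(k\)-type families, i.e.\ showing that the idempotent \(r^\ast \circ e\), which sends \(t\) to \(a \mapsto (H_{ra})_\ast\, t(s r a)\), is the identity. The naive attempt identifies \(t(sra)\) with \(t(a)\) using \(s \circ r \sim \id_A\), but such a homotopy is only available after \(k\)-truncation—we do have \(\squash{s \circ r}_k = \id\), since \(\squash{r}_k\) is an equivalence with inverse \(\squash{s}_k\)—and a priori this is one truncation level short of controlling the \(k\)-type-valued family \(a \mapsto P(ra)\). My plan is to close this gap by working fiberwise: equivalently, one shows that for each \(b : B\) the canonical comparison map \(\squash{\fib_r(b)}_k \to \fib_{\squash{r}_k}(\eta_B(b))\) is an equivalence, whose codomain is contractible because \(\squash{r}_k\) is an equivalence, so that each \(\fib_r(b)\) is \(k\)-connected. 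The section is exactly what makes this comparison an equivalence, the transports along \(H\) absorbing the apparent off-by-one in truncation level. I expect this fiberwise comparison to be the crux, and I would establish it by a careful analysis with the flattening lemma together with the retraction data, after which the \(k\)-connectedness of \(r\) follows at once.
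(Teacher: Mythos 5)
Your forward direction, the transport-defined retraction \(e\) of \(r^\ast\) (so that \(r^\ast\) splits for every family of \(k\)-types, via \(\mathsf{apd}\) along \(H\)), and the constant-family case are all correct. But there is a genuine gap exactly where you locate the crux, and the route you propose cannot close it. The claim that ``the section is exactly what makes this comparison an equivalence''---i.e., that for a retraction \(r\) the canonical map \(\squash{\fib_r(b)}_k \to \fib_{\squash{r}_k}(\trunc{b}_k)\) is an equivalence---is false. Counterexample: let \(r : \Circle + \One \to \Circle\) be \([\id, \const_{\basept}]\), with section \(\inl\). Then \(\fib_r(\basept) \simeq \One + \Z\), so \(\squash{\fib_r(\basept)}_0 \simeq \One + \Z\), whereas \(\squash{r}_0 : \Two \to \One\) has fiber \(\Two\); the comparison map \(\One + \Z \to \Two\) collapses \(\Z\) to a point and is not an equivalence. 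So the section alone does not make the comparison an equivalence; the \(k\)-equivalence hypothesis must enter in an essential way. But once you do assume \(r\) is a \(k\)-equivalence, the codomain of the comparison map is contractible, and then ``the comparison map is an equivalence'' is \emph{literally equivalent} to ``\(\fib_r(b)\) is \(k\)-connected,'' which is the statement to be proved. Your plan therefore reduces the goal to a restatement of itself, and the promised ``careful analysis with the flattening lemma together with the retraction data'' is precisely the missing proof rather than a sketch of one.

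For comparison, the paper closes this gap with homotopy groups: by \cite[Cor.~8.8.5]{HoTTBook}, a map is \(k\)-connected (for \(k \geq 0\)) if and only if its \(k\)-truncation is an equivalence \emph{and} it is surjective on \(\pi_{k+1}\) at every basepoint. The first condition is your \(k\)-equivalence hypothesis; the second is where the section genuinely does its work, since \(\pi_{k+1}(r) \circ \pi_{k+1}(s) = \id\) forces \(\pi_{k+1}(r)\) to be surjective. That surjectivity on \(\pi_{k+1}\) is exactly the ``off-by-one'' ingredient your splitting of \(r^\ast\) on \(k\)-type families cannot supply, and it is what any completed proof must produce in some form (the paper's footnote points to an alternative via \cite[Prop.~2.31]{CORS2020} for general localizations). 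The low cases \(k = -1\) (retractions are surjective) and \(k = -2\) (trivial) are then handled separately.
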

\begin{proof}
  For \(k \geq 0\), this follows from~\cite[Cor.~8.8.5]{HoTTBook} together with
  the fact that if \(r \circ s = \id\), then
  \(\pi_n(r) \circ \pi_n(s) = \id\), so \(\pi_n(r)\) must be surjective.

  For \(k = -1\), it follows from the fact that retractions are
  \((-1)\)-connected (=~surjective), and for \(k = -2\), the claim is trivial.%
  \footnote{More generally, using~\cite[Prop.~2.31]{CORS2020} one can prove that
    a section is an \(L'\)-equivalence if and only if it is \(L'\)-connected for
    any localization \(L\).}
\end{proof}

\begin{theorem}[\flinkspec{synthetic-homotopy-theory}{truncated-acyclic-maps}{a-map-is-k-acyclic-if-and-only-if-it-is-an-dependent-k-epimorphism} %
  Characterization of \(k\)-epimorphisms]\label{characterization-of-k-epis}
  For a map \(f : A \to B\), the following are equivalent:
  \begin{enumerate}[label=(\roman*)]
  \item\label{item-k-epi} \(f\) is a \(k\)-epi,
  \item\label{item-codiagonal-k-equiv} its codiagonal\/ \(\nabla_f\) is a
    \(k\)-equivalence,
  \item\label{item-codiagonal-k-connected} its codiagonal\/ \(\nabla_f\) is
    \(k\)-connected,
  \item\label{item-k-acyclic} \(f\) is \(k\)-acyclic,
  \item\label{item-dep-k-epi} \(f\) is a dependent \(k\)-epi.
  \end{enumerate}
\end{theorem}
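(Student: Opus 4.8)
The plan is to prove the cyclic chain of implications
\ref{item-k-epi} \(\Rightarrow\) \ref{item-codiagonal-k-equiv} \(\Rightarrow\) \ref{item-codiagonal-k-connected} \(\Rightarrow\) \ref{item-k-acyclic} \(\Rightarrow\) \ref{item-dep-k-epi} \(\Rightarrow\) \ref{item-k-epi}, closely mirroring the structure of the non-relativized \cref{acyclic-characterization}, but inserting the truncation machinery developed just above. The guiding idea throughout is that everything should reduce fiberwise to a statement about the suspension of \(\fib_f(b)\) via \cref{codiagonal-is-fiberwise-suspension}, and that the passage between ``\(k\)-equivalence'' and ``\(k\)-connected'' for the codiagonal is exactly where \cref{retractions-equivalence-connected} does its work.

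First I would establish \ref{item-k-epi} \(\Leftrightarrow\) \ref{item-codiagonal-k-equiv}. By \cref{k-epi-iff-epi-of-k-types} a map is \(k\)-epic if and only if its \(k\)-truncation is, so I would like to run the argument of \cref{epic-iff-pushout-square} at the level of \(k\)-types: \(f\) is \(k\)-epic precisely when the square with corners \(A, B, B, B\) becomes a pushout \emph{after} \(k\)-truncation, equivalently when the canonical map \((B \to X) \to \sum_{g,h}(g\circ f = h\circ f)\) is an equivalence for all \(k\)-types \(X\). Tracing through exactly as in \cref{epic-iff-pushout-square}, this reduces to the codiagonal \(\nabla_f\) being a \(k\)-equivalence, since being a \(k\)-equivalence is precisely the condition that precomposition \((\nabla_f)^\ast\) is an equivalence on all \(k\)-types. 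The step \ref{item-codiagonal-k-equiv} \(\Leftrightarrow\) \ref{item-codiagonal-k-connected} is then immediate from \cref{retractions-equivalence-connected}: the codiagonal \(\nabla_f\) is a retraction (it is split by either \(\inl\) or \(\inr\), both of which are sections since \(\nabla_f \circ \inl = \nabla_f \circ \inr = \id_B\)), and for retractions the two notions coincide.

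Next I would handle \ref{item-codiagonal-k-connected} \(\Leftrightarrow\) \ref{item-k-acyclic}. Using \cref{codiagonal-is-fiberwise-suspension}, the fiber \(\fib_{\nabla_f}(b)\) is equivalent to \(\susp\fib_f(b)\). A map is \(k\)-connected exactly when all its fibers are \(k\)-connected, so \(\nabla_f\) is \(k\)-connected precisely when \(\susp\fib_f(b)\) is \(k\)-connected for every \(b\), which is by definition the statement that each \(\fib_f(b)\) is \(k\)-acyclic, i.e.\ that \(f\) is \(k\)-acyclic. Finally, \ref{item-dep-k-epi} \(\Rightarrow\) \ref{item-k-epi} is trivial (take constant families), and for \ref{item-k-acyclic} \(\Rightarrow\) \ref{item-dep-k-epi} I would adapt the last paragraph of the proof of \cref{acyclic-characterization} verbatim: the equivalence \(A \simeq \sum_{b:B}\fib_f(b)\) reduces the claim to showing that the constants map \(P(b) \to (\fib_f(b) \to P(b))\) is an embedding for each \(b\), now for \(P(b)\) a \(k\)-type, and this holds because \(\fib_f(b)\) is \(k\)-acyclic (applying the relativized form of \cref{characterization-of-acyclic-types}\ref{item-const-emb}, where \(k\)-acyclicity of a type is equivalent to the constants map into every \(k\)-type being an embedding).

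The main obstacle I anticipate is the careful bookkeeping at the very first equivalence: one must check that introducing \(k\)-truncations does not spoil the pushout/fiber manipulations of \cref{epic-iff-pushout-square} and \cref{codiagonal-is-fiberwise-suspension}, and in particular that ``\(k\)-equivalence of \(\nabla_f\)'' is genuinely the right relativization of ``\(\nabla_f\) an equivalence.'' The cleanest route is to phrase \(k\)-epicness directly as \((\nabla_f)^\ast\) being an equivalence on \(k\)-types and observe this is the definition of a \(k\)-equivalence; then no truncation of the pushout itself is needed. The remaining steps are routine once the fiberwise reduction and \cref{retractions-equivalence-connected} are in place.
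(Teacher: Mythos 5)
Your proposal is correct and follows essentially the same route as the paper: \ref{item-k-epi}\(\iff\)\ref{item-codiagonal-k-equiv} by rerunning \cref{epic-iff-pushout-square} relative to \(k\)-types, \ref{item-codiagonal-k-equiv}\(\iff\)\ref{item-codiagonal-k-connected} via \cref{retractions-equivalence-connected} applied to the codiagonal as a retraction, \ref{item-codiagonal-k-connected}\(\iff\)\ref{item-k-acyclic} via \cref{codiagonal-is-fiberwise-suspension}, and then \ref{item-k-acyclic}\(\Rightarrow\)\ref{item-dep-k-epi}\(\Rightarrow\)\ref{item-k-epi} as in \cref{acyclic-characterization}. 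The only point to make explicit is that your appeal to ``the relativized form of \cref{characterization-of-acyclic-types}'' cannot be read as citing \cref{characterization-of-k-acyclic-types} (which the paper derives \emph{from} this theorem); it is non-circular only because you have already established \ref{item-k-epi}\(\iff\)\ref{item-k-acyclic}, which applied to the terminal map \(\fib_f(b) \to \One\) yields exactly that \(k\)-acyclicity of \(\fib_f(b)\) makes the constants map \(P(b) \to (\fib_f(b) \to P(b))\) an embedding for \(k\)-types \(P(b)\)---which is precisely how the paper argues that step, via the codiagonal of \(\fib_f(b) \to \One\) being a \(k\)-equivalence.
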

\begin{proof}
  \eqref{item-k-epi}~\(\iff\)~\eqref{item-codiagonal-k-equiv}:
  By replaying the proof of~\cref{epic-iff-pushout-square}, we see that \(f\)
  is a \(k\)-epi if and only if the square
  \[
    \begin{tikzcd}
      \squash{A}_k \ar[r,"\squash{f}_k"] \ar[d,"\squash{f}_k"']
      & \squash{B}_k \ar[d,"\id"] \\
      \squash{B}_k \ar[r,"\id"'] & \squash{B}_k
    \end{tikzcd}
  \]
  is a pushout of \(k\)-types (recall~\cite[Thm.~7.4.12]{HoTTBook}).
  But this just means that \(\squash{\nabla_f}_k\) is an equivalence, i.e., that
  \(\nabla_f\) is a \(k\)-equivalence.

  \eqref{item-codiagonal-k-equiv}~\(\Longrightarrow\)~\eqref{item-codiagonal-k-connected}:
  This follows at once from~\cref{retractions-equivalence-connected}, because
  the codiagonal is, by definition, a retraction.

  \eqref{item-codiagonal-k-connected}~\(\Longrightarrow\)~\eqref{item-k-acyclic}:
  This holds because the codiagonal is the fiberwise suspension
  (\cref{codiagonal-is-fiberwise-suspension}).

  \eqref{item-k-acyclic}~\(\Longrightarrow\)~\eqref{item-dep-k-epi}:
  Suppose that \(f\) is \(k\)-acyclic; we show that \(f\) is a dependent
  \(k\)-epi.
  Let \(P\) be a family of \(k\)-types over \(B\). %
  As in the proof of~\cref{acyclic-characterization}, it suffices to show that
  we have embeddings
  \(P(b) \hookrightarrow (\fib_f(b) \to P(b))\) for every \(b : B\).
  For every \(b : B\), the type \(\fib_f(b)\) is \(k\)-acyclic by assumption, so
  the codiagonal of \(\fib_f(b) \to \One\) is a \(k\)-equivalence and hence the
  map \(\fib_f(b) \to \One\) is a \(k\)-epi by the equivalence
  of~\eqref{item-k-epi} and \eqref{item-codiagonal-k-equiv} proved above.
  Thus, since every \(P(b)\) is a \(k\)-type, we have the desired embeddings
  \(P(b) \hookrightarrow (\fib_f(b) \to P(b))\).

  \eqref{item-dep-k-epi}~\(\Longrightarrow\)~\eqref{item-k-epi}: By specializing
  to non-dependent functions.
\end{proof}

As with~\cref{characterization-of-acyclic-types}, the characterization theorem implies:

\begin{corollary}[\flinkspec{synthetic-homotopy-theory}{truncated-acyclic-maps}{a-type-is-k-acyclic-if-and-only-if-the-constant-map-from-any-identity-type-of-any-k-type-is-an-equivalence}]%
  \label{characterization-of-k-acyclic-types}
  The following are equivalent:
  \begin{enumerate}[label=(\roman*)]
  \item the type \(A\) is \(k\)-acyclic,
  \item for all \(k\)-types \(B\), the constants map \(B \to (A \to B)\) is an
    embedding,
  \item for all \(k\)-types \(B\) and \(x,y : B\), the constants map
    \(x = y \to (A \to x = y)\) is an equivalence.
  \end{enumerate}
\end{corollary}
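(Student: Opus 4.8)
The plan is to mirror the proof of \cref{characterization-of-acyclic-types}, but now relativized to $k$-types, using \cref{characterization-of-k-epis} in place of \cref{acyclic-characterization}. The key observation is that the statement ``$A$ is $k$-acyclic'' is, by \cref{characterization-of-k-epis}, the same as saying that the unique map $!_A : A \to \One$ is a $k$-epimorphism.

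First I would establish the equivalence of the first two conditions. Writing $!_A : A \to \One$ for the terminal map, I would reuse the commutative triangle
\[
  \begin{tikzcd}
    (\One \to B) \ar[r,"(!_A)^\ast"] & (A \to B) \\
    B \ar[ur, "\const_{B,A}"'] \ar[u,"\simeq"]
  \end{tikzcd}
\]
in which the vertical map is the canonical equivalence $B \simeq (\One \to B)$. This triangle shows that $(!_A)^\ast$ is an embedding precisely when $\const_{B,A}$ is. Quantifying over all $k$-types $B$, the condition ``$(!_A)^\ast$ is an embedding for all $k$-types $B$'' is exactly the statement that $!_A$ is a $k$-epimorphism, which by \cref{characterization-of-k-epis} is equivalent to $A$ being $k$-acyclic. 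This gives the equivalence of the first and second conditions.

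For the equivalence of the second and third conditions, I would fix a $k$-type $B$ and elements $x,y : B$, and reuse the commutative diagram from the proof of \cref{characterization-of-acyclic-types} relating $\ap^{\const_{B,A}}_{x,y}$, the product $\prod_{a:A} (\const_{B,A}(x)(a) = \const_{B,A}(y)(a))$, and $(A \to x = y)$, together with the factorization through $\const_{x=y,A}$. That diagram shows $\ap^{\const_{B,A}}_{x,y}$ is an equivalence if and only if $\const_{x=y,A}$ is. Since $\const_{B,A}$ being an embedding means precisely that $\ap^{\const_{B,A}}_{x,y}$ is an equivalence for all $x,y$, the two conditions match up after quantifying over $x,y : B$.

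The one point requiring care—and the main obstacle, though a mild one—is the compatibility of the truncation levels in the identity-type step. When $B$ is a $k$-type, its identity type $x = y$ is a $(k-1)$-type, hence in particular still a $k$-type, so the appeal to the second condition for the domain $x = y$ of $\const_{x=y,A}$ remains within the class of $k$-types over which we are allowed to quantify. Thus no strengthening of the hypothesis is needed, and the same diagrammatic argument as in \cref{characterization-of-acyclic-types} goes through verbatim with ``type'' replaced by ``$k$-type'' throughout.
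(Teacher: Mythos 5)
Your proposal is correct and takes essentially the same approach as the paper, which gives no separate proof but states that the corollary follows from \cref{characterization-of-k-epis} exactly as \cref{characterization-of-acyclic-types} follows from \cref{acyclic-characterization}, i.e., by replaying that proof with ``type'' replaced by ``\(k\)-type''. Your closing remark about \(x = y\) being a \((k-1)\)-type is harmless but not actually needed: the diagram argument relates \(\ap^{\const_{B,A}}_{x,y}\) to \(\const_{x=y,A}\) directly and never invokes condition (ii) for the identity type itself.
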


\begin{corollary}[\flinkspec{synthetic-homotopy-theory}{0-acyclic-maps}{a-map-is-0-acyclic-if-and-only-if-it-is-surjective}]
  A type/map is \(0\)-acyclic if and only if it is \((-1)\)-connected.
\end{corollary}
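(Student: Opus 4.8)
The plan is to deduce this from the characterization of \(k\)\nobreakdash-epimorphisms (\cref{characterization-of-k-epis}), which already gives, for every map \(f : A \to B\), that \(f\) is \(0\)-acyclic if and only if it is a \(0\)-epimorphism. It therefore remains only to identify the \(0\)-epimorphisms with the surjections, i.e.\ the \((-1)\)-connected maps. The statement about types is then the special case \(B \equiv \One\): a type \(A\) is \(0\)-acyclic exactly when \(A \to \One\) is, and \(A \to \One\) is surjective exactly when \(A\) is inhabited, that is, \((-1)\)-connected.

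To show that a map \(f\) is a \(0\)-epimorphism if and only if it is surjective, I would first apply \cref{k-epi-iff-epi-of-k-types} to reduce to the \(0\)-truncation \(\squash{f}_0 : \squash{A}_0 \to \squash{B}_0\), which is a map between sets: \(f\) is a \(0\)-epi precisely when \(\squash{f}_0\) is. For a map between sets, testing the embedding condition of \cref{def:k-epic} against sets only is exactly the classical requirement that precomposition be injective (since embeddings and injections coincide for sets), so \(\squash{f}_0\) is a \(0\)-epi if and only if it is an epimorphism in the category of sets. By the well-known fact recalled in the introduction that a map of sets is epic precisely when it is surjective, this holds if and only if \(\squash{f}_0\) is surjective. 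Finally, since the \((-1)\)-truncation factors through the \(0\)-truncation, i.e.\ \(\squash{\squash{A}_0}_{-1} \simeq \squash{A}_{-1}\) and likewise on maps, \(\squash{f}_0\) is surjective if and only if \(f\) is, which closes the chain of equivalences.

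The main obstacle is precisely this last equivalence between \(0\)-epimorphisms and surjections, and in particular checking that it goes through constructively. The implication from surjective to epic is routine, whereas the converse---epic implies surjective for maps of sets---is where one must do the work, classifying the image of \(\squash{f}_0\) by a map \(\squash{B}_0 \to \Prop\) and comparing it with the constantly-true map to conclude that the image is everything; this uses \(\Prop\) as a subobject classifier but neither choice nor excluded middle, matching the paper's constructive stance. As a self-contained alternative that avoids the detour through sets, one could argue directly on suspensions: the standard description of identity types in a set-truncation gives \((\lvert\north\rvert_0 = \lvert\south\rvert_0) \simeq \squash{\north =_{\susp A} \south}_{-1}\), so \(\squash{\susp A}_0\) is contractible iff \(\north\) and \(\south\) are merely equal; a family \(\susp A \to \Prop\) sending \(\north \mapsto \One\) and \(\south \mapsto \squash{A}_{-1}\) (well defined, as every \(a : A\) inhabits \(\squash{A}_{-1}\)) then transports such a mere path to a proof of \(\squash{A}_{-1}\), while conversely any point of \(A\) yields a meridian witnessing the \(0\)-connectedness of \(\susp A\).
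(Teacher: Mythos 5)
Your proposal is correct and is essentially the paper's own proof: the paper's one-line argument likewise combines \cref{characterization-of-k-epis} with the constructively valid fact that epimorphisms of sets are precisely the surjections, and your intermediate steps (reduction to \(\squash{f}_0\) via \cref{k-epi-iff-epi-of-k-types}, the subobject-classifier argument showing ``epi of sets implies surjective'' without excluded middle, and transferring surjectivity along \(0\)-truncation) simply make explicit what the paper leaves implicit. Your alternative closing argument, computing \(\squash{\susp A}_0\) directly via the family \(\north \mapsto \One\), \(\south \mapsto \squash{A}_{-1}\), is also sound and would give a self-contained elementary proof of the type-level statement, but it is not needed for the main line.
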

\begin{proof}
  The epimorphisms of sets are precisely the surjective maps, i.e., those maps
  with \((-1)\)-connected fibers, so this follows
  from~\cref{characterization-of-k-epis}.
\end{proof}

The following result implies that the \(k\)-sphere \(\Sphere{k}\) is an example
of a \((k+1)\)-acyclic type.

\begin{proposition}[\flinkspec{synthetic-homotopy-theory}{truncated-acyclic-maps}{every-k-connected-map-is-k1-acyclic}]%
  \label{k-equivalence-connected-acyclic}
  Every \(k\)-connected map is \((k+1)\)-acyclic and \(k\)\nobreakdash-equivalences are \(k\)-acyclic.
\end{proposition}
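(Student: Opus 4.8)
The proposition bundles two independent claims, so I would prove them separately. For the first claim I would pass to fibers, since both $k$-connectedness and $(k+1)$-acyclicity of a map are defined fiberwise; it therefore suffices to show that a $k$-connected \emph{type} $X$ is $(k+1)$-acyclic. The second claim is most naturally argued about the whole map, since being a $k$-equivalence is a condition on the $k$-truncation of $f$ rather than a fiberwise one; here I would go through the $k$-epimorphism characterizations \cref{k-epi-iff-epi-of-k-types} and \cref{characterization-of-k-epis}.

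For the first claim, the plan is to invoke the constant-map characterization of acyclic types, \cref{characterization-of-k-acyclic-types}, with $k$ replaced by $k+1$: by part~(iii) it suffices to show that for every $(k+1)$-type $B$ and all $x,y : B$, the constant map $(x = y) \to (X \to (x = y))$ is an equivalence. The key observation is the level shift: since $B$ is a $(k+1)$-type, its identity type $x = y$ is a $k$-type. Because $X$ is $k$-connected, the standard fact that the constant map $C \to (X \to C)$ is an equivalence for every $k$-type $C$ (see, e.g., \cite[Sec.~7.5]{HoTTBook}) applies with $C \colonequiv (x = y)$ and yields exactly the required equivalence. Hence $X$ is $(k+1)$-acyclic, and applying this fiberwise proves that every $k$-connected map is $(k+1)$-acyclic. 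As an alternative I could argue via \cref{codiagonal-is-fiberwise-suspension}: by \cref{characterization-of-k-epis} the map $f$ is $(k+1)$-acyclic iff its codiagonal $\nabla_f$ is $(k+1)$-connected, and since $\fib_{\nabla_f}(b) \simeq \susp \fib_f(b)$ this reduces to the classical fact that the suspension of a $k$-connected type is $(k+1)$-connected.

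For the second claim, let $f : A \to B$ be a $k$-equivalence, so that $\squash{f}_k : \squash{A}_k \to \squash{B}_k$ is an equivalence. Every equivalence is a $k$-epimorphism, since precomposition with it is an equivalence, hence an embedding, for every codomain and in particular for every $k$-type. Thus $\squash{f}_k$ is a $k$-epi, and by \cref{k-epi-iff-epi-of-k-types} it follows that $f$ itself is a $k$-epi. Finally, \cref{characterization-of-k-epis} identifies the $k$-epis with the $k$-acyclic maps, so $f$ is $k$-acyclic.

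The only delicate point, and the place where I expect to have to be careful, is the index bookkeeping in the first claim: one must notice that the $(k+1)$-acyclicity characterization quantifies only over identity types $x = y$ of $(k+1)$-types, which are $k$-types, precisely matching the $k$-connectivity hypothesis on $X$. Beyond this off-by-one alignment, both claims are direct applications of results already established in the excerpt.
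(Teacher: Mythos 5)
Your proposal is correct, and both halves are proved by legitimately different routes than the paper's. For the first claim, the paper's proof is a one-liner: the fibers of a $k$-connected map are $k$-connected, and the suspension of a $k$-connected type is $(k+1)$-connected by~\cite[Thm.~8.2.1]{HoTTBook}, which is $(k+1)$-acyclicity of the fibers by definition---this is essentially your ``alternative'' argument, except the paper doesn't even need the detour through the codiagonal and \cref{characterization-of-k-epis}. Your primary route, through \cref{characterization-of-k-acyclic-types}(iii) and the standard fact that constant maps into $k$-types are equivalences for $k$-connected domains, is correct (the off-by-one alignment you flag is exactly right, and there is no circularity since that corollary precedes the proposition), but it invokes the full epimorphism machinery where the definition suffices. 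For the second claim the difference is more interesting: the paper case-splits on $k$ and cites an external result~\cite[Prop.~2.30]{CORS2020} to get that a $k$-equivalence is $(k-1)$-connected, then reduces to the first claim; your argument via \cref{k-epi-iff-epi-of-k-types} (equivalences are $k$-epis, so $f$ is a $k$-epi, so $f$ is $k$-acyclic by \cref{characterization-of-k-epis}) is self-contained within the paper's own lemmas, needs no case distinction on $k$, and avoids the connectivity comparison entirely. What the paper's route buys is the stronger intermediate fact that $k$-equivalences are $(k-1)$-connected; what yours buys is uniformity and independence from outside references.
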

\begin{proof}
  The first claim follows because taking the suspension increases the
  connectedness by one~\cite[Thm.~8.2.1]{HoTTBook}.
  Notice that the other claim is trivial for \(k = -2\). For \(k = k' + 1\) with
  \(k' \geq -2\), we have that a \(k\)-equivalence is \(k'\)-connected
  by~\cite[Prop.~2.30]{CORS2020}, and hence \(k\)-acyclic by the above.
\end{proof}

These implications are strict as shown in~\cref{2-acyclic-but-not-1-connected}.
However, for simply-connected (= \(1\)-connected) maps/types we do have the
following result:

\begin{proposition}\label{k-acyclic-iff-connected}
  A simply connected type is \((k+1)\)-acyclic if and only if it is
  \(k\)-connected.
\end{proposition}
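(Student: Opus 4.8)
The plan is to dispatch the two directions separately. The ``if'' direction is immediate and does not use simple connectivity: if \(A\) is \(k\)-connected then the map \(A \to \One\) is a \(k\)-connected map, hence \((k+1)\)-acyclic by \cref{k-equivalence-connected-acyclic}, so \(A\) is a \((k+1)\)-acyclic type. The content is therefore in the ``only if'' direction. Assuming \(A\) is simply connected and \((k+1)\)-acyclic, I would show \(A\) is \(k\)-connected by induction on \(k \geq -2\).

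For \(k \leq 1\) there is nothing to prove: a simply connected type is \(1\)-connected, hence \(k\)-connected for every \(k \leq 1\), so in this range both sides of the biconditional hold automatically. So assume \(k \geq 2\) and that the statement holds for \(k-1\). Since every \((k+1)\)-acyclic type is \(k\)-acyclic, the inductive hypothesis applies and yields that \(A\) is \((k-1)\)-connected. Because acyclicity and connectivity are propositions and \(A\) is inhabited, I may choose a basepoint and treat \(A\) as pointed for what follows.

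Now I would feed this into the Freudenthal suspension theorem \cite[Thm.~8.6.4]{HoTTBook}: as \(A\) is \((k-1)\)-connected with \(k-1 \geq 1\), the unit \(\sigma : A \to \Omega\susp A\) of the loop--suspension adjunction is \(2(k-1)\)-connected, and \(2(k-1) \geq k\) precisely because \(k \geq 2\); thus \(\sigma\) is in particular \(k\)-connected. On the other hand, \((k+1)\)-acyclicity of \(A\) means that \(\susp A\) is \((k+1)\)-connected, so \(\Omega\susp A\) is \(k\)-connected (looping drops connectivity by one). The composite \(A \xrightarrow{\sigma} \Omega\susp A \to \One\) is then a composite of two \(k\)-connected maps, hence \(k\)-connected, which is exactly the assertion that \(A\) is \(k\)-connected. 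This closes the induction.

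The only delicate point is the bookkeeping of indices: one must verify that Freudenthal supplies \emph{enough} connectivity, i.e.\ that \(2(k-1) \geq k\), which is what forces the split at \(k = 2\) and makes the range \(k \leq 1\) genuinely trivial rather than an instance of the suspension estimate. The remaining ingredients---that \((k+1)\)-acyclic implies \(k\)-acyclic, that looping lowers connectivity by one, and that \(k\)-connected maps are closed under composition---are all standard, so I expect the index arithmetic in the Freudenthal step to be the main thing to get right.
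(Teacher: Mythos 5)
Your proof is correct and takes essentially the same approach as the paper: both directions rest on \cref{k-equivalence-connected-acyclic} and, for the converse, on the Freudenthal suspension theorem combined with the observation that \((k+1)\)-acyclicity makes \(\Omega\susp A \to \One\) a \(k\)-connected map. The only difference is organizational: you run an external induction on \(k\), gaining one level of connectivity per step (using that \(2(k-1) \geq k\) for \(k \geq 2\)), whereas the paper fixes \(k\) and iterates Freudenthal internally, doubling the known connectivity (\(\min(2,k)\), \(\min(4,k)\), \dots) until it reaches \(k\).
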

\begin{proof}
  The right-to-left implication is~\cref{k-equivalence-connected-acyclic}.
  For the converse, we may assume \(k\ge1\) and we use the Freudenthal suspension
  theorem~\cite[Thm.~8.6.4]{HoTTBook}: The unit map of the loop-suspension
  adjunction, \(\sigma : A \to \Omega\Sigma A\), is \(2n\)-connected whenever
  \(A\) is \(n\)-connected (for \(n\ge0\)).
  Since \(A\) is assumed to be \((k+1)\)-acyclic, the map
  \(\Omega\Sigma A \to \One\) is \(k\)-connected.
  Hence, starting with the assumption that \(A\) is \(1\)\nobreakdash-connected,
  we in turn conclude that \(A\) is \(\min(2,k)\)-connected, then
  \(\min(4,k)\)-connected, etc., and hence \(k\)-connected.
\end{proof}

A consequence of the above result is that joins of \(k\)-acyclic types become
\(2k\)\nobreakdash-connected, as we show after this lemma.

\begin{lemma}\label{suspension-of-join-is-smash-of-suspensions}
  The suspension of the join of two pointed types \(A\) and \(B\) is the smash
  products of their suspensions, i.e.,
  \(\susp(A \join B) \simeq {\susp A} \smashpr {\susp B}\).
\end{lemma}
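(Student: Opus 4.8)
The plan is to reduce both sides to a common smash-product expression, reusing the computational toolkit already assembled for the proof of \cref{join-with-acyclic-type-is-contractible}. Since the statement is about pointed types, all the relevant constructions are available. First I would apply \cref{join-is-suspension-of-smash} to rewrite the left-hand side, obtaining \(\susp(A \join B) \simeq \susp\susp(A \smashpr B)\).

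Next I would express the suspensions as smash products with the circle, using \(\susp X \simeq \Circle \smashpr X\) (\cite[Prop.~4.2.1]{Brunerie2016}). This turns the left-hand side into \(\Circle \smashpr (\Circle \smashpr (A \smashpr B))\), and simultaneously rewrites the right-hand side \(\susp A \smashpr \susp B\) as \((\Circle \smashpr A) \smashpr (\Circle \smashpr B)\). It then remains to produce an equivalence between these two iterated smash products.

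This final step is a purely formal rearrangement of factors. Using associativity of the smash product (\cite[Prop.~9]{Ljungstrom2024}) I would pull the outermost \(\Circle\) to the front of \((\Circle \smashpr A) \smashpr (\Circle \smashpr B)\), then reassociate the remaining expression \(A \smashpr (\Circle \smashpr B)\) and swap the factors \(A\) and \(\Circle\) using the symmetry of the smash product, and finally reassociate once more; the net effect is an equivalence with \(\Circle \smashpr (\Circle \smashpr (A \smashpr B))\), which is exactly \(\susp(A \join B)\) as computed above.

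The only ingredient that does not already appear in the proof of \cref{join-with-acyclic-type-is-contractible} is the symmetry of the smash product, which is what licenses interchanging the second \(\Circle\) with \(A\); everything else is a chain of associativity equivalences together with \(\susp X \simeq \Circle \smashpr X\). I expect the main (and only minor) obstacle to be bookkeeping the order and bracketing of the four factors \(\Circle,\Circle,A,B\) so that the associativity and symmetry steps line up correctly, rather than any genuine mathematical difficulty.
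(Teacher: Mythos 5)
Your proposal is correct and is essentially the paper's own proof: both reduce \(\susp(A \join B)\) to \(\susp^2(A \smashpr B)\) via \cref{join-is-suspension-of-smash}, rewrite suspensions as smash products with \(\Circle\) using \cite[Prop.~4.2.1]{Brunerie2016}, and then shuffle the four factors with associativity (\cite[Prop.~9]{Ljungstrom2024}) and commutativity of the smash product. The only difference is cosmetic: you rewrite both sides and meet in the middle, whereas the paper runs a single chain of equivalences from left to right.
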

\begin{proof}
  We have the chain of equivalences:
  \begin{align*}
    \susp(A \join B) &\simeq \susp^2(A \smashpr B)
    &\text{(by \cref{join-is-suspension-of-smash})} \\
    &\simeq \Circle \smashpr \Circle \smashpr (A \smashpr B)
    &\text{(by \cite[Prop.~4.2.1]{Brunerie2016})} \\
    &\simeq \Circle \smashpr (A \smashpr B) \smashpr \Circle
    &\text{(the smash product is clearly commutative)} \\
    &\simeq (\Circle \smashpr A) \smashpr (B \smashpr \Circle)
    &\text{(by \cite[Prop.~9]{Ljungstrom2024})} \\
    &\simeq \susp A \smashpr \susp B
    &\text{(by \cite[Prop.~4.2.1]{Brunerie2016})}. &\qedhere
  \end{align*}
\end{proof}

\begin{proposition}\label{join-acyclic}
  If \(A\) is a \(k\)-acyclic type and \(B\) is an \(l\)-acyclic type with
  \(k,l \geq 0\), then their join \(A \join B\) is \((k + l)\)-connected.
  In particular, the join of two \(k\)-acyclic types is \(2k\)-connected.
\end{proposition}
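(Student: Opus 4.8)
The plan is to reduce the claim to the desuspension criterion \cref{k-acyclic-iff-connected}, which upgrades \((k+l+1)\)-acyclicity to \((k+l)\)-connectivity once we know the join is simply connected. Since being \((k+l)\)-connected is a proposition and both \(A\) and \(B\) are inhabited (each is in particular \(0\)-acyclic because \(k,l\geq 0\), and \(0\)-acyclic types are merely inhabited), I may assume that \(A\) and \(B\) are pointed, so that \cref{join-is-suspension-of-smash} and \cref{suspension-of-join-is-smash-of-suspensions} apply.

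First I would compute that \(A \join B\) is \((k+l+1)\)-acyclic. By \cref{suspension-of-join-is-smash-of-suspensions} we have \(\susp(A \join B) \simeq {\susp A} \smashpr {\susp B}\). By definition of acyclicity \(\susp A\) is \(k\)-connected and \(\susp B\) is \(l\)-connected, so by the (standard) connectivity estimate for smash products—the smash of an \(m\)-connected and an \(n\)-connected type is \((m+n+1)\)-connected—the type \({\susp A} \smashpr {\susp B}\) is \((k+l+1)\)-connected. Thus \(\susp(A\join B)\) is \((k+l+1)\)-connected, i.e.\ \(A \join B\) is \((k+l+1)\)-acyclic.

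Next I would pin down the connectivity of \(A\join B\) itself. Rewriting exactly as in the proof of \cref{join-with-acyclic-type-is-contractible}, we get \(A \join B \simeq \susp(A \smashpr B) \simeq {\susp A} \smashpr B \simeq A \smashpr {\susp B}\). Since \(\susp A\) is \(k\)-connected and \(B\) is merely inhabited (\((-1)\)-connected), the same smash estimate gives that \({\susp A} \smashpr B\) is \(k\)-connected; symmetrically it is \(l\)-connected, so \(A\join B\) is \(\max(k,l)\)-connected, and in particular connected. Now I split on \(k+l\): if \(k+l=0\) then \(k=l=0\) and the line above already yields that \(A\join B\) is \(0\)-connected, which is the desired conclusion; if \(k+l\geq 1\) then \(\max(k,l)\geq 1\), so \(A\join B\) is simply connected, and combining this with the \((k+l+1)\)-acyclicity from the previous paragraph, \cref{k-acyclic-iff-connected} shows that \(A\join B\) is \((k+l)\)-connected. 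Specializing to \(k=l\) gives the stated \(2k\)-connectedness.

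I expect the main obstacle to be the bookkeeping around simple connectivity: \cref{k-acyclic-iff-connected} genuinely requires a \(1\)-connected type, and this hypothesis fails in the degenerate case \(k=l=0\), as witnessed by \(\Two \join \Two \simeq \Circle\), which is connected but not simply connected. Correctly marshalling the two separate uses of smash-product connectivity—one applied to \({\susp A}\smashpr{\susp B}\) to obtain acyclicity of the join, and one applied to \({\susp A}\smashpr B\) to obtain simple connectivity of the join—while treating the \(k+l=0\) boundary by hand, is the delicate point; the rest is a direct assembly of the preceding lemmas.
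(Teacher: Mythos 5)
Your proof is correct, and its skeleton is the same as the paper's: assume \(A\) and \(B\) pointed, identify \(\susp(A \join B)\) with \({\susp A} \smashpr {\susp B}\) via \cref{suspension-of-join-is-smash-of-suspensions}, and apply the smash connectivity estimate to conclude that \(A \join B\) is \((k+l+1)\)-acyclic. Where you differ is in what happens next, and there your version is actually more complete than the paper's. The paper concludes in one breath (``indeed showing that \(A \join B\) is \((k+l)\)-connected''), implicitly invoking \cref{k-acyclic-iff-connected} (as announced in the sentence introducing the proposition), but it neither verifies the simple-connectivity hypothesis of that result nor treats the boundary case \(k+l=0\). As you note, this is not a vacuous worry: \((k+l+1)\)-acyclicity alone does not yield \((k+l)\)-connectivity --- the suspension of \(\BH\) is contractible, yet \(\BH\) is not simply connected --- and for \(k=l=0\) the example \(\Two \join \Two \simeq \Circle\) shows that simple connectivity of the join can genuinely fail while the desired \(0\)-connectivity still holds. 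Your two supplementary steps --- the identification \(A \join B \simeq {\susp A} \smashpr B\), which gives \(\max(k,l)\)-connectivity of the join via a second use of the smash estimate, and the case split \(k+l=0\) versus \(k+l\geq 1\) --- are exactly the justification the paper elides, so your write-up reads as the corrected and expanded form of the paper's own argument rather than a different one.
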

\begin{proof}
  Since \(k,l \geq 0\), the types \(A\) and \(B\) are inhabited, and since we
  are proving a proposition we may assume them to be pointed so that
  \cref{suspension-of-join-is-smash-of-suspensions} applies and
  \(\susp(A \join B) \simeq {{\susp A} \smashpr {\susp B}}\).
  Now, \(\susp A\) is \(k\)-connected and \(\susp B\) is \(l\)-connected by
  assumption so that their smash product is \((k + l + 1)\)-connected
  by~\cite[Prop.~4.3.1]{Brunerie2016}, indeed showing that \(A \join B\) is
  \((k + l)\)-connected.
\end{proof}

\subsection{Closure properties}
Before characterizing 1- and 2-acyclic types, we record a few general closure
properties of \(k\)-acyclic maps which are proved analogously to the ones for
acyclic maps.

\begin{lemma}[\flinkspec{synthetic-homotopy-theory}{truncated-acyclic-maps}{the-class-of-k-acyclic-maps-is-closed-under-composition-and-has-the-right-cancellation-property}]%
  \label{k-acyclic-composition-and-right-cancellation}
  The class of \(k\)-acyclic maps is closed under composition and has the
  right cancellation property: if \(g \circ f\) and \(f\) are
  \(k\)-acyclic, then so is \(g\).

  Moreover, the \(k\)-acyclic maps are closed under retracts, pullbacks and
  pushouts.

  Finally, \(k\)-acyclic types are closed under \(\sum\)-types, and if all
  identity types of an inhabited type \(A\) are \(k\)-acyclic, then so is \(A\)
  itself.
\end{lemma}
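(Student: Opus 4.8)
The plan is to replay, \emph{mutatis mutandis}, the proofs of the corresponding statements for acyclic maps (\cref{acyclic-composition-and-right-cancellation,acyclic-Sigma,acyclic-Id,acyclic-pullback-and-pushout-stable,acyclic-retract}), systematically replacing the characterization of acyclic maps as epimorphisms by the characterization of \(k\)-acyclic maps as \(k\)-epimorphisms from \cref{characterization-of-k-epis}. Throughout I will use freely that being \(k\)-acyclic is a property, since it asks that the \(k\)-truncation of a suspension be contractible.

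First, for closure under composition and right cancellation, I would use that a map is \(k\)-acyclic exactly when it is a \(k\)-epi. Given \(f : A \to B\) and \(g : B \to C\), for every \(k\)-type \(X\) the precomposition \((g \circ f)^\ast\) factors as \(f^\ast \circ g^\ast\). If \(f\) is \(k\)-acyclic, so that \(f^\ast\) is an embedding, then by the composition and \emph{left} cancellation properties of embeddings, \((g \circ f)^\ast\) is an embedding if and only if \(g^\ast\) is. This yields both that \(g \circ f\) is \(k\)-acyclic when \(f\) and \(g\) are, and that \(g\) is \(k\)-acyclic when \(g \circ f\) and \(f\) are.

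For stability under pullbacks, I would invoke \cite[Lem.~7.6.8]{HoTTBook} to identify the fibers of the projection with the fibers of the map being pulled back, so that \(k\)-acyclicity of the fibers transfers directly. For stability under pushouts, I would reuse the universal-property argument of \cref{acyclic-pullback-and-pushout-stable}: for a \(k\)-type \(X\), the type of extensions of \(h : C \to X\) along \(\inr\) is equivalent to the type of extensions of \(h \circ g\) along \(f\); since \(f\) is \(k\)-acyclic (= \(k\)-epi) and \(X\) is a \(k\)-type, the latter is a proposition, hence so is the former, exhibiting \(\inr\) as a \(k\)-epi. Closure under retracts for types follows because the suspension and then the \(k\)-truncation both preserve retracts, and a retract of the contractible type \(\One\) is contractible; the statement for maps then follows as in the untruncated case, since fibers of retracts are retracts of fibers.

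The remaining two consequences are formal. Closure of \(k\)-acyclic \emph{types} under \(\sum\)-types follows by observing that \(\sum_{a : A} B(a) \xrightarrow{\pi_1} A \to \One\) is a composite of \(k\)-acyclic maps (using \(\fib_{\pi_1}(a) \simeq B(a)\)), hence \(k\)-acyclic. Finally, if all identity types of an inhabited \(A\) are \(k\)-acyclic, I would pick a point \(a : A\); then \(\One \xrightarrow{a} A\) is \(k\)-acyclic because its fibers are the identity types \(a = x\), and right cancellation applied to \(\One \xrightarrow{a} A \to \One\) shows that \(A \to \One\) is \(k\)-acyclic, i.e.\ that \(A\) is \(k\)-acyclic. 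I expect the only point requiring genuine care to be the pushout case, where one must check that restricting the universal property to \(k\)-types still delivers the desired embedding; every other step is a direct transcription of the corresponding acyclic argument.
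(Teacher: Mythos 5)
Your proposal is correct and is exactly what the paper intends: the paper gives no explicit proof of this lemma, stating only that the properties ``are proved analogously to the ones for acyclic maps,'' and your systematic replacement of the epimorphism characterization (\cref{acyclic-characterization}) by the \(k\)-epimorphism characterization (\cref{characterization-of-k-epis}) is precisely that analogy carried out. All steps check out, including the pushout case you flagged, since the universal-property equivalence between extension types holds for arbitrary codomains and in particular for \(k\)-types.
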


\begin{corollary}[\flinkspec{synthetic-homotopy-theory}{truncated-acyclic-maps}{a-type-is-k1-acyclic-if-and-only-if-its-k-truncation-is}]%
  \label{k+1-acyclic-iff-k-truncation}
  A type is \((k+1)\)-acyclic if and only if its \(k\)-truncation is.
\end{corollary}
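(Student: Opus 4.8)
The plan is to deduce the biconditional directly from the characterization of $k$-acyclic types in \cref{characterization-of-k-acyclic-types}, applied at truncation level $k+1$ and using its third formulation (the one phrased via constants maps on identity types). By that characterization, $A$ is $(k+1)$-acyclic precisely when for every $(k+1)$-type $B$ and all $x,y : B$ the constants map
\(\const_{x=y,A} : (x = y) \to (A \to x = y)\)
is an equivalence; dually, $\squash{A}_k$ is $(k+1)$-acyclic precisely when each
\(\const_{x=y,\squash{A}_k} : (x = y) \to (\squash{A}_k \to x = y)\)
is an equivalence. So it suffices to fix such a $B$, $x$ and $y$ and show that the first constants map is an equivalence if and only if the second is; chaining this equivalence over all $(k+1)$-types $B$ and all $x,y : B$ then yields the corollary.

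The key observation is that, since $B$ is a $(k+1)$-type, each identity type $x = y$ is a $k$-type. Therefore the universal property of the $k$-truncation supplies an equivalence
\(
  (\squash{A}_k \to (x = y)) \xrightarrow{\ \simeq\ } (A \to (x = y))
\)
given by precomposition with the truncation unit $\eta : A \to \squash{A}_k$. The point is then that precomposing a constant function $\lambda w.\,c$ with $\eta$ returns the constant function $\lambda a.\,c$, so the two constants maps sit in a commuting triangle whose third edge is exactly this precomposition equivalence, i.e.\ $\const_{x=y,A}$ factors as $\const_{x=y,\squash{A}_k}$ followed by the universal-property equivalence. Consequently $\const_{x=y,A}$ is an equivalence if and only if $\const_{x=y,\squash{A}_k}$ is, which is all that was needed.

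There is no substantial obstacle in this route; the only points requiring care are invoking \cref{characterization-of-k-acyclic-types} at level $k+1$ rather than $k$, correctly lining up the universal quantifiers, and verifying the commuting triangle (that the constants map for $A$ really factors through the constants map for $\squash{A}_k$ and the truncation equivalence). I would remark that a more conceptual alternative is available: the truncation unit $\eta$ is $k$-connected, its suspension $\susp\eta$ is then $(k+1)$-connected, and a $(k+1)$-connected map induces an equivalence on $(k+1)$-truncations, so $\squash{\susp A}_{k+1} \simeq \squash{\susp \squash{A}_k}_{k+1}$ and one is contractible iff the other is. That argument is cleaner to state but shifts the real work into establishing that suspension raises the connectivity of a map by one, so I prefer the self-contained characterization-based proof above.
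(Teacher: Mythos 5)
Your proof is correct, but it takes a genuinely different route from the paper's. The paper deduces the corollary in two lines from its general machinery: the truncation unit \(A \to \squash{A}_k\) is \(k\)-connected and hence \((k+1)\)-acyclic by \cref{k-equivalence-connected-acyclic}, and then the composition and right-cancellation properties of \((k+1)\)-acyclic maps (\cref{k-acyclic-composition-and-right-cancellation}) applied to \(A \to \squash{A}_k \to \One\) give both directions at once. You instead work directly with the constants-map characterization (\cref{characterization-of-k-acyclic-types}, item (iii), invoked at level \(k+1\)), exploiting the level shift that identity types of \((k+1)\)-types are \(k\)-types, so that the universal property of the \(k\)-truncation makes \(\eta^\ast : (\squash{A}_k \to (x=y)) \to (A \to (x=y))\) an equivalence; the judgmentally commuting triangle \(\const_{x=y,A} = \eta^\ast \circ \const_{x=y,\squash{A}_k}\) and 2-out-of-3 then finish the argument. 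Your route is closer in spirit to the paper's \cref{k-epi-iff-epi-of-k-types} (truncation-invariance of \(k\)-epis via precomposition squares), and it buys a certain economy of means: it needs neither the closure lemma nor the suspension-raises-connectivity input hiding inside \cref{k-equivalence-connected-acyclic}, only the characterization theorem and the recursion principle of truncation. What the paper's proof buys is brevity and uniformity: the closure properties are established once and reused throughout, and the same argument pattern applies verbatim to maps, not just types. Your closing remark correctly identifies the more conceptual alternative, which is essentially the paper's proof in disguise (both pass through the \(k\)-connectivity of the unit and, ultimately, the fact that suspension raises connectivity).
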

\begin{proof}
  The first map in the composite
  \(A \xrightarrow{\tosquash{-}_k} \squash{A}_k \to \One\) is
  \(k\)-connected~\cite[Cor.~7.5.8]{HoTTBook} and so
  \((k+1)\)\nobreakdash-acyclic by~\cref{k-equivalence-connected-acyclic}.  The
  result now follows from \cref{k-acyclic-composition-and-right-cancellation}.
\end{proof}

\begin{remark}
  The \(k\)-acyclic types are not closed under taking exponentials: the circle
  \(\Circle\) is 1-acyclic (as it is connected), but
  \((\Circle \to \Circle) \simeq \Circle \times \mathbb Z\) is not by the
  upcoming~\cref{1-acyclic-characterization}.
\end{remark}

\subsection{Characterizing \texorpdfstring{\(1\)}{1}-acyclic types and applications in group theory}
The following theorem says that there are no interesting 1-acyclic
\emph{sets} and is used to characterize the 1-acyclic types.
\begin{theorem}[\flinkspec{synthetic-homotopy-theory}{1-acyclic-types}{every-1-acyclic-type-is-0-connected}]%
  \label{1-acyclic-set-iff-contractible}
  For a \emph{set} \(A\), we have
  \[
    A \text{ is 1-acyclic} \iff A \text{ is acyclic} \iff A \text{ is contractible}.
  \]
\end{theorem}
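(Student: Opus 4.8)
The plan is to prove the cycle of implications $\text{contractible} \Rightarrow \text{acyclic} \Rightarrow \text{1-acyclic} \Rightarrow \text{contractible}$, where only the last arrow uses that $A$ is a set. The first two are immediate from the definitions: if $A \simeq \One$ then $\susp A \simeq \susp\One \simeq \One$, so $A$ is acyclic; and if $\susp A$ is contractible then it is in particular $1$-connected, so $A$ is $1$-acyclic. (More conceptually, a contractible suspension is $k$-connected for every $k$, so every acyclic type is $k$-acyclic.) Everything thus reduces to the single implication: a $1$-acyclic set is contractible.

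So suppose $A$ is a $1$-acyclic set. First I would record that $A$ is inhabited: since every $(k+1)$-acyclic type is $k$-acyclic, $A$ is $0$-acyclic, and a type is $0$-acyclic exactly when it is $(-1)$-connected, i.e.\ inhabited. Because being contractible is a proposition, I may then extract an actual point, and since $A$ is a set it now suffices to prove that $A$ is a \emph{proposition}, i.e.\ that $a = a'$ for all $a, a' : A$. Note that $a = a'$ is itself a proposition, so truncations may be discarded freely when proving it.

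The key step is to apply \cref{characterization-of-k-acyclic-types} against a well-chosen $1$-type, namely the delooping $\B F(A)$ of the free group $F(A)$ on the set $A$. Since $\B F(A)$ is a $1$-type with $(\pt =_{\B F(A)} \pt) \simeq F(A)$, the characterization (in the form saying the constants map into identity types of $1$-types is an equivalence) shows that $F(A) \to (A \to F(A))$ is an equivalence, hence surjective; in other words, every map $A \to F(A)$ is merely constant. Applying this to the unit $\eta : A \to F(A)$ that includes the generators yields $\eta(a) = \eta(a')$ for all $a, a' : A$, and since the generator inclusion of a free group on a set is an embedding $A \hookrightarrow F(A)$, we conclude $a = a'$. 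The real obstacle here is the \emph{choice} of target: the conclusion ``every map $A \to G$ is constant'' only bites when $A$ embeds into the underlying set of some group $G$, and $F(A)$ is precisely the universal such choice. The one external fact I rely on is that its unit is injective; alternatively one could run the identical argument with the free abelian group (free $\Z$-module) on $A$, whose basis inclusion is likewise an embedding.
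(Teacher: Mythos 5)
Your proof is correct and takes essentially the same route as the paper: both reduce everything to showing that a $1$-acyclic set is a proposition by applying \cref{characterization-of-k-acyclic-types} to the delooping of the free group on \(A\), concluding that the unit \(\eta : A \to F(A)\) must be constant, and then combining injectivity of the unit with inhabitedness (from $0$-acyclicity) to obtain contractibility. Your closing remark about substituting the free abelian group is the only divergence; note that constructively the injectivity of \emph{that} unit is itself nontrivial (the paper proves it via W\"arn's filtered-colimit argument in \cref{suspension-2-acyclic-iff-type-connected}), whereas the free-group fact has the direct citations the paper uses.
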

\begin{proof}
  The right-to-left implications are trivial, so it suffices to show that every
  1-acyclic set is contractible.
  If \(A\) is 1-acyclic, then by~\cref{characterization-of-k-acyclic-types}, the
  constants map
  \[
    \Omega(\B G,\basept) \to (A \to \Omega(\B G,\basept))
  \]
  must be an equivalence, where \(\B G\) is the classifying
  type~\cite{BDR2018,Symmetry} of the free group \(G\) on the set \(A\).
  Now, \(\Omega(\B G,\basept) \simeq G\), so in particular, the unit
  \(\eta : A \to G\) of the free group \(G\) must be constant. Hence,
  \(\eta(x) = \eta(y)\) for every \(x,y : A\). But the unit is
  left-cancellable~\cite[Ch.~X]{MinesRichmanRuitenburg1988}
  (see~\cite{EscardoFreeGroup} or \cite[Ex.~11]{Warn2023} for a proof in
  homotopy type theory), so \(A\) is a proposition. Finally, if \(A\) is
  1-acyclic, then it is 0-acyclic, i.e., inhabited. Thus, \(A\) is contractible,
  as we wished to show.
\end{proof}

\begin{theorem}[\flinkspec{synthetic-homotopy-theory}{1-acyclic-types}{every-1-acyclic-type-is-0-connected} %
  Characterization of 1-acyclic types]\label{1-acyclic-characterization}
  A type is 1-acyclic if and only if it is connected.
\end{theorem}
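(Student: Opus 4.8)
The plan is to reduce the statement to the set-level case, which has already been settled in \cref{1-acyclic-set-iff-contractible}. The crucial point is that $1$-acyclicity is insensitive to passing to the set-truncation, so none of the higher structure of the type $A$ can matter; everything is controlled by $\squash{A}_0$.

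Concretely, I would first apply \cref{k+1-acyclic-iff-k-truncation} in the case $k = 0$: this says that $A$ is $1$-acyclic if and only if its $0$-truncation $\squash{A}_0$ is $1$-acyclic. Since $\squash{A}_0$ is by construction a set, I would then invoke \cref{1-acyclic-set-iff-contractible}, which identifies the $1$-acyclic sets as exactly the contractible ones. Chaining the two, $A$ is $1$-acyclic if and only if $\squash{A}_0$ is contractible, i.e.\ $\squash{A}_0 \simeq \One$. But $\squash{A}_0 \simeq \One$ is precisely the definition of $A$ being connected ($0$-connected), which is the desired equivalence.

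For the right-to-left direction one can also argue more directly, bypassing the set-level theorem: if $A$ is connected then the unique map $A \to \One$ is $0$-connected, and by \cref{k-equivalence-connected-acyclic} every $0$-connected map is $1$-acyclic, so $A$ is $1$-acyclic. The substantive content therefore lies in the left-to-right direction.

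I do not expect a genuine obstacle here: given the lemmas already in hand, the theorem is essentially a bookkeeping corollary. The real work has been pushed into \cref{1-acyclic-set-iff-contractible}, whose proof rests in turn on the left-cancellability of the unit $\eta \colon A \to G$ of the free group on a set --- the delicate input drawn from \cite{Warn2023} (and \cite{EscardoFreeGroup}). Were one to prove the present theorem from scratch, that free-group argument, together with the reduction along the set-truncation, would be the only nontrivial ingredient.
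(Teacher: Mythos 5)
Your proposal is correct and takes essentially the same route as the paper: the paper also reduces the hard direction to \cref{1-acyclic-set-iff-contractible} by showing that \(\squash{A}_0\) is \(1\)-acyclic, and handles the easy direction via \cref{k-equivalence-connected-acyclic}. The only cosmetic difference is that you invoke \cref{k+1-acyclic-iff-k-truncation} directly as a biconditional, whereas the paper unfolds it inline (truncation unit is \(0\)-connected, hence \(1\)-acyclic, then right cancellation).
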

\begin{proof}
  All connected types are 1-acyclic by~\cref{k-equivalence-connected-acyclic}.
  Conversely, if \(A\) is 1-acyclic, then the composite
  \(A \xrightarrow{\tosquash{-}_0} \squash{A}_0 \to \One\) is 1-acyclic. But the
  first map is 1-acyclic by~\cref{k+1-acyclic-iff-k-truncation}, so that
  \(\squash{A}_0\) is 1-acyclic by the right-cancellation property of the class
  of \(k\)-acyclic maps.
  But \(\squash{A}_0\) is a set by definition, so \(A\) is connected
  by~\cref{1-acyclic-set-iff-contractible}.
\end{proof}

It follows directly from \cref{1-acyclic-characterization} that a map is
$1$-acyclic if and only if it is connected. Combined with
\cref{characterization-of-k-epis} this can be used to give a constructive proof
of the following fact about group homomorphisms.

\begin{theorem}\label{epi-of-groups-characterization}
  The following are equivalent for a group homomorphism \(f : G \to H\):
  \begin{enumerate}[label=(\roman*)]
  \item\label{item-epi-of-grps} \(f\) is an epi of groups;
  \item\label{item-epi-of-concrete-grps} \(\B f : \B G \ptdto \B H\) is an epi
    of pointed connected \(1\)-types;
  \item\label{item-connected} \(\B f : \B G \to \B H\) is connected;
  \item\label{item-surjective} \(f\) is surjective as a map of sets.
  \end{enumerate}
\end{theorem}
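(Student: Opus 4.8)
The plan is to prove the four conditions equivalent by exploiting the equivalence between groups and pointed connected \(1\)-types and then feeding \(\B f\) into the \(k\)-epimorphism machinery of \cref{characterization-of-k-epis}, specialized to \(k = 1\). The implication \ref{item-epi-of-grps}~\(\iff\)~\ref{item-epi-of-concrete-grps} is formal: the delooping construction \(G \mapsto \B G\) is an equivalence of (\(1\)-)categories between \(\Grp\) and the pointed connected \(1\)-types (with \(\Omega \B G \cong G\) and pointed homotopy classes as morphisms), and equivalences of categories preserve and reflect epimorphisms, so \(f\) is epic in \(\Grp\) exactly when \(\B f\) is epic among pointed connected \(1\)-types. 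It therefore remains to connect \ref{item-epi-of-concrete-grps} to the connectedness condition \ref{item-connected}, and then \ref{item-connected} to surjectivity \ref{item-surjective}.

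For \ref{item-epi-of-concrete-grps}~\(\iff\)~\ref{item-connected}, I would replay the argument of \cref{epic-iff-pushout-square} internally to the \(1\)-category of pointed connected \(1\)-types: a map is epic in a category precisely when the square consisting of two copies of its codomain and two identities is a pushout, i.e.\ when the fold map out of the self-pushout is an equivalence. Pushouts in the reflective subcategory of pointed connected \(1\)-types are computed by \(1\)-truncating the pushout of types, so the relevant self-pushout is \(\squash{\B H +_{\B G} \B H}_1\) and the fold map to \(\B H\) is exactly \(\squash{\nabla_{\B f}}_1\). Hence \(\B f\) is epic among pointed connected \(1\)-types iff \(\squash{\nabla_{\B f}}_1\) is an equivalence, i.e.\ iff \(\nabla_{\B f}\) is a \(1\)-equivalence. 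By the equivalence \ref{item-codiagonal-k-equiv}~\(\iff\)~\ref{item-k-acyclic} of \cref{characterization-of-k-epis} (at \(k = 1\)) this says that \(\B f\) is \(1\)-acyclic, which, by the observation preceding this theorem that a map is \(1\)-acyclic iff it is connected (a consequence of \cref{1-acyclic-characterization}), is precisely \ref{item-connected}.

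For \ref{item-connected}~\(\iff\)~\ref{item-surjective}, I would argue with the basepoint fiber. Since \(\B H\) is connected, \(\B f\) is \(0\)-connected iff the fiber \(F \colonequiv \fib_{\B f}(\pt)\) is connected, as all fibers are merely equal and connectedness is a proposition. The fiber sequence \(F \to \B G \to \B H\) of pointed connected \(1\)-types yields an exact sequence \(G \xrightarrow{f} H \to \pi_0(F) \to \One\) of pointed sets (using \(\pi_1 \B G \cong G\), \(\pi_1 \B H \cong H\), \(\pi_1(\B f) = f\), and \(\pi_0 \B G \cong \pi_0 \B H \cong \One\) by connectedness). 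Exactness then identifies \(\pi_0(F)\) with the pointed set of cosets \(H / \im(f)\), so \(F\) is connected iff \(f\) is surjective, giving \ref{item-connected}~\(\iff\)~\ref{item-surjective} and closing the chain.

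The main obstacle is the middle step: correctly computing the self-pushout in the category of pointed connected \(1\)-types as the \(1\)-truncation \(\squash{\B H +_{\B G} \B H}_1\) and matching the categorical epimorphism condition with \(\squash{\nabla_{\B f}}_1\) being an equivalence, so that \cref{characterization-of-k-epis} applies. Here one must check that pointed and unpointed pushouts agree for these connected spans and that \(1\)-truncation really computes the reflection into \(1\)-types; equivalently, this step is van Kampen, identifying \(\pi_1\squash{\B H +_{\B G} \B H}_1\) with the amalgamated free product \(H *_G H\) and observing that \(f\) is a group epimorphism iff the fold map \(H *_G H \to H\) is an isomorphism. Once this identification is in place, everything else is either formal (the categorical equivalence) or a routine application of the long exact sequence.
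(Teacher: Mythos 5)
Your proof is correct, but its key middle step takes a genuinely different route from the paper's. The paper proves \eqref{item-epi-of-concrete-grps}~\(\Rightarrow\)~\eqref{item-connected} by a direct comparison of extension types: given a bare map \(g : \B G \to X\) into an arbitrary (unpointed, possibly disconnected) \(1\)-type, it points \(X\) at \(g(\pt_{\B G})\), factors \(g\) through the connected component of that point, and shows that the type of unpointed extensions of \(g\) along \(\B f\) is equivalent to the type of \emph{pointed} extensions of the corestriction, so that \eqref{item-epi-of-concrete-grps} makes \(\B f\) a \(1\)-epi, after which \cref{characterization-of-k-epis,1-acyclic-characterization} give connectedness. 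You instead stay inside the \(1\)-category of pointed connected \(1\)-types: epi \(\iff\) the identity cocone is a pushout \(\iff\) the categorical codiagonal is an isomorphism, and you compute that categorical pushout as \(\squash{\B H +_{\B G} \B H}_1\) (equivalently, by Seifert--van Kampen, its \(\pi_1\) is the amalgamated product \(H *_G H\), and \(f\) is epic in \(\Grp\) iff the fold map \(H *_G H \to H\) is an isomorphism), entering \cref{characterization-of-k-epis} through clause~\eqref{item-codiagonal-k-equiv} (codiagonal a \(1\)-equivalence) rather than clause~\eqref{item-k-epi}. This buys a clean bypass of the pointed-versus-unpointed extension comparison---precisely the delicate point in the paper's proof---at the price of justifying that pushouts in this category are \(1\)-truncations of type-level pushouts (pointed pushouts over a connected span agree with unpointed ones, pushouts of connected types are connected, and the relevant hom-types are sets); you correctly flag this as the main obligation, and the van Kampen formulation discharges it without circularity, since \cref{characterization-of-k-epis,1-acyclic-characterization} are proved before and independently of this theorem. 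Your long-exact-sequence argument for \eqref{item-connected}~\(\iff\)~\eqref{item-surjective} is also sound and self-contained where the paper cites~\cite[Lem.~4.11.4]{Symmetry}; note only that pointed-set exactness of \(G \xrightarrow{f} H \xrightarrow{\partial} \pi_0(F) \to \One\) does not by itself identify \(\pi_0(F)\) with the coset space \(H/\im(f)\) (that requires the orbit-level refinement of exactness), but it does give exactly what you need: \(\partial\) is surjective and \(\partial^{-1}(\pt) = \im(f)\), so \(\pi_0(F)\) is trivial if and only if \(f\) is surjective.
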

\begin{proof}
  The implication
  \(\eqref{item-epi-of-grps} \Rightarrow \eqref{item-epi-of-concrete-grps}\)
  follows immediately from the equivalence between the category of groups
  and the category of pointed connected 1-types~\cite[Thm.~5.1]{BDR2018}.

  To see that \eqref{item-epi-of-concrete-grps} implies \eqref{item-connected},
  we suppose that \(\B f\) is an epi in the category of pointed connected
  \(1\)-types and pointed maps, and show that is also an epi of \(1\)-types,
  which by \cref{characterization-of-k-epis,1-acyclic-characterization} is
  equivalent to being connected.
  Assume we are given a map \(g : \B G \to X\) whose codomain~\(X\) is a
  \(1\)-type. We are to show that \(g\) has at most one extension along
  \(\B f\) (recall \cref{extensions-as-fiber}).
  Now we can make \(g\) a pointed map by pointing \(X\) at
  \(\pt_X \colonequiv g(\pt_{\B G})\).
  By connectedness of \(\B G\), the map \(g\) factors through the connected
  component \(X_0 \colonequiv \sum_{x : X} \squash{x = \pt_X}\) of \(\pt_X\) as
  a map \(g_0 : \B G \to X_0\).
  Because \(\B H\) is connected and the maps \(\B f\) and \(g_0\) are pointed,
  the type of bare extensions of \(g\) along \(\B f\) is equivalent to
  \(\sum_{e : \B H \to X} (e \circ \B f \sim g) \times \prod_{z : \B
    H}\squash{e(z) = \pt_X}\), which is in turn equivalent to
  \(\sum_{e_0 : \B H \to X_0}(e_0 \circ \B f \sim g_0)\),
  i.e., the type of extensions of \(g_0\) along \(\B f\).
  Now the type of \emph{pointed} extensions of \(g_0\) along \(\B f\)
  is equivalent to the type
  \[
    \sum_{e_0 \,:\, \B H \to X_0}\,\sum_{r \,:\, {e_0(\pt_{\B H})} \,=\, {\pt_X}}\,
    \sum_{H \,:\, {e_0 \,\circ\, \B f} \,\sim\, g_0}
    (q \bullet H(\pt_{\B G}) \bullet \ap_{e_0}(p)^{-1} = r),
  \]
  which, by contracting \(r\) with its identification, is equivalent to the type
  of bare extensions of \(g_0\) along \(\B f\) as ordinary maps between
  (connected) types.
  Thus, if \(\B f : \B G \ptdto B H\) is epic in the category of pointed
  connected \(1\)-types, then \(\B f\) is 1-epic and hence connected (by
  \cref{characterization-of-k-epis,1-acyclic-characterization}).

  If \(\B f\) is connected, then \(f\) is surjective as a map of sets
  by~\cite[Lem.~4.11.4]{Symmetry}, so
  \(\eqref{item-connected} \Rightarrow \eqref{item-surjective}\).
  Finally, \eqref{item-surjective} straightforwardly implies \eqref{item-epi-of-grps}.
\end{proof}

\begin{remark}\label{flattening-avoids-excluded-middle}
  Many traditional proofs of the fact that the epimorphisms of groups are
  precisely the surjections rely on excluded middle. For instance, the suggested
  proof in Mac~Lane's~\cite[Exer.~I.5.5]{cwm} relies heavily on a case analysis that
  requires the law of excluded middle.
  A notable exception is Todd Trimble's proof~\cite{Trimble2020} which is
  constructive and uses group actions. The above gives a different constructive
  proof relying instead on deloopings of groups and flattening (via the
  characterization of \(k\)-epis in \cref{characterization-of-k-epis}).
\end{remark}



We now give an application of \cref{epi-of-groups-characterization} by
presenting several structural characterizations of a group being generated by a
subset.

\begin{corollary}
  Given an injection \(\iota : S \hookrightarrow G\) from a set \(S\) to a group
  \(G\), the following are equivalent:
  \begin{enumerate}[label=(\roman*)]
  \item\label{item-S-generates-G} \(S\) generates \(G\);
  \item\label{item-hom-eval} for every group \(H\), the map
    \begin{align*}
      \iota^\ast : \Grp(G,H) &\to (S \to H) \\
      \varphi &\mapsto (s \mapsto \varphi(\iota(s)))
    \end{align*}
    which evaluates a group homomorphism \(\varphi\) is an embedding;
  \item\label{item-induced-map-surjective} the map \(\hat\iota : \F S \to G\)
    from the free group generated by \(S\) to \(G\), induced by \(\iota\), is
    surjective;
  \item\label{item-induced-map-connected} the map
    \(\B{\hat\iota} : \BF S \to \B G\) is connected.
  \end{enumerate}
\end{corollary}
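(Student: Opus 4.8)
The plan is to reduce all four conditions to \cref{epi-of-groups-characterization} applied to the induced homomorphism $\hat\iota : \F S \to G$, so that essentially no new homotopy-theoretic work is required.

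First I would identify condition~\ref{item-hom-eval} with the assertion that $\hat\iota$ is an epimorphism of groups. By the universal property of the free group there is, for each group $H$, an equivalence $\Grp(\F S, H) \simeq (S \to H)$ given by precomposition with the unit $\eta_S : S \to \F S$, and $\hat\iota$ is the unique homomorphism satisfying $\hat\iota \circ \eta_S = \iota$. For a homomorphism $\varphi : G \to H$ we then have $\varphi \circ \hat\iota \circ \eta_S = \varphi \circ \iota$, which is exactly $\iota^\ast(\varphi)$; hence for each $H$ the evaluation map $\iota^\ast$ is the composite of the precomposition map $(\hat\iota)^\ast : \Grp(G,H) \to \Grp(\F S, H)$ with the equivalence above. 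Consequently $\iota^\ast$ is an embedding if and only if $(\hat\iota)^\ast$ is, and quantifying over all groups $H$ shows that condition~\ref{item-hom-eval} holds precisely when $\hat\iota$ is an epimorphism of groups.

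Next I would invoke \cref{epi-of-groups-characterization} for the homomorphism $\hat\iota$. Its listed equivalences give directly that $\hat\iota$ is an epimorphism of groups if and only if $\hat\iota$ is surjective as a map of sets (condition~\ref{item-induced-map-surjective}), and if and only if $\B\hat\iota : \BF S \to \B G$ is connected (condition~\ref{item-induced-map-connected}). Combined with the previous paragraph, this establishes the equivalence of~\ref{item-hom-eval}, \ref{item-induced-map-surjective} and~\ref{item-induced-map-connected}.

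It remains to connect condition~\ref{item-S-generates-G} to condition~\ref{item-induced-map-surjective}. Here I would use that the image of $\hat\iota$ is precisely the subgroup of $G$ generated by $\iota(S)$, so that $\hat\iota$ is surjective exactly when that subgroup is all of $G$, i.e.\ when $S$ \emph{generates} $G$. I expect this final step, rather than anything homotopy-theoretic, to be the only point requiring genuine care: one must fix the precise meaning of \emph{generates} adopted in the library and check that it coincides with surjectivity of $\hat\iota$. The substantive content of the corollary is otherwise entirely contained in \cref{epi-of-groups-characterization} and the universal property of the free group, and the injectivity of $\iota$ plays no essential role beyond letting us regard $S$ as a subset of $G$.
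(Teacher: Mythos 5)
Your proposal is correct, and its backbone is the same as the paper's: everything is funneled through \cref{epi-of-groups-characterization}, with \ref{item-S-generates-G} \(\iff\) \ref{item-induced-map-surjective} handled by the image-of-\(\hat\iota\) observation (the paper simply calls this equivalence ``clear''). The one genuine difference is how condition \ref{item-hom-eval} is connected to the rest. You stay entirely at the level of groups: factoring \(\iota^\ast\) as the precomposition map \((\hat\iota)^\ast : \Grp(G,H) \to \Grp(\F S,H)\) followed by the free-group equivalence \(\Grp(\F S,H) \simeq (S \to H)\), you identify \ref{item-hom-eval} with ``\(\hat\iota\) is an epi of groups'', i.e.\ item (i) of \cref{epi-of-groups-characterization}, and then quote that theorem. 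The paper instead transports \(\iota^\ast\) to the delooping level: it uses the equivalence \(\Grp(G,H) \simeq (\B G \ptdto \B H)\) of \cite[Thm.~5.1]{BDR2018} (together, implicitly, with the same free-group universal property for the other vertical leg \(({\BF S} \ptdto \B H) \simeq (S \to H)\)) to identify \ref{item-hom-eval} with ``\(\B{\hat\iota}\) is an epi of pointed connected \(1\)-types'', i.e.\ item (ii) of that theorem. So the two arguments enter the theorem through different items; yours is slightly more economical and purely algebraic for this step, while the paper's makes the delooping correspondence explicit, in keeping with its theme of treating groups via their classifying types. Your omission of a separate proof of \ref{item-S-generates-G} \(\Rightarrow\) \ref{item-hom-eval} is harmless (the paper proves it directly, but it is redundant given its other equivalences), and your remark that injectivity of \(\iota\) is never really used also matches the paper's proof.
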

\begin{proof}
  For \eqref{item-S-generates-G} \(\Rightarrow\) \eqref{item-hom-eval}, note
  that if every element of \(G\) is a finite combination of elements in \(S\)
  and their inverses, then a group homomorphism from \(G\) to another group
  \(H\) is completely determined by its effect on \(S\).
  Items \eqref{item-induced-map-surjective} and
  \eqref{item-induced-map-connected} are equivalent by
  \cref{epi-of-groups-characterization}.
  Moreover, \eqref{item-S-generates-G} and \eqref{item-induced-map-surjective}
  is are clearly equivalent.

  Finally, to see that \eqref{item-hom-eval} and
  \eqref{item-induced-map-connected} are equivalent, we consider the commutative
  diagram
  \[
    \begin{tikzcd}
      \Grp(G,H) \ar[r,"\iota^\ast"] \ar[d,"\simeq"]
      & (S \to H) \\
      (\B G \ptdto \B H) \ar[r,"(\B{\hat\iota})^\ast"]
      & ({\BF S} \ptdto \B H) \ar[u,"\simeq"']
    \end{tikzcd}
  \]
  where the vertical maps are equivalences by \cite[Thm.~5.1]{BDR2018}.
  This diagram tells us that \(\iota^\ast\) is an embedding if and only if
  \((\B{\hat\iota})^\ast\) is.
  But the latter happens exactly when \(\B{\hat\iota}\) is an epi in the
  category of pointed connected 1-types and pointed maps, which, as we saw in
  \cref{epi-of-groups-characterization}, is equivalent to \(\B{\hat\iota}\)
  being connected.
\end{proof}



\subsection{Characterizing \texorpdfstring{\(2\)}{2}-acyclic
  types}\label{sec:characterizing-2-acyclic-types}
The notion of $2$-acyclicity turns out to be closely related to perfect groups.
Most textbooks, e.g.~\cite[Exer.~19,Sec.~5.4]{AA2004}, define a group \(G\) to be
perfect if it equals its own commutator subgroup \(G'\).
Since the abelianization~\cite[Prop.~7,Sec.~5.4]{AA2004} of a group \(G\) is given
by the quotient \(G/G'\), we can reformulate perfectness as follows:

\begin{definition}[Perfectness]
  A group is \emph{perfect} if its abelianization is trivial.
\end{definition}

An example of a perfect group is the alternating group \(A_5\) on 5 generators.
Given a group \(G\), we recall from~\cite[Sec.~6]{BDR2018} that 2-truncating
the suspension of the classifying type \(\B G\) of \(G\) gives the classifying
type of the abelianization of \(G\) as an abelian group, i.e.\
\(\squash{\susp\B G}_2\) is the second delooping of its set of elements.
This immediately yields the following result:

\begin{proposition}\label{2-acyclic-iff-perfect}
  The classifying type of a group \(G\) is \(2\)\nobreakdash-acyclic if
  and only if \(G\) is perfect. \hfill\qedsymbol
\end{proposition}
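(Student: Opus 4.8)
The plan is to unfold the definition of $2$-acyclicity and feed in the computation of $\squash{\susp\B G}_2$ recalled immediately above the statement. By the definition of $k$-acyclicity, $\B G$ is $2$-acyclic precisely when $\susp\B G$ is $2$-connected, i.e.\ when $\squash{\susp\B G}_2 \simeq \One$. So the entire statement reduces to showing that $\squash{\susp\B G}_2$ is contractible if and only if $G$ is perfect.

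For this I would invoke the recalled fact from~\cite[Sec.~6]{BDR2018}, namely that $\squash{\susp\B G}_2$ is the second delooping of the abelianization of $G$. It then suffices to observe the elementary fact that the second delooping of an abelian group $A$ is contractible exactly when $A$ is trivial. If $A$ is trivial, all of its deloopings are contractible; conversely, if the second delooping is contractible, then $A \cong \Omega^2(\text{second delooping of } A) \cong \Omega^2\One \cong \One$, so that $A$ is trivial. Unwinding the definition of perfectness, $G$ is perfect exactly when its abelianization is trivial, so this is precisely what we need.

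Combining these observations, $\B G$ is $2$-acyclic if and only if the abelianization of $G$ is trivial, which is the definition of $G$ being perfect. I expect no genuine obstacle here: all the substantive content already sits in the cited delooping computation of~\cite{BDR2018}, and the only thing that remains to check by hand is the trivial-versus-contractible claim for deloopings of abelian groups, which is immediate from the fact that the double loop space of a second delooping recovers the group. This explains why the statement can be closed off with a one-line \emph{qedsymbol}.
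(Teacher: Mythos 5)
Your proof is correct and is exactly the argument the paper intends: the statement carries a \(\qedsymbol\) in the paper precisely because it is immediate from the recalled identification of \(\squash{\susp\B G}_2\) with the second delooping of the abelianization of \(G\), combined with the observation that such a delooping is contractible if and only if the group is trivial. Your unfolding of the definition of \(2\)-acyclicity and the double-loop-space argument \(A \cong \Omega^2 \B^2 A\) supplies exactly the elementary details the paper leaves implicit, so there is nothing to add.
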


\begin{remark}
  The classifying type of a group is always $1$-acyclic
  by~\cref{1-acyclic-characterization}.
\end{remark}

\begin{remark}\label{2-acyclic-but-not-1-connected}
  While \cref{k-equivalence-connected-acyclic} tells us that every
  \(k\)-equivalence is \(k\)-acyclic, the converse fails. In fact, even a
  \((k+1)\)-acyclic map need not be a \(k\)-equivalence as \(\B G \to \One\) is
  a 1-equivalence if and only if \(G\) is trivial, but it is a $2$-acyclic map if
  and only if \(G\) is perfect by~\cref{2-acyclic-iff-perfect}.
\end{remark}

\begin{theorem}[Characterization of $2$-acyclic types]\label{2-acyclic-type-char}
  A type \(A\) is $2$-acyclic if and only if \(A\) is connected and \(\pi_1(A,a)\)
  is perfect for every \(a : A\).
\end{theorem}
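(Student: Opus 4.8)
The plan is to reduce the statement to the fundamental group by passing to the $1$-truncation, where the already-established correspondence between pointed connected $1$-types and groups lets me invoke \cref{2-acyclic-iff-perfect}. The three main ingredients are: \cref{k+1-acyclic-iff-k-truncation}, which says that $A$ is $2$-acyclic if and only if $\squash{A}_1$ is; \cref{1-acyclic-characterization}, which identifies $1$-acyclicity with connectedness; and the fact recalled in the preliminaries that a pointed connected $1$-type is the classifying type $\B G$ of its loop group $G$. I will also use the standard equivalence $\Omega(\squash{A}_1, \tosquash{a}_1) \simeq \squash{\Omega(A,a)}_0 = \pi_1(A,a)$, so that whenever $A$ is connected the pointed type $\squash{A}_1$ is a connected $1$-type, hence precisely $\B(\pi_1(A,a))$.

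For the forward direction, I would first note that every $2$-acyclic type is $1$-acyclic, so \cref{1-acyclic-characterization} makes $A$ connected; in particular $A$ is inhabited and I may fix a point $a_0 : A$. The $1$-truncation $\squash{A}_1$ is then a pointed connected $1$-type, hence equivalent to $\B G$ for $G \colonequiv \pi_1(A,a_0)$. By \cref{k+1-acyclic-iff-k-truncation} the type $\squash{A}_1$ is $2$-acyclic, and since $2$-acyclicity is invariant under equivalence, $\B G$ is $2$-acyclic; \cref{2-acyclic-iff-perfect} then yields that $G = \pi_1(A,a_0)$ is perfect. To promote this to all basepoints, I would use that ``$\pi_1(A,a)$ is perfect'' is a family of propositions over the connected type $A$: a family of propositions over a connected type that holds at one point holds everywhere (transport along a merely-existing path, using that the goal is a proposition), and transport respects perfectness since it induces a group isomorphism.

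For the backward direction, connectedness of $A$ again supplies a point $a_0$, and the same identification gives $\squash{A}_1 \simeq \B(\pi_1(A,a_0))$. Since $\pi_1(A,a_0)$ is perfect by hypothesis, \cref{2-acyclic-iff-perfect} shows that $\B G$, and hence $\squash{A}_1$, is $2$-acyclic; then $A$ is $2$-acyclic by \cref{k+1-acyclic-iff-k-truncation}.

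The only work beyond citing these results is the bookkeeping around the $1$-truncation: checking that $\squash{A}_1$ is a \emph{connected} $1$-type and identifying its loop group with $\pi_1(A,a_0)$, together with the handling of the universally quantified perfectness condition via the connectedness argument above. I expect this last quantifier manipulation to be the most delicate point, although it is routine once one observes that perfectness is a proposition preserved by the isomorphisms induced by transport.
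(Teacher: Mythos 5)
Your proposal is correct and follows essentially the same route as the paper: the paper's proof also combines \cref{k+1-acyclic-iff-k-truncation}, the identification \(\B\pi_1(A,a) = \squash{A}_1\) for connected \(A\), and \cref{2-acyclic-iff-perfect}, with connectedness obtained from \(1\)-acyclicity. Your extra bookkeeping (verifying \(\squash{A}_1\) is a connected \(1\)-type and propagating perfectness to all basepoints by connectedness, since perfectness is a proposition invariant under the transport isomorphisms) is exactly the detail the paper leaves implicit.
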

\begin{proof}
  Note that connectedness is necessary, because 1-acyclic types are connected.
  Moreover, if \(A\) is connected, then \(\B \pi_1(A,a) = \squash{A}_1\) for
  every \(a : A\). By~\cref{k+1-acyclic-iff-k-truncation}, the type \(A\) is
  $2$-acyclic if and only if \(\squash{A}_1\) is. So
  by~\cref{2-acyclic-iff-perfect}, this happens exactly when \(\pi_1(A,a)\) is
  perfect.
\end{proof}

Connected maps preserve $2$-acyclicity:
\begin{corollary}
  If \(f : A \to B\) is connected and \(A\) is $2$\nobreakdash-acyclic, then so is
  \(B\) (which is equivalent to the image of~\(f\) by connectedness).
\end{corollary}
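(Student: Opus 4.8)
The plan is to reduce everything to \cref{2-acyclic-type-char}, which says that a type is $2$-acyclic exactly when it is connected and $\pi_1$ at every basepoint is perfect. Since $A$ is $2$-acyclic, it is in particular connected and each $\pi_1(A,a)$ is perfect, and my goal is to transport these two properties across the connected map $f$.

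First I would check that $B$ is connected. A connected map has $0$-connected, hence inhabited, fibers, so $f$ is surjective; combined with $A$ being connected this gives that $B$ is connected, since any two points of $B$ merely lift to points of $A$ which are merely equal, and $B$ is inhabited as $A$ is. This surjectivity is also exactly what makes the canonical map $\im(f) \to B$ an equivalence, which justifies the parenthetical remark in the statement. With $B$ now known to be connected, I would fix an arbitrary $b : B$ and, using surjectivity, pick some $a : A$ with $f(a) = b$; because being $2$-acyclic is a proposition, this choice is harmless.

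Next I would show that $\pi_1(B,b)$ is perfect. Because $f$ is $0$-connected, it induces a \emph{surjection} $\pi_1(f) : \pi_1(A,a) \to \pi_1(B,b)$ on fundamental groups; this is the standard fact that a $k$-connected map is surjective on $\pi_{k+1}$, read off from the long exact sequence of the fiber $\fib_f(b) \to A \to B$ whose $\pi_0$ is trivial. Now $\pi_1(A,a)$ is perfect, and a quotient of a perfect group is perfect: abelianization is functorial and preserves surjections, so a trivial abelianization of the domain forces a trivial abelianization of the codomain. Hence $\pi_1(B,b)$ is perfect.

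Combining the two facts that $B$ is connected and that $\pi_1(B,b)$ is perfect for every $b : B$, \cref{2-acyclic-type-char} immediately yields that $B$ is $2$-acyclic. I expect the only genuine content to be the transfer of perfectness, namely extracting the surjection $\pi_1(f)$ from the $0$-connectedness of $f$ and invoking that quotients of perfect groups are perfect; everything else is routine bookkeeping about connectedness. It is worth noting that a right-cancellation argument (factoring $A \to \One$ through $f$) does \emph{not} work here, since a $0$-connected map need not be $2$-acyclic—its fibers are connected but may have non-perfect fundamental groups—so passing through the characterization theorem is essential.
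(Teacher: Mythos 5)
Your proposal is correct and follows essentially the same route as the paper: reduce via \cref{2-acyclic-type-char} to connectedness of \(B\) plus perfectness of each \(\pi_1(B,b)\), obtain a point \(a:A\) over \(b\) from connectedness of \(f\), use the fact that a \(0\)-connected map is surjective on \(\pi_1\) (the paper cites \cite[Cor.~8.4.8(ii)]{HoTTBook}), and conclude since quotients of perfect groups are perfect. The closing remark about why right cancellation fails is a nice extra observation but does not change the argument.
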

\begin{proof}
  Note that \(B\) is connected, because \(A\) and \(f\) are. So
  by~\cref{2-acyclic-type-char} it suffices to prove that \(\pi_1(B,b)\) is
  perfect for every \(b : B\). Given \(b : B\), there exists \(a : A\) with
  \(f(a) = b\) as \(f\) is connected.
  By connectedness of \(f\) and~\cite[Cor.~8.4.8(ii)]{HoTTBook}, the map
  \(\pi_1(f,a) : \pi_1(A,a) \to \pi_1(B,b)\) is a surjection for every
  \(a : A\). But \(\pi_1(A,a)\) is perfect and any quotient of a perfect group
  is perfect, so \(\pi_1(B,b)\) is also perfect.
\end{proof}

The following gives a necessary condition on the fundamental group functor for
$2$-acyclicity:
\begin{proposition}\label{necessary-condition-for-2-acyclic-map}
  If \(f : A \to B\) is $2$-acyclic, then it is connected, and for every
  \(a : A\), we have that \(\ker(\pi_1(f,a))\) is perfect and the abelianization
  of \(\pi_1(f,a)\) is an isomorphism.
\end{proposition}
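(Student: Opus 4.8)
The plan is to extract all three claims from the long exact sequence of homotopy groups of the fiber sequence at each point, using that a $2$-acyclic map has $2$-acyclic fibers and that such fibers have perfect fundamental groups by \cref{2-acyclic-type-char}. First I would record connectedness: since $f$ is $2$-acyclic it is in particular $1$-acyclic, and $1$-acyclic maps are precisely the connected maps (the remark following \cref{1-acyclic-characterization}). Connectedness of $f$ then yields, via \cite[Cor.~8.4.8(ii)]{HoTTBook}, that $\pi_1(f,a)$ is surjective for every $a : A$, so that we obtain a short exact sequence $1 \to K \to \pi_1(A,a) \xrightarrow{\pi_1(f,a)} \pi_1(B,f(a)) \to 1$ with $K \colonequiv \ker(\pi_1(f,a))$.

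Next I would identify $K$ with the image of the fundamental group of the fiber. Writing $F \colonequiv \fib_f(f(a))$, pointed at $(a,\refl)$, exactness of the long exact sequence of homotopy groups \cite[\S8.4]{HoTTBook} at $\pi_1(A,a)$ shows that $K$ is the image of the induced homomorphism $\pi_1(F) \to \pi_1(A,a)$. Since $f$ is $2$-acyclic, its fiber $F$ is a $2$-acyclic type, so $\pi_1(F)$ is perfect by \cref{2-acyclic-type-char}. As the image of a perfect group under a homomorphism is again perfect, $K$ is perfect, which is the second assertion.

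It remains to show that the abelianization of $\pi_1(f,a)$ is an isomorphism. Surjectivity is immediate from surjectivity of $\pi_1(f,a)$. For injectivity I would observe that, because $K$ is perfect, $K = [K,K] \subseteq [\pi_1(A,a),\pi_1(A,a)]$, so the image of $K$ in the abelianization $\pi_1(A,a)^{\mathrm{ab}}$ is trivial; since the kernel of the induced map $\pi_1(A,a)^{\mathrm{ab}} \to \pi_1(B,f(a))^{\mathrm{ab}}$ is exactly this image (using $\pi_1(B,f(a)) \cong \pi_1(A,a)/K$), the map is injective, and hence an isomorphism.

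The main obstacle I expect is the careful bookkeeping of the long exact sequence in univalent foundations---pinning down the homomorphism $\pi_1(F) \to \pi_1(A,a)$ and its image with the correct basepoints---together with verifying the supporting group-theoretic facts (that the image of $K$ computes the kernel on abelianizations, and that a perfect normal subgroup lies in the commutator subgroup) constructively, since we wish to avoid classical principles.
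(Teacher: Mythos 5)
Your proof is correct, and for the first two claims (connectedness, and perfectness of the kernel via the long exact sequence and the $2$-acyclicity of the fibers from \cref{2-acyclic-type-char}) it coincides with the paper's argument. Where you diverge is the third claim: the paper deduces that $\pi_1(f,a)^{\mathrm{ab}}$ is an isomorphism by invoking right-exactness of abelianization (as a left adjoint it preserves colimits), so the exact sequence $\pi_1(F) \to \pi_1(A) \to \pi_1(B) \to \One$ abelianizes to an exact sequence in which $\pi_1(F)^{\mathrm{ab}}$ is trivial, forcing the middle map to be an isomorphism. You instead argue by hand: since $K \colonequiv \ker(\pi_1(f,a))$ is perfect, $K = [K,K] \subseteq [\pi_1(A),\pi_1(A)]$, so the image of $K$ in $\pi_1(A)^{\mathrm{ab}}$ vanishes, and since that image is precisely the kernel of $\pi_1(A)^{\mathrm{ab}} \to \pi_1(B)^{\mathrm{ab}}$ (using $\pi_1(B) \cong \pi_1(A)/K$, which needs the surjectivity of $\pi_1(f,a)$ you established from \cite[Cor.~8.4.8(ii)]{HoTTBook}), injectivity follows; surjectivity is immediate. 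Both routes are constructive and about equally short; the paper's categorical argument needs no identification of $\pi_1(B)$ as a quotient and generalizes immediately to any right-exact functor killing perfect groups, whereas yours is more elementary and makes explicit exactly which group-theoretic fact (a perfect subgroup dies in the abelianization) drives the conclusion. One small remark: you derive connectedness from $1$-acyclicity of $f$, while the paper reads it off from the characterization of $2$-acyclic types; these amount to the same thing.
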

\begin{proof}
  If \(f : A \to B\) is $2$-acyclic, then it is certainly connected by the
  characterization of $2$-acyclic types. Let \(a : A\) be arbitrary and write
  \(F \colonequiv \fib_f(f(a))\). We have an exact
  sequence~\cite[Sec.~8.4]{HoTTBook}
  \[
    \pi_1(F) \xrightarrow{\pi_1(i)} \pi_1(A) \xrightarrow{\pi_1(f,a)} \pi_1(B)
    \to \pi_0(F) \cong \One,
  \]
  where the equivalence at the end holds because \(f\) is connected.
  Now \(\ker(\pi_1(f,a)) = \im(\pi_1(i))\) and \(\pi_1(F)\) is perfect, because
  \(F\) is $2$-acyclic. But any quotient of a perfect group is perfect, and hence,
  \(\im(\pi_1(i)) = \ker(\pi_1(f))\) is perfect.
  Finally, because abelianization is right exact (being a left adjoint,
  abelianization preserves all colimits), the exact sequence induces another
  exact sequence
  \[
    \pi_1(F)^{\text{ab}} \xrightarrow{\pi_1(i)^{\text{ab}}} \pi_1(A)^{\text{ab}}
    \xrightarrow{\pi_1(f,a)^{\text{ab}}} \pi_1(B)^{\text{ab}} \to \One.
  \]
  But \(\pi_1(F)^{\text{ab}}\) is trivial since \(\pi_1(F)\) is perfect, so
  the middle map \(\pi_1(f,a)^\text{ab}\) is an isomorphism.
\end{proof}

We remark that the conditions in \cref{necessary-condition-for-2-acyclic-map} are
not \emph{sufficient} for deriving 2-acyclicity: for example, the base point map
\(\One \to \Sphere{2}\) satisfies the conditions of the proposition but is not
\(2\)-acyclic, as its fiber at the base point is \(\Omega\Sphere{2}\) whose
fundamental group \(\mathbb Z\) is not perfect.
At present, we do not know of a necessary and sufficient characterization.

\subsection{Iterated suspensions and stabilization}
As mentioned before, it is natural to consider variations on the notion of
\(k\)\nobreakdash-acyclicity where one instead requires the \(n\)-fold
suspension (for \(n \geq 2\)) to become \(k\)-connected.
We show that these notions suitably reduce to the notion of
\(k\)-acyclicity. More precisely, we have a stabilization result which says that
the \(n\)-fold suspension of a type \(X\) is \(k\)-connected if and only if
\(X\) is \((k - n + 1)\)-acyclic, for \(n \geq 1\) and \(k \geq 2\).

\begin{lemma}\label{suspension-2-acyclic-iff-type-connected}
  The suspension of a type \(X\) is 2-acyclic if and only if \(X\) is connected.
\end{lemma}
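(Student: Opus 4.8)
The plan is to recast the statement as the equivalence ``$\susp X$ is $2$-acyclic $\iff X$ is $1$-acyclic'' and then invoke \cref{1-acyclic-characterization}, which identifies $1$-acyclicity with connectedness. For the easy direction, \emph{$X$ connected $\Rightarrow \susp X$ is $2$-acyclic}, I would argue directly: if $X$ is connected then it is $1$-acyclic by \cref{1-acyclic-characterization}, which \emph{by definition} says that $\susp X$ is $1$-connected; applying \cref{k-equivalence-connected-acyclic} (every $k$-connected type is $(k+1)$-acyclic) with $k=1$ then yields that $\susp X$ is $2$-acyclic. So this direction is immediate from results already available.

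The converse, \emph{$\susp X$ is $2$-acyclic $\Rightarrow X$ is connected}, carries the real content. First I would extract that $X$ is inhabited: a $2$-acyclic type is in particular $1$-acyclic, hence connected by \cref{1-acyclic-characterization}, and a suspension is connected exactly when its base is inhabited. Since being connected is a proposition and $X$ is inhabited, it then suffices to show that the pointed set $\squash{X}_0$ is contractible. Here I would pass to the $1$-truncation: by \cref{k+1-acyclic-iff-k-truncation} the type $\squash{\susp X}_1$ is again $2$-acyclic, and because $\susp X$ is connected this $1$-truncation is the classifying type $\B(\pi_1(\susp X))$ of a group, so \cref{2-acyclic-iff-perfect} forces $\pi_1(\susp X)$ to be perfect. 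Moreover, since the unit $X \to \squash{X}_0$ is $0$-connected its suspension is $1$-connected, giving $\pi_1(\susp X) \cong \pi_1(\susp \squash{X}_0)$, the fundamental group of the suspension of a \emph{set}.

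The main obstacle is this final group-theoretic input: identifying $\pi_1(\susp \squash{X}_0)$ with the free group $\F(\squash{X}_0)$ on the pointed set $\squash{X}_0$, and observing that a free group is perfect only when it is trivial, since its abelianization is free abelian and hence vanishes exactly when there are no generators. This pins down $\squash{X}_0$ as a singleton, i.e.\ $X$ as connected, completing the converse. Notably, this step rests on the same ingredient used in \cref{1-acyclic-set-iff-contractible}, namely the left-cancellability (injectivity) of the unit of the free group; I would therefore either cite the identification of the loop space of the suspension of a set with the corresponding free group, or reprove the needed special case in the spirit of \cref{1-acyclic-set-iff-contractible}.
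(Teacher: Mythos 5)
Your overall architecture is sound and runs parallel to the paper's proof, just one dimension down: the paper works with \(\pi_2(\susp^2 X)\) and free \emph{abelian} groups, whereas you work with \(\pi_1(\susp X)\) and free groups. Your forward direction is fine, and the reductions in the converse (inhabitedness, passing to \(\squash{X}_0\), perfectness of \(\pi_1(\susp X)\) via \cref{k+1-acyclic-iff-k-truncation} and \cref{2-acyclic-iff-perfect}) are all workable, modulo two facts you would still need to supply: that the suspension of a \(0\)-connected map is \(1\)-connected (the paper avoids this by going to \(\susp^2\) and using only the connectivity of domain and codomain plus the long exact sequence), and that \(\pi_1(\susp A)\) for a pointed set \(A\) is the free group on that pointed set (true, but requiring a van Kampen-style argument; the paper instead cites \cite[Sec.~6]{BDR2018} for the analogous \(\pi_2(\susp^2 A)\) statement).

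The genuine gap is in your final step, and it is precisely where the real content of the lemma lives. You write that the abelianization of the free group is free abelian ``and hence vanishes exactly when there are no generators,'' and you propose to justify this with the left-cancellability of the unit \(A \to \F A\) of the \emph{free group}, as used in \cref{1-acyclic-set-iff-contractible}. That ingredient does not suffice. Perfectness of \(\F(\squash{X}_0,\pt)\) says only that its abelianization---the free abelian group on the pointed set \(\squash{X}_0\)---is trivial; to conclude that \(\squash{X}_0\) is a singleton you need the unit of the free \emph{abelian} group functor on pointed sets to be injective (so that \(\squash{X}_0\) embeds into the trivial group). Injectivity of \(A \to \F A\) is of no help after abelianizing: an embedding into \(\F A\) says nothing about the composite \(A \to \F A \to (\F A)^{\mathrm{ab}}\). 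Constructively this is a serious issue, not a formality: without decidable equality on \(\squash{X}_0\) one cannot, for instance, build a homomorphism to \(\Z\) sending a given generator to \(1\) and the base point to \(0\), which is how the classical ``no generators'' argument secretly runs. This free-abelian injectivity is exactly the statement the paper spends most of its proof establishing, via Harting's filtered-colimit argument (writing the pointed set as a filtered colimit of finite standard pointed sets, where decidable equality makes the unit maps injective, and using that filtered colimits commute with finite limits). So your proof is not complete as proposed, and cannot be completed by citing or imitating \cref{1-acyclic-set-iff-contractible}; it would need the same filtered-colimit argument the paper gives, applied to the abelianization of your free group.
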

\begin{proof}
  If \(X\) is connected, then \(\susp X\) is 2-acyclic since suspensions increase
  connectedness by one~\cite[Thm~8.2.1]{HoTTBook}.
  For the converse, suppose that \(\susp X\) is 2-acyclic. In particular,
  \(\pi_2(\susp^2 X)\) is trivial. The unit of the set truncation
  \(X \to \squash{X}_0\) induces a map \(\susp^2 X \to \susp^2 {\squash{X}_0}\)
  whose fibers are \(1\)-connected because its domain is \(1\)-connected (as
  \(X\) is inhabited) and its codomain is \(2\)-connected. The long exact
  sequence~\cite[Sec.~8.4]{HoTTBook} of this map then tells us that
  \(\pi_2(\susp^2 {\squash{X}_0})\) is also trivial.
  By~\cite[Sec.~6]{BDR2018}, the abelian group \(\pi_2(\susp^2 {\squash{X}_0})\) is
  in fact the free abelian group on the \emph{pointed} set \(\squash{X}_0\).
  Below we describe an adaptation (to pointed sets) of an argument due to David
  W\"arn~\cite{UnitFreeAbelianGroup} to prove that the unit of the
  free-forgetful adjunction between abelian groups and pointed sets is
  injective. This implies that \(\squash{X}_0\) injects into the trivial group
  \(\pi_2(\susp^2 {\squash{X}_0})\), showing that \(X\) is indeed connected.

  The central idea, going back to Roswitha Harting~\cite{Harting1982} and
  recently also used in homotopy type theory by Jarl Taxer\r{a}s Flaten
  in~\cite{TaxerasFlaten2022}, is to regard a (pointed) set as a filtered
  colimit.
  For a pointed set \(\pt_X : X\) we consider the following category \(I_X\):
  Its objects are pairs \((n,f)\) with \(n\) a natural number and
  \(f : [n] \ptdto X\), where \([n]\) is the standard \((n + 1)\)-element
  set pointed at \(0\).
  A morphism between such objects \((n,f)\) and \((m,g)\) is a pointed map
  \(p : [n] \ptdto [m]\) such that \(f = g \circ p\).
  One can show that \(I_X\) is a filtered category and that
  \(X \cong \colim_{(n,f) : I_X}[n]\).

  The functor \(F\) that produces the free abelian group on a pointed set
  preserves colimits, as it is a left adjoint, so
  \(FX \cong \colim_{(n,f) : I_X}[n]\).
  The forgetful functor sending an abelian group to its underlying set preserves
  filtered colimits~\cite[Prop.~2.13.5]{Borceux1994} and one can check that the
  forgetful functor from pointed sets to sets creates limits (we
  adopt~\cite[Def.~3.3.1]{Riehl2016}), so that the forgetful functor \(U\)
  sending an abelian group to its underlying set pointed at the neutral element
  also preserves filtered colimits.
  Hence, \(UF X \cong \colim_{(n,f) : I_X}UF[n]\).

  Because each \([n]\) has decidable equality, we can directly check that the
  unit maps \([n] \to UF[n]\) are all injective.
  But filtered colimits commute with finite limits in sets (see
  \cite[Thm.~2.13.4]{Borceux1994} or \cite[Thm.~3.8.9]{Riehl2016}) and the
  forgetful functor from pointed sets to sets creates such colimits and limits,
  so they also commute in pointed sets.
  Thus, since monos can be characterized using pullbacks, the unit map
  \(X \to UFX\) must also be injective, as desired.
\end{proof}

\begin{proposition}\label{k-acyclic-stabilization}
  For natural numbers \(n \geq 1\) and \(k \geq 2\), the \(n\)-fold suspension
  \(\susp^n X\) of a type \(X\) is \(k\)-connected if and only if X is
  \((k - n + 1)\)-acyclic.
\end{proposition}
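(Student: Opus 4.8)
The plan is to recast everything in terms of acyclicity and reduce to a single de-suspension step. Since a type $Z$ is $k$-acyclic exactly when $\susp Z$ is $k$-connected, the hypothesis that $\susp^n X$ is $k$-connected says precisely that $\susp^{n-1} X$ is $k$-acyclic (here I use $n \geq 1$). So the proposition is equivalent to the statement that $\susp^{n-1}X$ is $k$-acyclic if and only if $X$ is $(k-n+1)$-acyclic, which is the case $p = n-1$ of the more symmetric claim that for every $p \geq 0$ and every integer $m$,
\[
  \susp^p X \text{ is } m\text{-acyclic} \quad\Longleftrightarrow\quad X \text{ is } (m-p)\text{-acyclic}.
\]
I would prove this by induction on $p$. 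The base case $p = 0$ is a tautology, and the inductive step is an immediate application---at the index $m-1$---of the following single de-suspension lemma: \emph{for every type $Y$ and every integer $m$, the suspension $\susp Y$ is $m$-acyclic if and only if $Y$ is $(m-1)$-acyclic}. Indeed, writing $\susp^{p+1}X = \susp(\susp^p X)$, the lemma turns $m$-acyclicity of $\susp^{p+1}X$ into $(m-1)$-acyclicity of $\susp^p X$, to which the inductive hypothesis (taken for all integers $m$) applies.

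It remains to prove the de-suspension lemma, which unfolds to the assertion that $\susp^2 Y$ is $m$-connected if and only if $\susp Y$ is $(m-1)$-connected. The forward implication holds for every $m$ because suspension raises connectedness by one \cite[Thm.~8.2.1]{HoTTBook}: if $\susp Y$ is $(m-1)$-connected, then $\susp^2 Y$ is $m$-connected. The converse is the substantial direction, and I would argue it by cases on $m$. For $m \geq 2$: if $\susp^2 Y$ is $m$-connected then it is in particular $2$-connected, so \cref{suspension-2-acyclic-iff-type-connected} forces $Y$ to be connected, whence $\susp Y$ is simply connected by \cite[Thm.~8.2.1]{HoTTBook}; as $\susp Y$ is also $m$-acyclic, \cref{k-acyclic-iff-connected} (applied with $k = m-1 \geq 1$) gives that $\susp Y$ is $(m-1)$-connected. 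For $m \leq 0$ both sides hold vacuously, since $\susp^2 Y$ is always connected (being the suspension of the inhabited type $\susp Y$) and $\susp Y$ is always inhabited, so $\susp Y$ is $m$-acyclic and $Y$ is $(m-1)$-acyclic regardless. The remaining case $m = 1$ collapses to inhabitedness on both sides: $\susp Y$ is $1$-acyclic iff $\susp Y$ is connected (\cref{1-acyclic-characterization}) iff $Y$ is merely inhabited, while $Y$ is $0$-acyclic iff $Y$ is $(-1)$-connected, i.e.\ again merely inhabited.

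The heart of the argument is the case $m \geq 2$ of the de-suspension lemma, and the main obstacle there is the genuine loss of a suspension: one cannot in general recover connectedness of $Y$ from connectedness of $\susp^2 Y$. This is exactly where \cref{suspension-2-acyclic-iff-type-connected}---the preceding lemma, resting on W\"arn's injectivity result for the free-abelian-group unit---does the essential work, upgrading $2$-connectedness of the double suspension to connectedness of $Y$; once $\susp Y$ is known to be simply connected, the Freudenthal-based \cref{k-acyclic-iff-connected} propagates the connectedness downward. I expect that recasting the statement through acyclicity rather than raw connectedness is what keeps the bookkeeping honest: the potentially delicate low-degree behaviour (in particular the passage from $\susp^2 Y$ being simply connected to $Y$ being merely inhabited) dissolves into the already-established characterizations of $0$- and $1$-acyclic types, rather than requiring a separate fundamental-group computation.
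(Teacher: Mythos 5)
Your proof is correct, and its mathematical core coincides with the paper's: the only substantive step is the single de-suspension, obtained by combining \cref{suspension-2-acyclic-iff-type-connected} (to recover connectedness of the underlying type from $2$-acyclicity of its suspension) with the Freudenthal-based \cref{k-acyclic-iff-connected} (to push connectedness down), and an induction then handles general $n$. Indeed, the paper's case $n = 2$ is literally your de-suspension lemma at levels $m \geq 2$, and for $n > 2$ the paper applies that case to $\susp^{n-2}X$ and then its inductive hypothesis at the pair $(n-1,k-1)$, just as you apply yours at level $m-1$. The one place where your packaging genuinely buys something is at low truncation levels: by proving the de-suspension lemma for all $m$ (cases $m \leq 0$ trivial, $m = 1$ via the characterizations of $0$- and $1$-acyclic types), your induction never leaves the range in which its hypotheses are available, whereas the paper's induction invokes its hypothesis at level $k-1$, which for $k = 2$ falls below the proposition's stated bound $k \geq 2$. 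That corner is harmless---when $k = 2$ and $n \geq 3$ the desired conclusion is either trivial ($(3-n)$-acyclicity for $n \geq 4$) or follows from \cref{1-acyclic-characterization} (for $n = 3$)---but your formulation is the one under which the induction closes without further comment.
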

\begin{proof}
  If \(X\) is \((k - n + 1)\)-acyclic, then \(\susp X\) is
  \((k - n + 1)\)-connected, so that \(\susp^n X\) is
  \(k = (k - n + 1 + (n - 1))\)-connected by~\cite[Thm.~8.2.1]{HoTTBook}.

  For the converse, note that the \(n = 1\) case holds by definition.
  For \(n = 2\), suppose that \(\susp^2 X\) is \(k\)-connected. Then \(\susp X\)
  is \(k\)-acyclic, so by \cref{suspension-2-acyclic-iff-type-connected} and the
  fact that \(k \geq 2\), we see that \(X\) is connected.
  But then \(\susp X\) is simply connected and by \cref{k-acyclic-iff-connected}
  even \((k-1)\)-connected. Hence \(X\) is \(k - 1 = (k - 2 + 1)\)-acyclic, as
  we wished to show.

  Now suppose that \(n > 2\) and that \(\susp^n X\) is \(k\)-connected. Since
  \(\susp^n X \simeq \susp^2\susp^{n-2} X\), we see that \(\susp^{n-2}X\) is
  \(k - 2 + 1 = (k - 1)\)-acyclic, i.e., \(\susp^{n-1}X\) is
  \((k-1)\)-connected. So by induction hypothesis, \(X\) is
  \(k - 1 - (n - 1) + 1 = (k - n + 1)\)-acyclic, as desired.
\end{proof}

\section{The plus principle}\label{sec:plus}

From the definition of epimorphisms, we know that the type of extensions of a
map $f' : A \to X$ along an epimorphism $f: A \to B$,
\[
  \begin{tikzcd}
    A \ar[d,"f"']\ar[r,"f'"] & X \\
    B, \ar[ur,dashed,"h"']
  \end{tikzcd}
\]
is a proposition: indeed, it is the fiber at $f'$ of the precomposition
embedding $f^* : (B \to X) \hookrightarrow (A \to X)$.  It is then a natural
question to ask for equivalent reformulations of this proposition that might be
easier to check.  First we observe that a necessary condition is
$\ker(\pi_1(f)) \subseteq \ker(\pi_1(f'))$, either in the sense of inclusion
among congruence relations $\Trunc A_1\times\Trunc A_1 \to \Set$, or inclusions
of subgroups for each $a:A$.

We don't know whether this is sufficient in general. However, there is a
seemingly quite innocuous assumption under which it is, which we dub the
\emph{plus principle} (PP):
\begin{principle}[\flinkspec{synthetic-homotopy-theory}{plus-principle}{definition} PP]
  Every acyclic and simply connected type is contractible.
\end{principle}
Hoyois highlighted this in the context of Grothendieck
$(\infty,1)$-topoi~\cite[Rem.~4]{Hoyois2019}, and it seems to be open whether
it's true in general in that context.  It follows from \emph{Whitehead's
principle}~\cite[Sec.~8.8]{HoTTBook} (every infinitely connected type is
contractible, a.k.a.\ \emph{hypercompleteness}) by the following:
\begin{proposition}\label{lem:acyclic-simply-connected}
  Any acyclic and simply connected type is infinitely connected.
\end{proposition}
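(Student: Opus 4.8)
The plan is to read off infinite connectedness directly from \cref{k-acyclic-iff-connected}, which already packages the Freudenthal-based bootstrapping. Concretely, I want to show that our type $A$ is $k$-connected for every $k$, and the crucial observation is that full acyclicity supplies $(k+1)$-acyclicity at every level simultaneously.

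First I would unpack the acyclicity hypothesis. Since $A$ is acyclic, its suspension $\susp A$ is contractible (\cref{def:acyclic}), and contractible types are $n$-connected for all $n$ (the $n$-truncation of a contractible type is again contractible). In particular $\susp A$ is $(k+1)$-connected for every $k$, which is exactly the assertion that $A$ is $(k+1)$-acyclic for every $k$. So from a single acyclic type we extract the entire tower of $k$-acyclicity statements at once.

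Next I would feed this into \cref{k-acyclic-iff-connected}, which states that a \emph{simply connected} type is $(k+1)$-acyclic if and only if it is $k$-connected. Applying it to our fixed simply connected $A$ --- whose simple connectedness is a standing hypothesis, independent of $k$ --- the $(k+1)$-acyclicity just established yields that $A$ is $k$-connected, for each $k$. Since this holds for all $k$, the type $A$ is infinitely connected, as desired.

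The genuinely substantive work lives not in this argument but inside \cref{k-acyclic-iff-connected}, whose proof uses the Freudenthal suspension theorem to climb the connectivity ladder one rung at a time. Once that proposition is granted, the only thing left to verify here is the elementary passage from ``acyclic'' to ``$(k+1)$-acyclic for all $k$'', so I anticipate no real obstacle. One could instead present the same reasoning as an explicit induction on $k$ --- $A$ is $1$-connected by hypothesis, and $(k+1)$-acyclicity together with $k$-connectedness upgrades to $(k+1)$-connectedness --- but invoking \cref{k-acyclic-iff-connected} directly renders the induction superfluous, since that proposition already absorbs the inductive step.
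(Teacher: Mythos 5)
Your proof is correct and is essentially identical to the paper's: the paper's entire proof is a one-line appeal to \cref{k-acyclic-iff-connected}, and your argument simply spells out the routine unpacking (acyclic $\Rightarrow$ suspension contractible $\Rightarrow$ $(k+1)$-acyclic for all $k$) that this appeal implicitly uses. No gaps; you have just made explicit what the paper leaves to the reader.
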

\begin{proof}
  This follows directly from~\cref{k-acyclic-iff-connected}.
\end{proof}

\begin{remark}[Anel]\label{Whitehead-plus-principles}
  While Whitehead's principle does \emph{not} hold in the \(\infty\)-topos of
  parametrized spectra (an object is hypercomplete if and only if the spectrum
  part is trivial), the plus principle \emph{does} hold there, as observed by
  Mathieu Anel (private communication).
  The outline of his argument is as follows: We write \(\mathsf S\),
  \(\mathsf {Sp}\) and \(\mathsf {PSp}\) for the \(\infty\)-categories of
  spaces, spectra, and parametrized spectra, respectively.
  The canonical functors \(\mathsf S \to \mathsf {PSp} \to \mathsf S\) are
  both left and right adjoint to each other and hence both preserve suspensions
  (as well as \(n\)-connected/truncated objects).
  The inclusion functor \(\mathsf {Sp} \to \mathsf {PSp}\) preserves weakly
  contractible colimits and hence suspensions.  Moreover, the suspension functor
  in \(\mathsf {Sp}\) is an equivalence.
  Now, if \(E\) is an object of \(\mathsf {PSp}\) and \(B\) is its image in
  \(\mathsf S\) by \(\mathsf{PSp} \to \mathsf S\) (its base), then \(B\) is
  respectively acyclic and simply-connected if \(E\) is.
  Thus, if \(E\) is acyclic and simply-connected, then \(B \simeq \One\), and
  thus \(E\) is a spectrum. But then \(\susp E \simeq \One\) implies that
  \(E \simeq \Zero\) as a spectrum (i.e., \(E\) is terminal in
  \(\mathsf {PSp}\)). So all acyclic simply connected objects in
  \(\mathsf {PSp}\) are terminal.
\end{remark}

From the plus principle itself we can deduce an analogous result for maps. We add
(PP) to indicate that the result assumes the plus principle.
\begin{lemma}[PP]
  Any acyclic $1$-equivalence is an equivalence.
\end{lemma}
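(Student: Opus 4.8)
The plan is to show that every fiber of \(f\) is contractible, since a map is an equivalence exactly when all its fibers are contractible. So fix \(b : B\) and write \(F \colonequiv \fib_f(b)\). Because \(f\) is acyclic, \(F\) is an acyclic type, and since a contractible type is \(2\)-connected, \(F\) is in particular \(2\)-acyclic. By \cref{2-acyclic-type-char} this already tells us that \(F\) is connected and that \(\pi_1(F,a_0)\) is perfect for every \(a_0 : F\). The whole strategy is then to upgrade ``connected with perfect fundamental group'' to ``simply connected'', so that the plus principle becomes applicable and forces \(F\) to be contractible.

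To obtain simple connectedness I would exploit the as-yet-unused hypothesis that \(f\) is a \(1\)-equivalence. Since the \(1\)-truncation preserves \(\pi_0\) and \(\pi_1\)~\cite{HoTTBook}, the map \(\Trunc{f}_1\) being an equivalence implies that \(f\) induces an isomorphism on \(\pi_1\) at every basepoint. As \(F\) is connected it is inhabited, so I can choose a point \(a_0 : F\) (equivalently \(a_0 : A\) with \(f(a_0) = b\)) and run the long exact sequence of homotopy groups~\cite[Sec.~8.4]{HoTTBook} of the fiber sequence \(F \to A \xrightarrow{f} B\). Injectivity of \(\pi_1(f)\) makes the map \(\pi_1(F) \to \pi_1(A)\) the trivial homomorphism, so by exactness the connecting map \(\partial : \pi_2(B) \to \pi_1(F)\) is surjective. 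Being a group homomorphism onto \(\pi_1(F)\) whose domain \(\pi_2(B)\) is abelian, it exhibits \(\pi_1(F)\) as a quotient of an abelian group, so \(\pi_1(F)\) is abelian.

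The key point is now purely group-theoretic: \(\pi_1(F)\) is simultaneously \emph{perfect} (from acyclicity, via \cref{2-acyclic-type-char}) and \emph{abelian} (from the \(1\)-equivalence hypothesis, via the long exact sequence), and an abelian perfect group is trivial. Hence \(F\) is simply connected. Since \(F\) is also acyclic, the plus principle yields that \(F\) is contractible. As \(b : B\) was arbitrary, every fiber of \(f\) is contractible, so \(f\) is an equivalence.

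The main obstacle is precisely the step establishing that the fibers are simply connected. The \(1\)-equivalence hypothesis only constrains \(\pi_0\) and \(\pi_1\) of the \emph{total} spaces, and a priori it leaves \(\pi_1(F)\) realized as the cokernel of \(\pi_2(f) : \pi_2(A) \to \pi_2(B)\), which need not vanish for a general \(1\)-equivalence. What rescues the argument is the acyclicity input: it forces this cokernel---already abelian because it is a quotient of \(\pi_2(B)\)---to be perfect as well, and abelian-plus-perfect collapses it to the trivial group. One should be careful that the connecting map is a genuine group homomorphism and that the perfectness of \(\pi_1\) of an acyclic type is legitimately extracted from \(2\)-acyclicity, but both facts are available from the earlier development, so no new machinery is needed.
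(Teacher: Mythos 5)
Your proposal is correct and follows essentially the same route as the paper: both show each fiber \(F\) is contractible by combining perfectness of \(\pi_1(F)\) (from \(2\)-acyclicity, via \cref{2-acyclic-type-char}) with abelianness of \(\pi_1(F)\) (from the long exact sequence, since \(\pi_1(f)\) is an isomorphism forces the connecting map \(\pi_2(B) \to \pi_1(F)\) to be surjective), concluding triviality of \(\pi_1(F)\) and then invoking the plus principle. The only difference is presentational: you spell out the inhabitedness of \(F\) and the quotient-of-abelian argument slightly more explicitly than the paper does.
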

\begin{proof}
  Consider an acyclic $1$-equivalence $f : A \to B$. We show that each fiber is
  contractible, so let $b:B$ be given, and let $F$ be the fiber of $f$ at $b$:
  \[
    \begin{tikzcd}
      F \ar[d]\ar[r,"g"]\pbcorner & A \ar[d,"f"] \\
      \One \ar[r,"b"'] & B
    \end{tikzcd}
    \hspace{-1cm}
    \begin{tikzcd}[column sep=6mm]
      & \dots \ar[r] & \pi_2(B) \ar[out=-20,in=160,"\delta"']{dll} \\
      \pi_1(F) \ar[r,"\pi_1(g)"]
      & \pi_1(A) \ar[r,"\pi_1(f)"] & \pi_1(B) \ar[out=-20,in=160]{dll} \\
      \pi_0(F) = \One
    \end{tikzcd}
  \]
  From the displayed fragment of the long exact sequence~\cite[Sec.~8.4]{HoTTBook}
  (relative to any base point of $F$) we have
  $\im(\delta) = \ker(\pi_1(g)) = \pi_1(F)$, since $\pi_1(f)$ is an
  isomorphism. Thus, $\pi_1(F)$ is abelian as well as perfect
  (by~\cref{2-acyclic-type-char}) and hence trivial.  By (PP) it follows
  that $F$ is contractible.
\end{proof}

We show that the necessary condition identified above is sufficient under
(PP):
\begin{proposition}[PP]\label{acyclic-extension}
  Let $f : A \to B$ be acyclic and ${f' : A \to X}$ any map. Then $f'$ extends
  along $f$ if and only if we have the inclusion
  $\ker(\pi_1(f)) \subseteq \ker(\pi_1(f'))$.
\end{proposition}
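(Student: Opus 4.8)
The plan is to prove the two directions separately, with only the sufficiency direction (kernel inclusion $\Rightarrow$ extension) relying on (PP). The necessity direction is immediate and needs neither acyclicity nor (PP): if $h : B \to X$ satisfies $h \circ f \sim f'$, then functoriality of $\pi_1$ gives $\pi_1(f',a) = \pi_1(h,f(a)) \circ \pi_1(f,a)$ at every $a : A$, so any loop class killed by $\pi_1(f,a)$ is killed by $\pi_1(f',a)$, i.e.\ $\ker(\pi_1(f,a)) \subseteq \ker(\pi_1(f',a))$.

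For sufficiency I would first reduce to the connected, pointed case. Since $f$ is epic (by \cref{acyclic-characterization}), the type of extensions of $f'$ along $f$ is the fiber of the embedding $f^\ast$, hence a proposition, so I may freely choose basepoints and restrict codomains while proving it inhabited. An acyclic type is connected (acyclic $\Rightarrow$ $1$\nobreakdash-acyclic $\Rightarrow$ connected by \cref{1-acyclic-characterization}), and the fibers of $f$, being acyclic, are connected and in particular inhabited, so $f$ is surjective; thus both $A$ and $B$ are connected. As $A$ is connected, $f'$ lands in a single connected component $X_0$ of $X$, and an extension into $X_0$ composes with the inclusion to give one into $X$; so I may assume $X$ connected and all of $A,B,X$ pointed compatibly.

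The heart of the argument is a pushout construction. Form $P \colonequiv B +_A X$, the pushout of $B \xleftarrow{f} A \xrightarrow{f'} X$, with legs $i : B \to P$ and $j : X \to P$ and a homotopy $i \circ f \sim j \circ f'$. Because $j$ is the cobase change of the acyclic map $f$, it is again acyclic by \cref{acyclic-pullback-and-pushout-stable}; in particular $j$ is surjective, so $P$ is connected. If $j$ is an equivalence, then $h \colonequiv j^{-1} \circ i : B \to X$ satisfies $h \circ f = j^{-1} \circ i \circ f \sim j^{-1} \circ j \circ f' \sim f'$, which is exactly the desired extension. To show $j$ is an equivalence I invoke the preceding lemma (under (PP)) that an acyclic $1$-equivalence is an equivalence; since $j$ is already acyclic, it suffices to prove $j$ is a $1$-equivalence, and as $X$ and $P$ are connected this amounts to showing that $\pi_1(j)$ is an isomorphism.

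It remains to compute $\pi_1(j)$, and this is the step I expect to be the main obstacle. By the Seifert–van Kampen theorem \cite[Sec.~8.7]{HoTTBook}, $\pi_1(P)$ is the pushout of groups of $\pi_1(B) \xleftarrow{\pi_1(f)} \pi_1(A) \xrightarrow{\pi_1(f')} \pi_1(X)$. The long exact sequence of the acyclic map $f$, whose fibers are connected, shows $\pi_1(f)$ is surjective with kernel $N \colonequiv \ker(\pi_1(f))$, identifying $\pi_1(B) \cong \pi_1(A)/N$. Pushing out a group along a surjection then identifies $\pi_1(P)$ with the quotient of $\pi_1(X)$ by the normal closure of $\pi_1(f')(N)$, with $\pi_1(j)$ the quotient map. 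The hypothesis $N = \ker(\pi_1(f)) \subseteq \ker(\pi_1(f'))$ forces $\pi_1(f')(N)$ to be trivial, so $\pi_1(j)$ is an isomorphism, as required. The delicate points here are applying van Kampen correctly to read off the amalgamated product, and, crucially, recognizing $\pi_1(j)$ as precisely the quotient map whose kernel is the normal closure of $\pi_1(f')(\ker\pi_1(f))$; the surjectivity of $\pi_1(f)$ coming from the fiber long exact sequence is what turns the amalgamated product into this clean quotient.
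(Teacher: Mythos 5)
Your proof takes essentially the same route as the paper's: form the pushout $P \colonequiv B +_A X$, observe that the pushout leg $j : X \to P$ is acyclic by \cref{acyclic-pullback-and-pushout-stable}, show $j$ is a $1$-equivalence via a Seifert--van Kampen computation, conclude from the (PP) lemma that $j$ is an equivalence, and take $j^{-1} \circ i$ as the extension; the necessity direction is identical. Your group-theoretic step differs slightly but harmlessly: you identify $\pi_1(P)$ as the quotient of $\pi_1(X)$ by the normal closure of $\pi_1(f')(\ker \pi_1(f))$, using the description of a group pushout along a surjection, whereas the paper never computes $\pi_1(P)$ at all --- it only uses the universal property of the group pushout to produce a retraction $\psi$ of $\pi_1(j)$ from the map $\pi_1(B) \cong \pi_1(A)/\ker(\pi_1(f)) \to \pi_1(A)/\ker(\pi_1(f')) \to \pi_1(X)$, and combines injectivity with the surjectivity of $\pi_1(j)$ coming from acyclicity of $j$. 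Both arguments are correct; the paper's is marginally more economical since it needs no structure theory of amalgamated products.

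There is, however, one genuinely false step in your reduction: ``the fibers of $f$, being acyclic, are connected and in particular inhabited, so $f$ is surjective; thus both $A$ and $B$ are connected.'' Connectedness of $A$ does not follow from acyclicity of $f$: acyclicity constrains the fibers, not the domain, and, for instance, the identity map of $\Two$ is acyclic (every equivalence is) with disconnected domain. This claim is load-bearing for you: connectedness of $A$ is what licenses corestricting $f'$ to a single component of $X$, and what makes the pointed Seifert--van Kampen theorem compute $\pi_1(P)$ as an amalgamated product of groups rather than a pushout of groupoids. The repair is to reduce to connected $A$ honestly: since $f$ is a connected map, $\Trunc{f}_0 : \Trunc{A}_0 \to \Trunc{B}_0$ is a bijection, so $A$, $B$, and the extension problem (together with the per-basepoint kernel hypotheses) all decompose as products over $\Trunc{B}_0$ of instances with connected domain, after which your argument goes through. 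To be fair, the paper is itself terse on this point --- its proof corestricts $f'$ to its image and then ``picks a base point in $A$'', so it too is really written for the pointed connected case --- but it does not assert the false implication; your proof, as written, does.
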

\begin{proof}
  The condition is necessary by functoriality of $\pi_1$, since any extension
  $h$ satisfies $f' = h\circ f$.

  To establish sufficiency, we note that it suffices to consider the case where
  $f'$ is surjective, since otherwise we just extend the corestriction to the
  image of $f'$. Now form the pushout:
  \[
    \begin{tikzcd}
      A \ar[r,"f'"]\ar[d,"f"']\pocorner & X \ar[d,"g"] \\
      B \ar[r,"g'"'] & P
    \end{tikzcd}
  \]
  Then $g$ is acyclic and surjective on $\pi_1$. By the previous lemma, it
  suffices to show that $g$ is also injective on $\pi_1$. Picking a base point
  in $A$, making the whole square pointed, we get a pushout square in groups by
  the Seifert--van~Kampen theorem~\cite[Thm.~8.7.12]{HoTTBook}:
  \[
    \begin{tikzcd}
      \pi_1(A) \ar[r,"\pi_1(f')"]\ar[d,"\pi_1(f)"']\pocorner & \pi_1(X) \ar[d,"\pi_1(g)"]\ar[ddr,bend left,"\id"] & \\
      \pi_1(B) \ar[r,"\pi_1(g')"']\ar[drr,dotted,bend right=15,"\varphi"'] & \pi_1(P)\ar[dr,dashed,"\psi"] & \\
      & & \pi_1(X)
    \end{tikzcd}
  \]
  The inclusion
  $\ker(\pi_1(f)) \subseteq \ker(\pi_1(f'))$ yields the dotted map \(\varphi\) as
  \[
    \pi_1(B) \simeq \pi_1(A)/\ker(\pi_1(f)) \to \pi_1(A)/\ker(\pi_1(f')) \to
    \pi_1(X).
  \]
  This induces the dashed map $\psi$, a retraction of $\pi_1(g)$.
\end{proof}

The above result was established for CW-spaces in
\cite[Prop.~3.1]{HausmannHusemoller1979} and by different means for path
connected CW-spaces in \cite[Cor.~4.4]{Alonso1983}

\begin{corollary}[PP]\label{plus-unique}
  Let $A$ be a pointed, connected type with a given perfect normal subgroup
  $P \trianglelefteq \pi_1(A)$.  Then the type of acyclic maps $f : A \to X$
  with $\ker(\pi_1(f)) = P$ is a proposition.
\end{corollary}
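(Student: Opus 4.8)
The plan is to prove that the type in question is a proposition by constructing, for any two of its elements, a canonical identification between them. An element consists of a type $X$, a map $f : A \to X$, a proof that $f$ is acyclic, and a proof that $\ker(\pi_1(f)) = P$. Acyclicity is a property, and since the normal subgroups of the set-level group $\pi_1(A)$ form a set, the equation $\ker(\pi_1(f)) = P$ is likewise a proposition; hence it suffices to identify any two underlying pairs $(X_0,f_0)$ and $(X_1,f_1)$ in $\sum_{X}(A \to X)$, both having kernel $P$. By univalence, such an identification is precisely an equivalence $e : X_0 \simeq X_1$ with $e \circ f_0 = f_1$.

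First I would produce two extensions via \cref{acyclic-extension}. Since $f_0$ is acyclic and $\ker(\pi_1(f_0)) = P = \ker(\pi_1(f_1))$ gives $\ker(\pi_1(f_0)) \subseteq \ker(\pi_1(f_1))$, that proposition tells us $f_1$ extends along $f_0$. As $f_0$ is acyclic, hence an epimorphism by \cref{acyclic-characterization}, the type of such extensions is the fiber of the embedding $f_0^\ast$ over $f_1$ and is therefore a proposition; being inhabited, it is contractible and yields a specific $h : X_0 \to X_1$ with $h \circ f_0 = f_1$. Symmetrically I obtain $k : X_1 \to X_0$ with $k \circ f_1 = f_0$.

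Next I would check that the round-trips are identities. The maps $k \circ h$ and $\id_{X_0}$ both precompose with $f_0$ to give $f_0$, since $(k \circ h) \circ f_0 = k \circ f_1 = f_0$; as $f_0^\ast$ is an embedding they lie in the same propositional fiber, so $k \circ h = \id_{X_0}$. Using epicness of $f_1$ in the same way yields $h \circ k = \id_{X_1}$. Thus $h$ is an equivalence satisfying $h \circ f_0 = f_1$, which furnishes the required identification $(X_0,f_0) = (X_1,f_1)$; transporting the propositional acyclicity and kernel data along it then identifies the two original elements, so the type is a proposition.

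The entire argument is a direct consequence of \cref{acyclic-extension} (where the plus principle enters) together with the fact that acyclic maps are epic. I therefore do not anticipate a genuine obstacle; the only points demanding care are the routine verification that the ancillary data are propositional—so that the pair $(X,f)$ carries all the content—and the extraction of an honest extension $h$ from the bare extendability statement, which is legitimate precisely because epicness makes the extension type a proposition.
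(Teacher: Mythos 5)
Your proof is correct and is exactly the argument the paper intends: the corollary is stated without proof as an immediate consequence of \cref{acyclic-extension}, and the intended filling-in is precisely your combination of mutual extension (via the kernel condition) with uniqueness of extensions along epimorphisms to obtain inverse equivalences under $A$, plus the routine observation that acyclicity and the kernel equation are propositional. No gaps.
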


Another application of the plus principle is the following stabilization result.

\begin{proposition}[PP]\label{acyclic-stabilization}
  For any natural number \(n \geq 1\), the \(n\)-fold suspension \(\susp^n X\)
  of a type \(X\) is contractible if and only if \(X\) is acyclic.
\end{proposition}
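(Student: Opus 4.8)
The plan is to prove the easy ``if'' direction directly and the ``only if'' direction by induction on $n$, isolating the single place where the plus principle is needed. For the easy direction, suppose $X$ is acyclic, so $\susp X \simeq \One$. Since the suspension of a contractible type is contractible (indeed $\susp \One \simeq \One$, being the pushout of a span of equivalences), iterating gives $\susp^n X \simeq \susp^{n-1}\One \simeq \One$ for every $n \geq 1$. This direction uses neither (PP) nor \cref{k-acyclic-stabilization}.

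For the converse I would argue by induction on $n \geq 1$. The base case $n = 1$ is exactly the definition of acyclicity. For the inductive step, assume the statement for $n$ and suppose $\susp^{n+1} X$ is contractible. Writing $\susp^{n+1} X \simeq \susp^n(\susp X)$ and applying the induction hypothesis to the type $\susp X$, I conclude that $\susp X$ is acyclic. It then remains to upgrade ``$\susp X$ is acyclic'' to ``$\susp X$ is contractible'', which is precisely the assertion that $X$ is acyclic.

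This last upgrade is the crux, and the only step requiring (PP). Since $\susp X$ is acyclic it is in particular $2$-acyclic, so $X$ is connected by \cref{suspension-2-acyclic-iff-type-connected}; hence $\susp X$ is simply connected, as suspension raises connectedness by one~\cite[Thm.~8.2.1]{HoTTBook}. Now $\susp X$ is both acyclic and simply connected, so the plus principle yields that $\susp X$ is contractible, i.e.\ that $X$ is acyclic, completing the induction.

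The main obstacle is exactly this final implication: passing from an acyclic type to a contractible one is false in general (it is Whitehead's principle in disguise), and the role of (PP) is to make it available for simply connected types. The induction is what lets me reduce to a \emph{suspension} $\susp X$, for which simple-connectivity is cheap via \cref{suspension-2-acyclic-iff-type-connected}. Without peeling off suspensions one could instead use \cref{k-acyclic-stabilization} to show that $X$ is $m$-acyclic for every $m$, but that only makes $\susp X$ infinitely connected and does not directly supply the acyclicity of $\susp X$ needed to invoke (PP); this is why I would favour the inductive formulation above.
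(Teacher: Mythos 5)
Your proof is correct and follows essentially the same route as the paper: the crux in both is the (PP)-based upgrade from ``$\susp X$ acyclic'' to ``$\susp X$ contractible'' via \cref{suspension-2-acyclic-iff-type-connected} and the fact that suspension raises connectedness, which is exactly the paper's $n=2$ case. The only difference is bookkeeping: the paper runs its induction by peeling off two suspensions at a time using the $n=2$ case, whereas you apply the induction hypothesis to $\susp X$ and then invoke the upgrade lemma---a purely cosmetic rearrangement.
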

\begin{proof}
  The right-to-left implication is immediate. We prove the converse by
  induction. For \(n = 1\) it follows by definition. For \(n = 2\), we assume
  that \(\susp^2 X\) is contractible so that \(\susp X\) is acyclic.
  Then \(X\) must be connected by \cref{suspension-2-acyclic-iff-type-connected}
  so that \(\susp X\) is simply connected and hence contractible by the plus
  principle.
  For \(n > 2\), we assume that \(\susp^n X\) is contractible. By the above,
  \(\susp^{n-2} X\) must be acyclic and hence \(\susp^{n-1} X\) is contractible,
  so that \(X\) is acyclic by induction hypothesis.
\end{proof}

Although it seems plausible, we do not know whether, in the absence of
Whitehead's Principle, a type \(X\) is acyclic as soon as its suspension
spectrum \(\susp^\infty X\) is contractible.

\subsection{The Blakers--Massey theorem for acyclic maps}
\label{sec:blakers-massey}

The Blakers--Massey theorem holds in HoTT for the left class of any
modality~\cite[Thm.~4.2]{ABFJ2020}.

Here we show directly that it holds for acyclic maps, assuming the plus
principle.  First we need the following lemma, also observed by
Raptis~\cite[Lem.~3.6]{Raptis2019} with a different proof.

\begin{lemma}[PP]\label{inh-join-acyclic-iff-contractible}
  If $A$ and $B$ are inhabited, then the type $A \join B$ is contractible if and
  only if it is acyclic.
\end{lemma}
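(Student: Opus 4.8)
The plan is to reduce both contractibility and acyclicity of the join to a single condition on the smash product, and then to invoke the stabilization result \cref{acyclic-stabilization}. Since being contractible and being acyclic are both propositions, and $A$ and $B$ are inhabited, I may assume that $A$ and $B$ are pointed. Then \cref{join-is-suspension-of-smash} supplies an equivalence $A \join B \simeq \susp(A \smashpr B)$, so, writing $Y \colonequiv A \smashpr B$, I have $A \join B \simeq \susp Y$.

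The left-to-right implication is immediate, since every contractible type is acyclic (the suspension of a contractible type is contractible). For the converse, suppose $A \join B$ is acyclic. By \cref{def:acyclic} this means $\susp(A \join B)$ is contractible, and under the equivalence above $\susp(A \join B) \simeq \susp^2 Y$. Thus $\susp^2 Y$ is contractible, and \cref{acyclic-stabilization} (with $n = 2$) then tells me that $Y$ is acyclic, i.e.\ that $\susp Y \simeq A \join B$ is contractible, as desired. Notice that this single chain $A \join B \text{ acyclic} \iff \susp^2 Y \text{ contractible} \iff Y \text{ acyclic} \iff \susp Y \simeq A \join B \text{ contractible}$ in fact establishes the biconditional outright.

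The substantive input is \cref{acyclic-stabilization}, which is precisely where the plus principle enters; everything else is bookkeeping with the equivalence $A \join B \simeq \susp(A \smashpr B)$. I should check that there is no circularity: the proof of \cref{acyclic-stabilization} relies on \cref{suspension-2-acyclic-iff-type-connected} and \cref{k-acyclic-iff-connected}, not on the present lemma. The main obstacle to a more naive argument is that the join of two inhabited types is only guaranteed to be connected, not simply connected---for instance $\Two \join \Two \simeq \Circle$---so one cannot apply (PP) to $A \join B$ directly after establishing acyclicity. Passing to the smash product and destabilizing once via \cref{acyclic-stabilization} is exactly what circumvents this difficulty.
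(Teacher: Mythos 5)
Your proof is correct and non-circular, and it takes a genuinely different route from the paper's. The paper argues directly: after the same reduction to pointed $A$, $B$ and the same appeal to \cref{join-is-suspension-of-smash}, it claims that $A \smashpr B$ is $0$-connected (citing connectivity of smash products), concludes that $A \join B \simeq \susp(A \smashpr B)$ is simply connected, and applies (PP) once to the acyclic, simply connected type $A \join B$. You instead translate acyclicity of $A \join B$ into contractibility of $\susp^2(A \smashpr B)$ and destabilize via \cref{acyclic-stabilization} with $n=2$; there (PP) enters only through \cref{suspension-2-acyclic-iff-type-connected}, i.e., through the W\"arn-style free-abelian-group argument, so your route imports conceptually heavier machinery. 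What it buys is robustness: your closing remark---that the join of two inhabited types need not be simply connected, as $\Two \join \Two \simeq \Circle$ shows---in fact cuts against the paper's own proof. The same example gives $\Two \smashpr \Two \simeq \Two$, so the smash product of two merely inhabited pointed types is in general only $(-1)$-connected, not $0$-connected as asserted there; the connectivity bound invoked in the paper's proof is off by one, and with the correct bound one only gets that $A \join B$ is $0$-connected, which is not enough to apply (PP) directly. Closing that gap requires an extra argument that acyclicity forces $\pi_1(A \join B)$ to vanish (e.g.\ via van Kampen as in \cref{join-loops-acyclic}: the fundamental group of such a join is free on a pointed set, and a free group is perfect only if it is trivial), whereas your detour through \cref{acyclic-stabilization} sidesteps the issue entirely.
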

\begin{proof}
  For the nontrivial direction, assume $A \join B$ is acyclic.  Since $A$ and
  $B$ are inhabited and we are proving a proposition, we may assume they are
  pointed.  Then $A \join B$ is equivalent to $\susp(A \smashpr B)$ by
  \cref{join-is-suspension-of-smash}. Since \(A\) and \(B\) are
  \((-1)\)-connected, \cite[Prop.~4.3.1]{Brunerie2016} tells us that
  \(A \smashpr B\) is \(0\)-connected, so that its suspension is
  \(1\)-connected by \cite[Thm.~8.2.1]{HoTTBook}.
  But now \(A \join B\) is acyclic and \(1\)-connected, hence contractible by
  the plus principle.
\end{proof}

In addition, we shall need the following observations,
giving a constructive treatment of \cite[Prop.~3.7]{Raptis2019}.
\begin{lemma}\label{join-inhabited}
  If the join \(A \join B\) is inhabited, then either \(A\) or \(B\)
  is inhabited.
\end{lemma}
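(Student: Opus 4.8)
The plan is to turn the inhabitedness of the join into an inhabitant of the proposition \(\squash{A + B}\), which is exactly the statement that either \(A\) or \(B\) is inhabited. Since this goal is a proposition and the hypothesis is a propositional truncation \(\squash{A \join B}\), the recursion principle for propositional truncation reduces the task to constructing a function \(A \join B \to \squash{A+B}\).

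Next I would use that the join is the pushout of \(A \leftarrow A \times B \to B\). To define a map out of this pushout into some type it suffices to give a map on each of \(A\) and \(B\) together with a homotopy identifying the two composites along the gluing paths. Crucially, the codomain \(\squash{A+B}\) is a proposition, so this homotopy data is automatic: any two of its elements are equal. Hence it is enough to provide maps \(A \to \squash{A+B}\) and \(B \to \squash{A+B}\), for which I take \(a \mapsto \trunc{\inl(a)}\) and \(b \mapsto \trunc{\inr(b)}\), namely the truncation units of the coproduct inclusions.

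By the universal property of the pushout these data assemble into a map \(A \join B \to \squash{A+B}\), and composing it with (an inhabitant extracted from) the hypothesis yields the desired element. There is essentially no real obstacle here: the only points to get right are that mapping out of the join into a \emph{proposition} requires only the two vertex maps, with the glue coherence discharged for free, and that ``either \(A\) or \(B\) is inhabited'' is correctly read as the (already propositional) truncation \(\squash{A+B}\).
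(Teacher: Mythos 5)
Your proof is correct and matches the paper's argument: the paper likewise untruncates the hypothesis (since the goal is a proposition) and applies join induction, where only the two point-constructor cases need attention because the path coherence lands in a proposition. Your phrasing via the pushout recursion principle into \(\squash{A+B}\) is just the recursion-style formulation of the same idea.
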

\begin{proof}
  Since we are proving a proposition, we may assume we have an element
  of the join \(A \join B\).
  By join induction, we get two (point constructor)
  cases, so the conclusion follows.
\end{proof}

\begin{lemma}[PP]\label{join-loops-acyclic}
  Let $A$ be a pointed type and $B$ be any type.
  If the type $\Omega A \join B$ is acyclic,
  then it is contractible.
\end{lemma}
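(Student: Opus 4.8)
The plan is to use the plus principle~(PP): since being contractible is a proposition, it suffices to produce, from the hypothesis that $\Omega A \join B$ is acyclic, the additional information that $\Omega A \join B$ is simply connected. Acyclicity already supplies a substantial part of this for free. An acyclic type is in particular $1$-acyclic, hence connected by \cref{1-acyclic-characterization}, and it is $2$-acyclic, so by \cref{2-acyclic-type-char} each fundamental group $\pi_1(\Omega A \join B, x)$ is perfect. The entire problem therefore reduces to trivializing these perfect fundamental groups, after which (PP) closes the argument.

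To trivialize them I would distinguish according to whether $B$ is inhabited, which is legitimate because the goal is a proposition. If $B$ is inhabited then, since $\Omega A$ is always inhabited (by $\refl$), \emph{both} factors of the join are inhabited, and \cref{inh-join-acyclic-iff-contractible} applies verbatim to give contractibility with no further work. The interesting case is when $B$ is empty: there $\Omega A \join B \simeq \Omega A \join \Zero \simeq \Omega A$, and it is precisely here that the hypothesis that the first factor is a \emph{loop space} carries the argument. Indeed $\Omega A$ is now acyclic, so by the discussion above it is connected with perfect $\pi_1(\Omega A, \refl) \simeq \pi_2(A)$; but $\pi_2(A)$ is abelian by Eckmann--Hilton, and a perfect abelian group is trivial. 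Hence $\Omega A$ is simply connected as well as acyclic, and (PP) makes it contractible.

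It is worth emphasizing why the loop-space assumption is indispensable, as this pins down exactly what the proof must exploit: for a general first factor the statement is simply false. Taking $B \simeq \Zero$ reduces the claim to ``every acyclic type is contractible'', which fails for any acyclic type that is not contractible, such as a classifying type of a nontrivial perfect group. Thus the essential input in the empty case is not acyclicity alone but the abelianness of $\pi_2(A)$, which forces the perfect group $\pi_1(\Omega A)$ to vanish.

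The hard part will be gluing these two cases into a single \emph{constructive} argument, since we cannot decide whether $B$ is inhabited, and neither ``$\Omega A \join B$ is contractible'' nor ``$B$ is inhabited or empty'' is available to us as a decidable alternative. I would resolve this by framing everything at the level of the propositions ``$\Omega A \join B$ is connected with trivial fundamental group'' and carrying the inhabited and empty analyses through uniformly, so that the two arguments assemble into one proof that the (already perfect) fundamental group is trivial; this amounts to showing directly that every map from $\Omega A \join B$ into a groupoid is null, using the inhabited-point argument on the $B$-factor together with the contractibility of $\Trunc{\Omega A}_1$ coming from the loop-space computation. Making this dichotomy-free passage precise, rather than splitting on an undecidable proposition, is the main obstacle I anticipate.
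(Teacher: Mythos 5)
Your setup is the same as the paper's: acyclicity gives $0$-connectedness and a perfect fundamental group via \cref{2-acyclic-type-char}, so everything reduces to trivializing $\pi_1(\Omega A \join B)$, after which (PP) finishes. Your two extreme cases are also both correct ($B$ inhabited: \cref{inh-join-acyclic-iff-contractible}; $B$ empty: $\pi_1(\Omega A)\cong\pi_2(A)$ is abelian and perfect, hence trivial). But there is a genuine gap, and you have located it yourself without closing it: the case split on whether $B$ is inhabited is not available constructively, and your opening claim that this split is ``legitimate because the goal is a proposition'' is wrong --- propositionality lets you untruncate $\Trunc{B}_{-1}$ \emph{if you have it}, but it does not hand you the disjunction ``$B$ inhabited or $B$ empty,'' which is an instance of excluded middle. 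Your proposed repair also does not work as stated: you invoke ``the contractibility of $\Trunc{\Omega A}_1$ coming from the loop-space computation,'' but that contractibility was derived only under the assumption that $B$ is empty (so that $\Omega A$ itself is acyclic); in general $\Trunc{\Omega A}_1$ is far from contractible (take $A = \Circle$, so $\Omega A \simeq \Z$). So the uniform argument you sketch rests on a fact available only in one branch of the very dichotomy you are trying to avoid.

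The missing idea, which is the heart of the paper's proof, is to relocate the case analysis from the undecidable proposition ``$B$ is inhabited'' to a \emph{decidable} condition on concrete data. Since showing that two elements of $\pi_1(\Omega A \join B)$ commute is a proposition, one may represent them, via the naive van~Kampen theorem~\cite[Thm.~8.7.4]{HoTTBook}, by explicit codes: alternating sequences of loops $p_k : \Omega A$ and points $b_k, b_k' : B$ with connecting identifications, of some length $n : \mathbb{N}$. For a given pair of codes, whether either has $n > 0$ is a decidable question (a comparison of natural numbers). If $n > 0$ for one of them, the code itself exhibits an element of $B$, so $B$ is inhabited and your first case applies; if both have $n = 0$, the codes are $2$-loops in $A$ and Eckmann--Hilton gives commutativity. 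This is exactly the ``dichotomy-free passage'' you were looking for: not avoiding the case split, but performing it on codes, where it costs nothing constructively.
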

\begin{proof}
  The type $\Omega A \join B$, pointed at $\inl(\refl)$,
  is $0$-connected with perfect fundamental group by~\cref{2-acyclic-type-char},
  so it suffices to show that $\pi_1(\Omega A\join B)$ is abelian.
  By the naive van~Kampen theorem~\cite[Thm.~8.7.4]{HoTTBook},
  we can express the fundamental group as a set quotient of the type of sequences
  \[
    (\refl,\alpha_0,p_1,b_1,r_1,b_1',p_1',\alpha_1,p_2,\dots,b_n',p_n',\alpha_n,\refl)
  \]
  where
  \begin{itemize}
  \item $n:\mathbb N$,
  \item $p_k,p_k' : \Omega A$, $b_k,b_k': B$, for $0<k\le n$,
  \item $\alpha_k : p_k' = p_{k+1}$ for $0\le k\le n$
    with $p_0' \colonequiv p_{n+1}\colonequiv\refl$,
  \item $r_k : b_k = b_k'$ for $0<k\le n$.
  \end{itemize}
  To prove the proposition that two such codes give commuting elements,
  we look whether any of them has $n>0$. If so, we know $B$ is inhabited,
  and then $\Omega A\join B$ is contractible by~\cref{inh-join-acyclic-iff-contractible}.
  Otherwise, the two codes represent $2$-loops $\alpha,\beta:\Omega^2 A$,
  which commute by the Eckmann--Hilton argument.
\end{proof}

\begin{lemma}[PP]\label{join-ids-acyclic}
  For any types $A$ and $B$ with elements $a,a':A$ and $b,b':B$
  we have that the join $(a=_Aa') \join (b=_Bb')$ is contractible
  if and only if it acyclic.
\end{lemma}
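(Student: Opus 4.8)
The plan is to prove the two implications separately, the left-to-right direction being immediate: every contractible type is acyclic, since the suspension of a contractible type is again contractible.

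For the converse, suppose the join $(a =_A a') \join (b =_B b')$ is acyclic; I want to conclude it is contractible. Since being contractible is a proposition, I may freely reason under mere hypotheses. First I would observe that an acyclic type is in particular $0$-acyclic, hence $(-1)$-connected, i.e.\ merely inhabited; thus $(a = a') \join (b = b')$ is merely inhabited. By \cref{join-inhabited}, this yields that \emph{either} $a = a'$ \emph{or} $b = b'$ is merely inhabited.

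Because the goal is a proposition, a mere element of $a = a'$ already suffices to produce an \emph{actual} path $p : a = a'$ (a map from a type into a proposition factors through the propositional truncation). The crucial observation is that choosing such a $p$ trivialises the first identity type as a loop space: concatenation with $p$ gives an equivalence $(a = a') \simeq (a = a) = \Omega(A, a)$. Transporting along this equivalence, together with functoriality of the join, produces $(a = a') \join (b = b') \simeq \Omega(A, a) \join (b = b')$, and since acyclicity is invariant under equivalence the right-hand side is acyclic. I then apply \cref{join-loops-acyclic} with the pointed type $(A, a)$ and $B \colonequiv (b = b')$ to conclude that $\Omega(A,a) \join (b = b')$, and hence the original join, is contractible. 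The symmetric case, in which instead $b = b'$ is merely inhabited, is handled identically after using commutativity of the join $X \join Y \simeq Y \join X$ to reduce to $\Omega(B, b) \join (a = a')$.

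The only real subtlety — rather than a genuine obstacle — is the bookkeeping that lets us pass from a merely inhabited identity type to an honest basepoint; this is legitimate precisely because contractibility is a proposition, so no actual choice is being made. Everything else is an assembly of \cref{join-inhabited}, \cref{join-loops-acyclic} (which already assumes the plus principle, as does the present lemma), and the standard fact that a pointed identity type is equivalent to the corresponding loop space.
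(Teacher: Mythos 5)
Your proof is correct and follows essentially the same route as the paper's: use \cref{join-inhabited} to get a mere inhabitant of one summand, turn the identity type into a loop space by concatenation with (the inverse of) that path, and conclude with \cref{join-loops-acyclic}. The only difference is that you spell out the details the paper leaves implicit — why the join is inhabited (acyclic implies $0$-acyclic, hence $(-1)$-connected), why the mere inhabitant can be upgraded since the goal is a proposition, and how join commutativity handles the symmetric case hidden in the paper's ``without loss of generality.''
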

\begin{proof}
  Suppose the join is acyclic.
  By~\cref{join-inhabited}, one join summand is inhabited,
  so without loss of generality,
  we may assume we have $p:a=_Aa'$.
  Concatenating with the inverse of $p$ gives an equivalence
  $(a =_A a') \simeq \Omega(A,a)$.
  Now apply~\cref{join-loops-acyclic}.
\end{proof}

Now we can prove the Blakers--Massey theorem for acyclic maps.
As observed by Raptis~\cite[Sec.~3.3]{Raptis2019},
the conclusion is slightly stronger than naively expected.
\begin{theorem}[PP, Blakers--Massey for acyclic maps]\label{BM-acyclic}
  Consider a pushout square:
  \[
    \begin{tikzcd}
      A \ar[r,"g"]\ar[d,"f"']\pocorner & C \ar[d] \\
      B \ar[r] & P
    \end{tikzcd}
  \]
  Then the relative pushout product
  $\Delta f \popr_A \Delta g$
  is acyclic if and only if it is contractible,
  and in that case the square is cartesian.
  This holds also if the absolute pushout product
  $\Delta f \popr \Delta g$ is acyclic.
\end{theorem}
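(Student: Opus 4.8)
The plan is to reduce both biconditionals to a fiberwise computation, exploiting the fact that the fibers of a pushout product of diagonals are joins of identity types. First I would rewrite the diagonals fiberwise: under the equivalences \(A \simeq \sum_{b:B}\fib_f(b)\) and \(A \times_B A \simeq \sum_{b:B}\fib_f(b)\times\fib_f(b)\) the map \(\Delta f\) becomes the fiberwise diagonal, so that \(\fib_{\Delta f}(b,u,v)\simeq (u =_{\fib_f(b)} v)\), and dually for \(\Delta g\). Combining this with the standard fact that the fiber of a pushout product is the join of the fibers of its two factors, the fiber of \(\Delta f \popr_A \Delta g\) over a point lying over \(a_0 : A\) takes the form \((u =_{\fib_f(f a_0)} v) \join (u' =_{\fib_g(g a_0)} v')\), i.e.\ a join of two identity types.

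Granting this, the first biconditional is immediate: a map is acyclic (resp.\ an equivalence) exactly when all of its fibers are acyclic (resp.\ contractible) types, and by \cref{join-ids-acyclic} each fiber above is acyclic if and only if it is contractible. Hence \(\Delta f \popr_A \Delta g\) is acyclic if and only if it is an equivalence. This is the only place where the plus principle enters, through \cref{join-ids-acyclic}; the converse inclusion, that equivalences are acyclic, is trivial since contractible types are acyclic.

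For the clause that the square is then cartesian, I would invoke the generalized Blakers--Massey theorem~\cite[Thm.~4.2]{ABFJ2020} for the trivial modality whose left class consists of the equivalences. Its hypothesis is exactly that \(\Delta f \popr_A \Delta g\) is an equivalence, and since the equivalences are closed under cobase change and sequential colimits, the telescope built from the pushout product collapses and its conclusion is that the gap map \(A \to B \times_P C\) is an equivalence, that is, the square is cartesian. Observe that no appeal to the plus principle is needed for this step, as equivalences genuinely form a modality.

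Finally, for the absolute pushout product I would note that \(\Delta f \popr_A \Delta g\) is the base change of \(\Delta f \popr \Delta g\) along the diagonal \(A \to A \times A\) (equivalently, its codomain \((A\times_B A)\times_A(A\times_C A)\) is the pullback of \((A \times_B A)\times(A\times_C A)\) along the diagonal), so that every fiber of the relative pushout product is also a fiber of the absolute one. Consequently, if \(\Delta f \popr \Delta g\) is acyclic then so is \(\Delta f \popr_A \Delta g\), and the previous two paragraphs apply; moreover each fiber of the absolute product is again a join of identity types, hence contractible by \cref{join-ids-acyclic}. I expect the main obstacle to be the fiber bookkeeping: carefully identifying \(\fib_{\Delta f}\) with the identity types of \(\fib_f\), verifying the fiber-of-pushout-product-is-join-of-fibers formula in the relative (slice over \(A\)) setting, and matching the pushout product appearing in \cite{ABFJ2020} with the relative one used here.
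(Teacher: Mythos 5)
Your proposal is correct and follows essentially the same route as the paper: identify the fibers of $\Delta f \popr_A \Delta g$ as joins of identity types of fibers, apply \cref{join-ids-acyclic} (the only use of PP) to get contractibility, invoke the Blakers--Massey theorem for equivalences (the paper's ``little Blakers--Massey theorem'') to conclude the square is cartesian, and handle the absolute case by noting the relative pushout product is a pullback of the absolute one. The paper's proof is just a terser version of the same argument, with the fiber bookkeeping you anticipate left implicit.
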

\begin{proof}
  Assume that the relative pushout product is acyclic.
  The fibers of $\Delta f\popr_A \Delta g$ are joins of the form
  \[
    \bigl((a,\refl) =_F (a',p)\bigr)
    \join
    \bigl((a,\refl) =_G (a'',q)\bigr)
  \]
  for $a,a',a'':A$, $p:f(a)=f(a')$, and $q:g(a)=g(a'')$,
  where $F \colonequiv \fib_f(f(a))$ and
  $G \colonequiv \fib_g(g(a))$.
  Now apply~\cref{join-ids-acyclic} to get that the fibers are contractible.
  Then the little Blaker--Massey theorem gives that the square is cartesian.
  The final remark follows from the fact that $\Delta f \popr_A \Delta g$
  is a pullback of ${\Delta f \popr_A \Delta g}$.
\end{proof}

\section{Hypoabelian types and orthogonality}\label{sec:hypoab}

In the context of higher topos theory, Hoyois showed that the acyclic maps are
part of an orthogonal factorization system~\cite{Hoyois2019}.
While we leave a type-theoretic construction of this factorization system to
future work (see~\cref{sec:conclusion}), we consider what the corresponding
right class should be, namely that of \emph{hypoabelian} maps.

\begin{definition}[Hypoabelianness]\label{def:hypoabelian}
  A type \(X\) is \emph{hypoabelian} if every perfect subgroup of \(\pi_1(X,x)\)
  is trivial, for every \(x:X\).  A map \(f : X \to Y\) is hypoabelian if all
  its fibers are.
\end{definition}
We note that a type \(X\) is hypoabelian if and only if its
\(1\)\nobreakdash-truncation is.  An equivalent definition says that the perfect
core (i.e., the largest perfect subgroup) of each \(\pi_1(X,x)\) is trivial.
We also remark that this definition, at least in the absence of propositional
resizing~\cite[Sec.~3.5]{HoTTBook}, should be understood as being relative to a
type universe.

Recall that a map $f:A \to B$ is \emph{left orthogonal} to a map $g:X \to Y$,
denoted $f\perp g$, if we have a contractible type of lifts for all squares as
below left, or equivalently, if the square below right is a pullback square.
\[
  \begin{tikzcd}
    A \ar[r]\ar[d,"f"'] & X\ar[d,"g"] \\
    B \ar[r]\ar[ur,dashed] & Y
  \end{tikzcd}
  \qquad
  \begin{tikzcd}
    X^B \ar[d,"g_*"']\ar[r,"f^*"] & X^A\ar[d,"g_*"]\\
    Y^B \ar[r,"f^*"'] & Y^A
  \end{tikzcd}
\]
In case of maps to the terminal type $\One$, we write $A\perp X$, and say that
$A$ is left orthogonal to $X$.  The is equivalent to the constants
map $\const:X \to (A \to X)$ being an equivalence.
\begin{lemma}[PP]
  For all acyclic types $A$ and hypoabelian types $X$, we have $A \perp X$.
\end{lemma}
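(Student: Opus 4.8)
The plan is to unwind the definition of orthogonality to the terminal type: by the discussion preceding the statement, $A \perp X$ is equivalent to the constants map $\const : X \to (A \to X)$ being an equivalence. Since $A$ is acyclic, \cref{characterization-of-acyclic-types} already tells us that $\const$ is an \emph{embedding}, so its fibers are propositions. Hence it suffices to prove that $\const$ is surjective, i.e.\ that every map $f : A \to X$ is merely constant: an inhabited proposition is contractible, so inhabited fibers of an embedding make it an equivalence. Reformulated once more, it suffices to show that each $f : A \to X$ extends (necessarily uniquely) along the terminal map $!_A : A \to \One$, since an extension is exactly a point $x : X$ together with a homotopy $\const_x \sim f$.

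The key step is to check that $f$ acts trivially on fundamental groups. Being acyclic, $A$ is in particular $2$-acyclic, so by \cref{2-acyclic-type-char} it is connected and $\pi_1(A,a)$ is perfect for every $a : A$. For each such $a$, the image of $\pi_1(f,a) : \pi_1(A,a) \to \pi_1(X, f(a))$ is a quotient of a perfect group, hence perfect, and it is a subgroup of $\pi_1(X, f(a))$. Since $X$ is hypoabelian (\cref{def:hypoabelian}), every perfect subgroup of $\pi_1(X, f(a))$ is trivial, so $\im(\pi_1(f,a))$ is trivial and $\pi_1(f,a)$ is the trivial homomorphism for every $a : A$.

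I would then conclude by invoking \cref{acyclic-extension} with the acyclic map $!_A : A \to \One$ in the role of the acyclic map and $f$ in the role of $f'$. The hypothesis of that proposition is the inclusion $\ker(\pi_1(!_A)) \subseteq \ker(\pi_1(f))$; but $\pi_1(!_A)$ maps into the trivial group $\pi_1(\One) = \One$, so $\ker(\pi_1(!_A)) = \pi_1(A)$, and by the previous paragraph $\ker(\pi_1(f))$ is also all of $\pi_1(A)$, so the inclusion holds trivially. Therefore $f$ extends along $!_A$, producing the required point $x : X$ with $\const_x \sim f$. This establishes surjectivity of $\const$, and combined with the embedding property we obtain that $\const$ is an equivalence, i.e.\ $A \perp X$.

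The real content is concentrated in two places, and both are already available: the triviality of $\pi_1(f)$ (the one spot where \emph{perfect} meets \emph{hypoabelian}), and \cref{acyclic-extension}, which is where the plus principle enters—consistent with the $[\mathrm{PP}]$ label on the statement. I expect the main conceptual obstacle to be recognizing that ``$f$ is constant'' should be read as ``$f$ extends along $A \to \One$'', so that the already-proved extension criterion applies verbatim; a more pedestrian route through mapping spaces computes $\pi_{\ge 1}$ of $(A \ptdto X)$ using $\susp^n A \simeq \One$ but only yields $\infty$-connectedness of a component, which would require Whitehead's principle rather than merely $(\mathrm{PP})$, so I would avoid it.
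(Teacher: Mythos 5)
Your proof is correct and takes essentially the same route as the paper's: both reduce $A \perp X$ to producing an extension of an arbitrary map $f : A \to X$ along $A \to \One$ (uniqueness being automatic from acyclicity of $A$), and both obtain existence from \cref{acyclic-extension} by observing that $\im(\pi_1(f))$ is a perfect subgroup of the fundamental group of a hypoabelian type, hence trivial. The only cosmetic differences are that you phrase the reduction via embedding-plus-surjectivity of the constants map rather than directly via propositionality of the type of extensions, and that you spell out the perfectness of $\pi_1(A)$ through $2$-acyclicity and \cref{2-acyclic-type-char}, a step the paper leaves implicit.
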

\begin{proof}
  We need to show that the type of extensions of a map $g : A \to X$ along the
  terminal map $A \to \One$ is contractible. This is a proposition since $A$ is
  acyclic, so it suffices to check that there exists an extension.  Picking a
  base point of $A$, it suffices by~\cref{acyclic-extension} to check that
  $\pi_1(A)=\ker(\pi_1(g))$, or equivalently, that $\im(\pi_1(g))$ is
  trivial. This follows since $\pi_1(A)$ is perfect, and the fact that the image
  of a perfect group is perfect.
\end{proof}
\begin{corollary}[PP]
  For all acyclic maps $f : A \to B$ and hypoabelian maps $g : X \to Y$, we have
  $f\perp g$.
\end{corollary}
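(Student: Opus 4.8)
The plan is to reduce the corollary about maps to the already-established lemma about types by working fiberwise. The key structural fact I would invoke is that both orthogonality and the relevant classes of maps are characterized fiberwise: a map is acyclic (resp.\ hypoabelian) precisely when all its fibers are acyclic (resp.\ hypoabelian) types, by \cref{def:acyclic} and \cref{def:hypoabelian}. So the strategy is to express the orthogonality $f \perp g$ in terms of fiberwise data and then apply the lemma $A \perp X$ for acyclic $A$ and hypoabelian $X$ pointwise.

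First I would recall the pullback characterization of orthogonality stated just before the lemma: $f \perp g$ holds if and only if the naturality square
\[
  \begin{tikzcd}
    X^B \ar[d,"g_*"']\ar[r,"f^*"] & X^A\ar[d,"g_*"]\\
    Y^B \ar[r,"f^*"'] & Y^A
  \end{tikzcd}
\]
is a pullback. The most economical route, however, is likely to use the lifting formulation directly: for any commuting square from $f$ to $g$, the type of diagonal fillers should be contractible. I would set up an arbitrary such square and use that the type of lifts can be computed fiberwise over the base, reducing the lifting problem for the pair $(f,g)$ to a family of lifting problems for the terminal maps out of the fibers $\fib_f(b)$ against the fibers $\fib_g(y)$.

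The main obstacle I anticipate is getting the fiberwise reduction to line up correctly: one must check that a lift of the square against $f$ and $g$ corresponds, over each point of the base, to a lift of $\fib_f(b) \to \One$ against $\fib_g(\text{appropriate point}) \to \One$, which is exactly an instance of $A \perp X$ with $A \colonequiv \fib_f(b)$ acyclic and $X$ a fiber of $g$, hence hypoabelian. This is the kind of manipulation where one uses that mapping into a fibration and the pullback description of orthogonality interact well, and where contractibility of the total space of lifts follows from contractibility of each fiberwise space of lifts (since a dependent product of contractible types is contractible). Once this correspondence is in place, the contractibility supplied by the preceding lemma for each fiber assembles into contractibility of the full type of lifts, giving $f \perp g$.

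Concretely, the cleanest writeup would observe that $f \perp g$ is itself a fiberwise condition over $Y$ (and correspondingly over $B$), so that it suffices to verify orthogonality against each fiber of $g$; and then observe that orthogonality of $f$ against a type $X$ reduces, by the same fiberwise principle applied over $B$, to orthogonality of each $\fib_f(b)$ against $X$. Chaining these two reductions lands exactly on the hypothesis of the lemma for acyclic types versus hypoabelian types, completing the proof under (PP).
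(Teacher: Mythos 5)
Your proposal is correct and takes essentially the same route as the paper: the paper likewise represents $f$ and $g$ as projections of families $A : B \to \mathcal{U}$ and $X : Y \to \mathcal{U}$, expresses a lifting problem as a pair $(\varphi,\psi)$ with $\varphi : B \to Y$ and $\psi : \prod_{b:B}A(b) \to X(\varphi(b))$, computes the type of lifts as $\prod_{b:B}\fib_{\const}(\psi_b)$, and concludes contractibility pointwise from the type-level lemma (each $\fib_f(b)$ acyclic, each $\fib_g(\varphi(b))$ hypoabelian), using that a dependent product of contractible types is contractible. One caveat on your final ``cleanest writeup'': the opening move there, that $f \perp g$ ``is a fiberwise condition over $Y$'' so it suffices to check orthogonality against each fiber of $g$, is not a primitive fact one may simply invoke---passing from orthogonality against each fiber $\fib_g(y)$ to orthogonality against the map $g$ is exactly the dependent-versus-non-dependent subtlety (analogous to epi versus dependent epi in the paper), and what legitimately closes that gap is the direct computation of the lift type that you describe in your third paragraph and that the paper carries out.
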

\begin{proof}
  This is a general fact about two classes $\mathcal{L}$ and $\mathcal{R}$ of
  maps defined in terms of fibers, i.e., a map is in $\mathcal{L}/\mathcal{R}$
  if and only if all its fibers are.  Suppose we have orthogonality of terminal
  maps in $\mathcal{L}$ against terminal maps in $\mathcal{R}$.  Then we get
  $f \perp g$ for all $f\in\mathcal{L}$ and ${g\in\mathcal{R}}$.  Indeed,
  expressing a lifting problem in terms of a map ${\varphi : B \to Y}$ and a
  fiberwise map ${\psi : \prod_{b:B}A(b) \to X(\varphi(b))}$,
  \[
    \begin{tikzcd}
      \sum_{b:B}A(b) \ar[r,"\tilde\psi"]\ar[d] & \sum_{y:Y}X(y) \ar[d] \\
      B\ar[r,"\varphi"']\ar[ur,dashed] & Y
    \end{tikzcd}
  \]
  the type of lifts is
  \begin{align*}
      \sum_{h : \prod_{b:B}X(\varphi(b))}\prod_{b:B}\prod_{a:A(b)}\psi_b(a)
    = h(b)
    &\,\simeq\,
      \prod_{b:B}\sum_{x:X(\varphi(b))}\prod_{a:A(b)}\psi_b(a)=x \\
    &\,\simeq\, \prod_{b:B} \fib_{\const}(\psi_b).
  \end{align*}
  So if all the constants maps,
  $X(\varphi(b)) \to (A(b) \to X(\varphi(b)))$, are equivalences, then
  all the fibers, $\fib_{\const}(\psi_b)$, are contractible, so this type of lifts
  is contractible too.
\end{proof}

Nilpotent types~\cite{Scoccola2020} are a special case of hypoabelian types.
We will show (\cref{acyclic-nilpotent-orth}) that every nilpotent type that
is the limit of its Postnikov tower is right orthogonal to acyclic types
\emph{without} assuming the plus principle.
(In the classical model, every type is the limit of its Postnikov tower.)

Recall that in homotopy type theory, the Postnikov tower of a type \(X\) is
given by the truncation maps
\[
  X \to \dots \to \squash{X}_n \to \dots \to \squash{X}_1 \to \squash{X}_0.
\]

Following~\cite[Def.~7.2.2.20]{LurieHTT}, we define a type $Y$ to be an \emph{EM
  $n$-gerbe} if it is $(n-1)$-connected and $n$\nobreakdash-truncated.
If $n\ge 2$, then this determines an abelian group ${H := \pi_n(Y,y)}$,
which doesn't depend essentially on $y:Y$.
If $n=1$, we additionally require that $\pi_1(Y,y)$ is abelian
for any/all $y:Y$.
Now, any map ${P : A \to K(H,n+1)}$ determines a family of $n$-gerbes
over~$A$ via the equivalence $K(H,n+1) \simeq {\sum_{Z:\mathcal{U}}\Trunc{Z=K(H,n)}_0}$.
We call such a $P$ a \emph{principal EM fibration}.
(See also~\cite{BCFR}, the resulting gerbes are \emph{banded} by $H$.)

\begin{lemma}\label{lem:acyclic-banded-gerbe-orth}
  Given a pointed acyclic type $A$, and a pointed
  EM $n$-gerbe $Y$, we have that $(A \ptdto Y)$ is contractible.
\end{lemma}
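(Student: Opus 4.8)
The plan is to exhibit the gerbe $Y$ as a loop space and then transport the mapping problem across the loop--suspension adjunction, so that the acyclicity of $A$ finishes the argument with essentially no computation.

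First I would identify $Y$ up to pointed equivalence. Since $Y$ is a pointed EM $n$\nobreakdash-gerbe it is $(n-1)$-connected and $n$-truncated, so its only possibly nontrivial homotopy group is $H \colonequiv \pi_n(Y,\pt)$, which is abelian (automatically for $n \geq 2$, and by hypothesis in the band for $n = 1$). A pointed, $(n-1)$-connected, $n$-truncated type with this homotopy group is precisely the $n$-th Eilenberg--MacLane space, so $Y \simeq K(H,n)$. The crucial consequence is that $Y$ is a \emph{loop space}: writing $K(H,n+1)$ for the standard delooping of $H$, the Eilenberg--MacLane construction gives a pointed equivalence $Y \simeq \Omega K(H,n+1)$.

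Second, I would apply the loop--suspension adjunction \cite[Lem.~6.5.4]{HoTTBook} to rewrite the type of pointed maps:
\[
  (A \ptdto Y) \;\simeq\; (A \ptdto \Omega K(H,n+1)) \;\simeq\; (\susp A \ptdto K(H,n+1)).
\]
Finally, since $A$ is acyclic, its suspension $\susp A$ is contractible, so the right-hand side is the type of pointed maps out of a contractible pointed type; this is itself contractible (it is equivalent to a based path space). Hence $(A \ptdto Y)$ is contractible, as desired.

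The main obstacle is the first step: constructing the delooping $K(H,n+1)$ and the equivalence $Y \simeq \Omega K(H,n+1)$, since this is exactly where the banding data of the gerbe (and the abelianness of $H$, needed in the $n=1$ case) is consumed. Once $Y$ is known to be a loop space the rest is formal, and the only property of $A$ that enters is that $\susp A \simeq \One$; in particular no assumption beyond acyclicity, and crucially not the plus principle, is required.
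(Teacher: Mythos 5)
Your proposal is correct and matches the paper's proof essentially step for step: identify $Y \simeq K(H,n) \simeq \Omega K(H,n+1)$, apply the loop--suspension adjunction \cite[Lem.~6.5.4]{HoTTBook}, and use $\susp A \simeq \One$ to conclude that the type of pointed maps is contractible. Your additional remarks on where the banding/abelianness is consumed and on avoiding the plus principle are accurate but do not change the argument.
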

\begin{proof}
  Since \(Y\) is a pointed \(n\)-gerbe, we may assume
  $Y \simeq K(H,n) \simeq \Omega K(H,n+1)$, where $H$ is the associated abelian
  group. Now, using~\cite[Lem.~6.5.4]{HoTTBook}, we have
  \begin{align*}
    (A \ptdto Y) &\simeq (A \ptdto \Omega K(H,n+1)) \\
    &\simeq (\susp A \ptdto K(H,n+1)) \\
    &\simeq (\One \ptdto K(H,n+1)) \simeq \One.\qedhere
  \end{align*}
\end{proof}

\begin{proposition}\label{acyclic-nilpotent}
  Given a pointed acyclic type $A$ and a pointed nilpotent type $X$, we have
  that $(A \ptdto X)$ is contractible if in addition $X$ is the limit of its
  Postnikov tower.
\end{proposition}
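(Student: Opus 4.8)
The plan is to climb the Postnikov tower of $X$, reducing everything to the Eilenberg--MacLane case already settled in \cref{lem:acyclic-banded-gerbe-orth}. The key structural fact is that the pointed mapping-space functor $(A \ptdto -)$ is a right adjoint---namely to the smash product $A \smashpr -$, by the same adjunction used in \cref{join-is-suspension-of-smash}---and therefore preserves all limits, in particular pullbacks and hence fiber sequences. Since $X$ is assumed to be the limit of its Postnikov tower, $X \simeq \lim_n \squash{X}_n$, we obtain $(A \ptdto X) \simeq \lim_n (A \ptdto \squash{X}_n)$. As a sequential limit of contractible types is again contractible, it suffices to prove that each $(A \ptdto \squash{X}_n)$ is contractible.

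To handle the finite truncations I would use the structure theory of nilpotent types: a nilpotent type admits a \emph{principal refinement} of its Postnikov tower~\cite{Scoccola2020}, so that $\squash{X}_n$ is built from $\squash{X}_{n-1}$ by finitely many principal EM fibrations, with $\squash{X}_0 \simeq \One$ as nilpotent types are connected. Interleaving these refinements exhibits each $\squash{X}_n$ as the top of a finite tower of principal EM fibrations starting from $\One$. Thus it suffices to prove the single inductive step: if $(A \ptdto W)$ is contractible and $W'$ is a stage of a principal refinement over $W$---that is, $W'$ is the fiber of some $k$-invariant $k : W \to K(H,m)$, giving a fiber sequence $W' \to W \xrightarrow{k} K(H,m)$---then $(A \ptdto W')$ is contractible.

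The inductive step is then immediate. Applying the fiber-sequence-preserving functor $(A \ptdto -)$ to $W' \to W \xrightarrow{k} K(H,m)$ yields a fiber sequence $(A \ptdto W') \to (A \ptdto W) \xrightarrow{k_*} (A \ptdto K(H,m))$. Here $(A \ptdto K(H,m))$ is contractible by \cref{lem:acyclic-banded-gerbe-orth}, because $K(H,m)$ is an EM $m$-gerbe, and $(A \ptdto W)$ is contractible by the induction hypothesis; the fiber of a map between contractible types is contractible, so $(A \ptdto W')$ is contractible, as desired.

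I expect the main obstacle to be not the homotopy-theoretic bookkeeping above but rather securing the input from the structure theory of nilpotent types inside univalent foundations: namely that a nilpotent type---in particular each $\squash{X}_n$---genuinely admits a principal refinement of its Postnikov tower by principal EM fibrations banded by abelian groups. This is where the nilpotency hypothesis (as opposed to mere connectedness) is essential, since it is precisely the nilpotent action of $\pi_1$ on the higher homotopy groups that lets one filter each individual Postnikov stage into finitely many central, and hence principal, extensions; verifying that \cite{Scoccola2020} supplies this refinement (or reconstructing it internally) is the crux of the argument.
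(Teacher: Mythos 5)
Your proof is correct and follows essentially the same route as the paper: reduce via $X \simeq \varprojlim_n \Trunc{X}_n$ to the truncation stages, induct using Scoccola's principal refinement of the Postnikov tower of a nilpotent type, and finish each stage with \cref{lem:acyclic-banded-gerbe-orth}. Your inductive step, phrased via the fiber sequence $(A \ptdto W') \to (A \ptdto W) \to (A \ptdto K(H,m))$, just makes explicit the detail the paper leaves implicit when it says the step case "follows from" the gerbe lemma.
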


To prove~\cref{acyclic-nilpotent}, we recall
from~\cite[Thm.~2.58]{Scoccola2020} that $X$ is nilpotent if and only if each
map $\Trunc X_{n+1} \to \Trunc X_n$ in the Postnikov tower of $X$ factors as a
finite composition of principal EM fibrations, i.e., maps
classified by $K(A,n+1)$ for abelian groups $A$.
\begin{proof}
  Since $X \simeq \varprojlim_n \Trunc X_n$, we get an equivalence between
  $(A \ptdto X)$ and $\varprojlim_n (A \ptdto \Trunc X_n)$, so it suffices to
  show that the type $(A \ptdto \Trunc X_n)$ is contractible for all $n$, which
  we do by induction on $n$.

  In the step case, since we're proving a proposition, we may assume that the
  map $\Trunc X_{n+1} \to \Trunc X_n$ is factored as
  \[
    \Trunc X_{n+1} = Y_k \to Y_{k-1} \to \dots \to Y_0 = \Trunc X_n,
  \]
  with each map a principal EM fibration.  The result now follows
  from~\cref{lem:acyclic-banded-gerbe-orth}.
\end{proof}

\begin{corollary}\label{acyclic-nilpotent-orth}
  For all acyclic types $A$ and nilpotent types~$X$ that are limits of their
  Postnikov towers, we have $A \perp X$.
\end{corollary}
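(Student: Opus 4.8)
The plan is to deduce the unpointed orthogonality $A \perp X$ from the pointed contractibility result \cref{acyclic-nilpotent} by an evaluation-at-a-point argument. Unfolding the definition, $A \perp X$ asserts that the constants map $\const : X \to (A \to X)$ is an equivalence, which is a \emph{proposition}. Moreover, an acyclic type is merely inhabited: its suspension is contractible, hence $0$-connected, which forces $\squash{A}_{-1}$. Since we are proving a proposition, we may therefore assume an actual basepoint $a_0 : A$.

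With $a_0 : A$ fixed, I would introduce the evaluation map $\ev_{a_0} : (A \to X) \to X$ sending $f \mapsto f(a_0)$, which satisfies $\ev_{a_0} \circ \const = \id_X$ since $\const_x(a_0) \equiv x$. The key computation is that for each $x : X$ the fiber of $\ev_{a_0}$ is precisely a pointed mapping space, $\fib_{\ev_{a_0}}(x) = \sum_{f : A \to X}(f(a_0) = x) \simeq ((A, a_0) \ptdto (X, x))$. Applying \cref{acyclic-nilpotent} to the pointed acyclic type $(A, a_0)$ and the pointed type $(X, x)$---which is nilpotent and a limit of its Postnikov tower---shows that this pointed mapping space is contractible. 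As this holds for every $x : X$, all fibers of $\ev_{a_0}$ are contractible, so $\ev_{a_0}$ is an equivalence. Combined with $\ev_{a_0} \circ \const = \id_X$, this forces $\const = \ev_{a_0}^{-1}$ up to homotopy, so $\const$ is an equivalence and $A \perp X$.

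The only point requiring genuine care is the basepoint-independence of the hypotheses: \cref{acyclic-nilpotent} is stated for \emph{pointed} types, yet the argument needs it at every basepoint $x : X$ arising as a value of a map $A \to X$. For this to be licensed by the corollary's hypotheses, one reads ``$X$ is nilpotent'' as holding at each basepoint (equivalently, component-wise) and ``$X$ is the limit of its Postnikov tower'' as the basepoint-free statement $X \simeq \varprojlim_n \squash{X}_n$. Once this is granted, the identification of $\fib_{\ev_{a_0}}(x)$ with the pointed mapping space and the section/retraction bookkeeping for $\const$ and $\ev_{a_0}$ are routine; I do not expect any further obstacle, and in particular the plus principle is not needed anywhere.
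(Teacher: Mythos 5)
Your proof is correct and matches the paper's own argument essentially verbatim: both fix a basepoint of $A$ (licensed because $A\perp X$ is a proposition and acyclic types are inhabited), observe $\ev_{a_0}\circ\const = \id_X$, identify the fibers of $\ev_{a_0}$ with pointed mapping spaces $(A \ptdto X)$ contractible by \cref{acyclic-nilpotent}, and conclude by 2-out-of-3. Your extra remark on reading nilpotency and the Postnikov-limit hypothesis basepoint-free is a point the paper leaves implicit, but it does not change the argument.
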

\begin{proof}
  We're proving a proposition, so fix a base point $\pt:A$.  The evaluation map
  at $\pt$ fits in the diagram below:
  \[
    \begin{tikzcd}
      X \ar[dr,"\id"']\ar[r,"\const"] & (A \to X)\ar[d,"\ev_\pt"] \\
      & X
    \end{tikzcd}
  \]
  By $3$-for-$2$ for equivalences, it suffices to show that $\ev_\pt$ is an
  equivalence.  For each $x_0:X$, the fiber is the type of pointed maps
  $(A \ptdto X)$, where $X$ is pointed at $x_0$. And this is contractible by the
  previous proposition.
\end{proof}

\section{Acyclicity via (co)homology}\label{sec:homology}

Classically, the acyclic types are characterized as types $A$ whose reduced
integral homology vanishes, i.e., $\tilde\upH_i(A)=0$ for all $i$.  This is in
fact the origin of the name \emph{acyclic}, meaning that every cycle is a
boundary, using the chain complex model of homology.
For a definition of reduced homology in homotopy type theory,
see~\cite[Def.~3.10]{ChristensenScoccola2023}. Up to equivalence,
$\tilde\upH_i(X)$ is defined for pointed types $X$
as $\pi_i(\upH\mathbb{Z} \smashpr \susp^\infty X)$.
For an unpointed type $A$, we define \emph{unreduced homology} as
$\upH_i(A) \colonequiv \tilde\upH_i(A_+)$, where
$A_+$ is the free pointed type on $A$, viz.,
$A$ with a disjoint base point.


\begin{definition}
  A type \(A\) is \emph{homologically \(k\)-acyclic}
  if \(A\) is inhabited,
  and any one of the following equivalent conditions hold:
  \begin{enumerate}[label=(\roman*)]
  \item We have \(\tilde\upH_i(A) = 0\) for \(i\le k\)
    and any choice of base point.
  \item The map \(A \to \One\) induces isomorphisms
    \(\upH_i(A) \to \upH_i(\One)\) for \(i\le k\).
  \end{enumerate}
  A type is \emph{homologically acyclic}
  if it is homologically \(k\)-acyclic
  for all \(k\).
\end{definition}
Note that the augmentation map \(\upH_0(A) \to \Z\) is an equivalence
if and only if \(A\) is connected,
and if \(A\) is pointed, then
the inclusion \(A \to A_+\) gives equivalences
\(\tilde\upH_i(A) \simeq \upH_i(A)\) for \(i>0\).
This establishes the equivalence of the two conditions.

\begin{proposition}\label{prop:k-acyclic-iff-homologically-so}
  A type is $k$-acyclic if and only if
  it is homologically $k$-acyclic.
\end{proposition}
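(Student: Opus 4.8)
The plan is to reduce $k$-acyclicity to a statement about the homotopy connectivity of the suspension $\susp A$, and then to bridge connectivity and the vanishing of reduced homology by a Hurewicz-type theorem. Two tools drive the argument. First, unwinding the definition $\tilde\upH_i(X)=\pi_i(\upH\Z\smashpr\susp^\infty X)$ and using $\susp^\infty\susp A\simeq\susp\susp^\infty A$ yields the suspension isomorphism $\tilde\upH_i(\susp A)\simeq\tilde\upH_{i-1}(A)$, so that the reduced homology of $A$ is recorded, up to a degree shift, by that of $\susp A$. Second, I would use that homotopy connectivity always forces homology to vanish in the same range---if $Y$ is $k$-connected then $\tilde\upH_i(Y)=0$ for $i\le k$---together with, in the reverse direction, a Hurewicz theorem for \emph{simply connected} types: a simply connected $Y$ whose reduced homology vanishes through a given degree is connected to the corresponding order.

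For the forward implication I would argue as follows. If $A$ is $k$-acyclic then $\susp A$ is $k$-connected, so its reduced homology vanishes throughout the connectivity range; transporting along the suspension isomorphism gives the vanishing of $\tilde\upH_\ast(A)$ through the degree appearing in the definition of homological $k$-acyclicity. Inhabitedness of $A$ is automatic: for $k\ge 0$ a $k$-acyclic type is in particular $0$-acyclic, hence $(-1)$-connected, i.e.\ inhabited. This direction needs only the easy half of Hurewicz and no hypothesis on $\pi_1$.

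For the converse the point is to manufacture the simple connectivity that Hurewicz demands, and here I lean on the facts already established for $k$-acyclic types. Homological $k$-acyclicity includes $\tilde\upH_0(A)=0$, equivalently---via the augmentation $\upH_0(A)\to\Z$---connectedness of $A$; hence $\susp A$ is simply connected and the simply connected Hurewicz theorem applies. Running it together with the suspension isomorphism upgrades the assumed vanishing of $\tilde\upH_\ast(A)$ to the required connectivity of $\susp A$, i.e.\ to $k$-acyclicity of $A$. I would organize both directions as an induction on $k$, peeling off one degree at a time: on the acyclicity side using \cref{1-acyclic-characterization} to anchor the base case (identifying $1$-acyclicity with connectedness) and \cref{k-acyclic-iff-connected} to raise the connectivity of the simply connected $\susp A$ one step per suspension, and on the homology side using the suspension isomorphism to drop the homological degree by one in step with the induction hypothesis.

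The main obstacle is the Hurewicz input itself: I need a constructive, simply connected Hurewicz theorem in precisely the form that relates the order of connectivity of $\susp A$ to the range in which its reduced homology vanishes, presumably in the shape developed in \cite{ChristensenScoccola2023}. The delicate bookkeeping is the degree shift in $\tilde\upH_i(\susp A)\simeq\tilde\upH_{i-1}(A)$: one must carefully match the connectivity range of $\susp A$ with the homological range defining homological $k$-acyclicity of $A$, and it is exactly the single suspension that removes the fundamental-group obstruction so that Hurewicz applies at all. Everything else---the identification of $\tilde\upH_0$ with connectedness, of inhabitedness with $0$-acyclicity, and the functoriality of the suspension isomorphism---should be routine.
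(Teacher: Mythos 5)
Your ingredients are exactly the paper's: the suspension isomorphism \(\tilde\upH_i(\susp A)\simeq\tilde\upH_{i-1}(A)\), connectedness of \(A\) extracted from \(\tilde\upH_0(A)=0\) so that \(\susp A\) is simply connected, and then the Hurewicz theorem of \cite[Prop.~3.17]{ChristensenScoccola2023} combined with the truncated Whitehead theorem \cite[Thm.~8.8.3]{HoTTBook} to pass between vanishing of homology of \(\susp A\) and contractibility of a truncation of \(\susp A\). The induction on \(k\) you propose is superfluous---the paper's proof is a single direct chain of equivalences, with \(k=0\) treated separately---and \cref{k-acyclic-iff-connected}, which rests on Freudenthal, plays no role.

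The genuine problem is the step you set aside as ``delicate bookkeeping'': it is not routine, it is where the entire content sits, and as you have set it up it does not close. With the definition in force (\(A\) is \(k\)-acyclic iff \(\Trunc{\susp A}_k\) is contractible), your forward direction produces \(\tilde\upH_i(\susp A)=0\) for \(i\le k\), hence \(\tilde\upH_i(A)=0\) only for \(i\le k-1\): one degree short of homological \(k\)-acyclicity. Symmetrically, your converse overshoots and delivers \((k+1)\)-acyclicity. Carried out honestly, the argument proves ``homologically \(k\)-acyclic \(\iff\) \((k+1)\)-acyclic'' rather than the literal statement; you can sanity-check the shift at \(k=1\), where homological \(1\)-acyclicity (connected with perfect fundamental group) coincides with \(2\)-acyclicity by \cref{2-acyclic-type-char}, whereas \(1\)-acyclicity is mere connectedness by \cref{1-acyclic-characterization}. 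Indeed, your own proposed base case already exhibits the shift, since it pairs \(1\)-acyclicity (connectedness) with vanishing of \(\tilde\upH_0\), i.e.\ with homological \(0\)-acyclicity. The paper's proof quietly performs the same shift: it analyses \(\Trunc{\susp A}_{k+1}\) rather than \(\Trunc{\susp A}_k\), and its base case reads \(0\)-acyclicity as connectedness, which matches homological \(0\)-acyclicity but conflicts with the paper's earlier corollary that a type is \(0\)-acyclic iff it is \((-1)\)-connected (inhabited). So a complete write-up cannot defer the range-matching: you must either prove the index-shifted statement or adjust one of the two definitions, and your proposal stalls precisely at that point.
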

\begin{proof}
  Fix a type \(A\).
  For \(k=0\) both conditions amount to \(A\) being connected.
  For \(k>0\) we get from the suspension property of homology that
  \(\tilde\upH_i(\susp A) \simeq \tilde\upH_{i-1}(A)\) for all \(i\).
  Since \(\susp A\) is simply connected, the truncated Whitehead's
  theorem~\cite[Thm~8.8.3]{HoTTBook} implies that $\Trunc{\susp A}_{k+1}$ is
  contractible if and only if $\pi_i(\susp A)$ vanishes for $i\le k+1$. By
  Hurewicz' theorem~\cite[Prop.~3.17]{ChristensenScoccola2023}, this happens if and only if
  $\tilde\upH_i(\susp A)$ vanishes in the same range.
\end{proof}
\begin{corollary}
  Any acyclic type is homologically acyclic.
\end{corollary}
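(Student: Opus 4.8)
The plan is to reduce the statement to the equivalence just established in \cref{prop:k-acyclic-iff-homologically-so}. First I would unfold the notions involved: a type \(A\) is \emph{acyclic} when its suspension \(\susp A\) is contractible, whereas \(A\) is \emph{homologically acyclic} when it is homologically \(k\)-acyclic for every \(k\). The bridge is that \(\susp A \simeq \One\) is not merely highly connected but \(\infty\)-connected, so in particular it is \(k\)-connected for every \(k\); by definition this says exactly that \(A\) is \(k\)-acyclic for every \(k\).

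Having observed that an acyclic \(A\) is \(k\)-acyclic for all \(k\), I would then invoke \cref{prop:k-acyclic-iff-homologically-so} once for each \(k\) to conclude that \(A\) is homologically \(k\)-acyclic for every \(k\), which is precisely the definition of homological acyclicity. The inhabitedness clause built into the definition of homological \(k\)-acyclicity is automatic in this setting: an acyclic type is in particular \(0\)-acyclic, hence \((-1)\)-connected (i.e.\ inhabited), and this is already folded into the cited proposition's treatment of the base case.

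I do not expect any genuine obstacle here, since the corollary is a direct specialization of the preceding proposition to the degenerate case where the suspension is contractible rather than merely \(k\)-connected. The only point requiring a moment's attention is the elementary bookkeeping between ``contractible'' and ``\(k\)-connected for all \(k\)'', which holds because contractible types are \(\infty\)-connected. Thus the argument collapses to a one-line deduction from \cref{prop:k-acyclic-iff-homologically-so}.
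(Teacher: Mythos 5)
Your proof is correct and is exactly the paper's (implicit) argument: the corollary is stated without proof precisely because a contractible suspension is \(k\)-connected for every \(k\), so acyclicity gives \(k\)-acyclicity for all \(k\), and \cref{prop:k-acyclic-iff-homologically-so} then yields homological \(k\)-acyclicity for all \(k\). Your handling of the inhabitedness clause (via \(0\)-acyclic \(\Leftrightarrow\) \((-1)\)-connected) is also the right observation.
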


The converse holds assuming Whitehead's principle (WP).
\begin{corollary}[WP]
  Any homologically acyclic type is acyclic.
\end{corollary}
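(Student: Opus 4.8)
The plan is to combine the previous corollary (any acyclic type is homologically acyclic) with Whitehead's principle via the already-established equivalence between $k$-acyclicity and homological $k$-acyclicity. Concretely, let $A$ be a homologically acyclic type; I want to conclude that $A$ is acyclic, i.e.\ that $\susp A$ is contractible.

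First I would apply \cref{prop:k-acyclic-iff-homologically-so} in the right-to-left direction: since $A$ is homologically $k$-acyclic for every $k$, it is $k$-acyclic for every $k$. Unwinding the definition, this means that $\Trunc{\susp A}_{k+1}$ is contractible for all $k$, i.e.\ $\susp A$ is $k$-connected for every $k$; in other words, $\susp A$ is infinitely connected (hypercomplete in the relevant sense). This is the crux of the argument, and it is where homological acyclicity is genuinely used—the homology hypothesis feeds through \cref{prop:k-acyclic-iff-homologically-so} to give connectivity in every degree, which bare acyclicity at a single level would not provide.

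Next I would invoke Whitehead's principle, which asserts that every infinitely connected type is contractible. Applying (WP) to $\susp A$ yields $\susp A \simeq \One$, which is exactly the statement that $A$ is acyclic. That completes the proof.

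The main obstacle is really just making sure the quantifier over $k$ is handled correctly: homological acyclicity is phrased as homological $k$-acyclicity \emph{for all} $k$, and one must transport this uniformly through \cref{prop:k-acyclic-iff-homologically-so} to obtain $k$-acyclicity for all $k$ simultaneously, and hence infinite connectedness of the suspension, before (WP) can be applied. No genuinely hard step remains once this is set up; the argument is a short chain of implications, and the only reason Whitehead's principle is needed (rather than the bare plus principle) is precisely that homological information controls all homotopy groups at once rather than bounding the fundamental group alone.
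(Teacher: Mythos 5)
Your proof is correct and is exactly the argument the paper leaves implicit for this corollary: homological acyclicity transports through \cref{prop:k-acyclic-iff-homologically-so} (for all $k$ at once) to give $k$-acyclicity for every $k$, so $\susp A$ is infinitely connected, and Whitehead's principle then yields contractibility of $\susp A$. Your handling of the quantifier over $k$ and your use of (WP) in the form ``every infinitely connected type is contractible'' match the paper's intended reasoning precisely.
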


We define the notion of being \emph{cohomologically acyclic}
in analogy with being homologically acyclic,
just using integral cohomology~\cite{Cavallo2015,BLM2022} instead of integral homology.
Then we have the following.
\begin{lemma}
  Any homologically acyclic type is cohomologically acyclic.
\end{lemma}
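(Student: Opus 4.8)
The plan is to unwind both notions to a purely homological statement and then invoke the universal coefficient theorem. First I would observe that the \emph{inhabited} clause and the connectedness clause (the augmentation $\upH_0(A) \to \Z$ being an equivalence) transfer verbatim, since they are shared hypotheses of the two definitions. It therefore suffices, for a pointed connected type $A$ with $\tilde\upH_i(A) = 0$ for all $i$, to prove that $\tilde\upH^i(A) = 0$ for every $i$ and every base point; the degree-$0$ case is automatic from connectedness, so only positive degrees remain.

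Recall that reduced homology is $\tilde\upH_i(A) = \pi_i(M)$ with $M \colonequiv \upH\Z \smashpr \susp^\infty A$, while reduced integral cohomology is $\tilde\upH^i(A) = [\susp^\infty A, \susp^i \upH\Z]$. By the free–forgetful adjunction for $\upH\Z$-modules we have $[\susp^\infty A, \susp^i \upH\Z] \simeq \Hom_{\upH\Z}(M, \susp^i \upH\Z)$, so cohomology is the $\upH\Z$-linear dual of the homology spectrum $M$. The main step is then the universal coefficient theorem: there is a natural short exact sequence
\[
  0 \to \operatorname{Ext}^1_\Z\bigl(\tilde\upH_{i-1}(A), \Z\bigr) \to \tilde\upH^i(A) \to \Hom_\Z\bigl(\tilde\upH_i(A), \Z\bigr) \to 0,
\]
which holds because $\Z$ is a principal ideal domain, so that $\operatorname{Ext}^j_\Z$ vanishes for $j \ge 2$ and the universal-coefficient spectral sequence degenerates to this two-term form. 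Since every $\tilde\upH_j(A)$ vanishes by hypothesis, both outer terms vanish and hence $\tilde\upH^i(A) = 0$, as required.

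Conceptually, the same conclusion follows from the observation that $M$ is a \emph{connective} $\upH\Z$-module whose homotopy groups all vanish; because $\Z$ is hereditary every $\upH\Z$-module is formal (it splits as a wedge of shifted Eilenberg--MacLane spectra on its homotopy groups), so $M$ is the zero module and its dual $\Hom_{\upH\Z}(M, \susp^i \upH\Z)$ is contractible. I expect the main obstacle to be securing this input—either the universal coefficient sequence or the formality of $\upH\Z$-modules—inside univalent foundations, drawing on the synthetic (co)homology of \cite{Cavallo2015,BLM2022}. A point worth checking is that the argument is genuinely free of Whitehead's principle: unlike the passage from homological acyclicity to acyclicity, here one never reconstructs the type or the spectrum $M$ from its vanishing invariants, since the universal coefficient theorem computes each cohomology group directly from the homology groups, which are assumed to be zero. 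The only remaining bookkeeping is the reduced-versus-unreduced translation and the independence from the chosen base point, both handled exactly as in the equivalence of the two defining conditions above.
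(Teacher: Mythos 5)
Your overall strategy---reduce to vanishing of reduced cohomology and compute it from the vanishing homology---is the classical one, but the key input you invoke is precisely what is unavailable in this setting, and the paper flags this explicitly right after the lemma. The two-term universal coefficient sequence
\[
  0 \to \operatorname{Ext}^1_\Z\bigl(\tilde\upH_{i-1}(A),\Z\bigr) \to \tilde\upH^i(A) \to \Hom\bigl(\tilde\upH_i(A),\Z\bigr) \to 0
\]
rests on $\Z$ having global dimension $1$, i.e., on every abelian group admitting a length-$1$ projective resolution. Constructively this fails: subgroups of free abelian groups need not be free, and in HoTT there are nontrivial higher Ext groups~\parencite{ChristensenFlatenExt}, so neither the short exact sequence, nor the degeneration of the universal coefficient spectral sequence, nor the formality of $\upH\Z$-modules that your second argument appeals to, is a theorem of HoTT. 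At best one expects a universal coefficient \emph{spectral sequence}, but even that is only conjectural here (the paper says it is ``expected''), and your proof would additionally need its convergence plus a theory of $\upH\Z$-module spectra and the free--forgetful adjunction $[\susp^\infty A, \susp^i\upH\Z] \simeq \Hom_{\upH\Z}(M,\susp^i\upH\Z)$, none of which is developed in univalent foundations. So the step you yourself flagged as ``the main obstacle'' is a genuine gap, not a technicality.

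The paper's proof sidesteps all homological algebra over $\Z$ using results already established synthetically. By the suspension isomorphism one may assume $A$ is pointed and simply connected (otherwise pass to $\susp A$). Then homological acyclicity gives $k$-acyclicity for all $k$ by \cref{prop:k-acyclic-iff-homologically-so} (via Hurewicz), and simple connectivity upgrades this to $k$-connectivity for all $k$ by \cref{k-acyclic-iff-connected} (via Freudenthal). Each cohomology group then vanishes for an elementary reason: $\upK(\Z,i)$ is $i$-truncated, so
\[
  \tilde\upH^i(A) \equiv \Trunc{A \ptdto \upK(\Z,i)}_0
  \simeq \Trunc{\One \ptdto \upK(\Z,i)}_0 \simeq 0
\]
because $A$ is $i$-connected. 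Note that this argument, like yours, never reconstructs $A$ from its invariants---so it is indeed free of Whitehead's principle, as you hoped---but it derives the vanishing of each $\tilde\upH^i(A)$ from finite connectivity rather than from a universal coefficient theorem.
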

\begin{proof}
  Fix a type \(A\).
  By the suspension property of cohomology,
  we may assume that \(A\) is pointed and \(1\)-connected.
  (Otherwise, consider the suspension \(\susp A\).)
  Thus, by~\cref{k-acyclic-iff-connected,%
  prop:k-acyclic-iff-homologically-so},
  \(A\) is in fact \(k\)-connected for any \(k\),
  but then
  \[
    \tilde\upH^i(A) \equiv \Trunc{A \ptdto \upK(\Z,i)}_0
    \simeq \Trunc{\One \ptdto \upK(\Z,i)}_0 \simeq 0,
  \]
  as desired.
\end{proof}
Note that it is more subtle to characterize $k$-acyclicity cohomologically,
since there are types \(A\) with \(\tilde\upH^i(A)=0\) for \(i\le k\)
that are not \(k\)-acyclic.
Consider for example \(A = \upK(\Z/2\Z,2)\), which has
\[
  \tilde\upH^2(A) = \Trunc{A \ptdto \upK(\Z,2)}_0
  = \Hom(\Z/2\Z, \Z) = 0,
\]
but \(\tilde\upH_2(A) = \pi_2(A) = \Z/2\Z\).
This can classically be fixed by requiring the induced map
\(\tilde\upH^{k+1}(A) \to \Hom(\tilde\upH_{k+1}(A),\Z)\)
to be an isomorphism.
However, this characterization relies on the Universal Coefficient Theorem
which is not expected to hold in HoTT
due to the presence of higher Ext groups~\parencite{ChristensenFlatenExt}.
Rather, we expect there is a Universal Coefficient Spectral Sequence as
in~\textcite[(UCT2)]{Adams1969}.
Even assuming this, we would still need some argument to infer that cohomological
acyclicity implies that the homology groups are finitely presented.
However, the traditional proofs of this are very classical~\parencite[Prop.~3F.12]{HatcherAT}.

Let us now move on to the homological characterization of acyclicity of maps.
Here it is not sufficient to just consider integer coefficients.
But we can always move to a universal cover, by virtue of the following observation.
\begin{lemma}\label{lem:acyclicuniversal}
  A map \(f : A \to B\) is acyclic if and only if,
  for all \(b : B\), the pullback of \(f\) to
  the \(1\)-connected cover of \(B\) at \(b\),
  is acyclic,
  \begin{equation}\label{eq:universalcover}
    \begin{tikzcd}
      \tilde A \ar[d]\ar[r,"f'"]\pbcorner & \tilde B_b \ar[d] \\
      A \ar[r,"f"'] & B,
    \end{tikzcd}
  \end{equation}
  where \(\tilde B_b\colonequiv \sum_{y:B}\Trunc{b=y}_0\).
\end{lemma}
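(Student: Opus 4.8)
The plan is to reduce both directions to the pullback-stability of fibers. First I would write $p \colon \tilde B_b \to B$ for the first projection out of $\tilde B_b = \sum_{y : B}\Trunc{b = y}_0$ and observe that the defining pullback square~\eqref{eq:universalcover} gives, by the standard stability of fibers under pullback~\cite[Lem.~7.6.8]{HoTTBook} (already invoked in the proof of~\cref{acyclic-pullback-and-pushout-stable}), an equivalence
\[
  \fib_{f'}(\tilde y) \simeq \fib_f(p(\tilde y))
\]
for every $\tilde y : \tilde B_b$. Thus each fiber of the pulled-back map $f'$ is literally a fiber of $f$, and, conversely, every fiber of $f$ will arise this way once we can point $\tilde B_b$ over the relevant element of $B$.

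For the forward implication I would assume $f$ acyclic and simply note that acyclic maps are stable under pullback (\cref{acyclic-pullback-and-pushout-stable}); since $f'$ is by construction a pullback of $f$, it is acyclic for every $b : B$. (Equivalently, this is immediate from the displayed equivalence, as every fiber of $f'$ is then an acyclic fiber of $f$.)

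For the converse I would assume $f'$ acyclic for every $b : B$ and show that each fiber $\fib_f(b)$ is an acyclic type, which yields that $f$ is acyclic. Fixing $b : B$, I would use that $\tilde B_b$ carries the canonical point $(b, \tosquash{\refl}_0)$, which $p$ sends to $b$, so that the displayed equivalence specialises to $\fib_{f'}((b, \tosquash{\refl}_0)) \simeq \fib_f(b)$. As $f'$ is acyclic at this choice of $b$ by hypothesis, its fiber over $(b, \tosquash{\refl}_0)$ is acyclic, and hence so is $\fib_f(b)$.

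I expect no real obstacle here: the entire content of the lemma is carried by pullback-stability of fibers. The only point requiring any care is the converse, where one must exhibit the distinguished point of $\tilde B_b$ lying over $b$ (namely $(b, \tosquash{\refl}_0)$) and match up the fiber; the universal quantification over $b$ in the hypothesis is precisely what allows this single point to recover an otherwise arbitrary fiber of $f$. Notably, the argument never uses that $\tilde B_b$ is $1$-connected—only that it projects to $B$ with a canonical point over $b$—so that property is not needed for the proof itself.
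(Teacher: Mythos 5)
Your proposal is correct and matches the paper's own proof, which consists of the single sentence ``The fibers of $f$ and $f'$ are identified'': both arguments rest entirely on pullback-stability of fibers, with the converse direction using the canonical point $(b, \tosquash{\refl}_0)$ of $\tilde B_b$ to recover $\fib_f(b)$ as a fiber of $f'$. Your closing observation that $1$-connectedness of $\tilde B_b$ plays no role in the proof is also accurate; it matters only for the lemma's intended use in the homological characterization that follows.
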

\begin{proof}
  The fibers of \(f\) and \(f'\) are identified.
\end{proof}
We also need the following result, which would follow from a
Universal Coefficient Spectral Sequence for homology~\parencite[(UCT1)]{Adams1969}.
However, it also has a direct proof.
\begin{lemma}\label{lem:homology-acyclic-coeff}
  If a type \(A\) is homologically \(k\)-acyclic,
  then it is so for any abelian coefficient group \(L\):
  The map \(A \to \One\) induces isomorphisms
  \(\upH_i(A; L) \to \upH_i(\One; L)\) for \(i \le k\).
\end{lemma}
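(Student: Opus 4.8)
The plan is to reduce the statement, via two structural properties of homology regarded as a functor of its coefficient group, to the given integer coefficients. First I would record the two facts I will use about $\tilde\upH_i(A;-) \colonequiv \pi_i(\upH(-) \smashpr \susp^\infty A)$ as a functor $\mathrm{Ab} \to \mathrm{Ab}$: (a) it sends a short exact sequence of coefficient groups $0 \to L' \to L \to L'' \to 0$ to a long exact sequence, because the associated Eilenberg--MacLane spectra form a (co)fiber sequence $\upH L' \to \upH L \to \upH L''$ which stays a (co)fiber sequence after smashing with $\susp^\infty A$; and (b) it preserves filtered colimits, since $\upH$, smashing, and $\pi_i$ each do. I would also note at the outset that for $k \ge 0$ being homologically $k$-acyclic forces $A$ to be connected (by the augmentation remark preceding the lemma, and $\tilde\upH_0(A)=0$), so that the degree-$0$ map $\upH_0(A;L) \to \upH_0(\One;L)$ is an isomorphism for every $L$, while for pointed connected $A$ the reduced and unreduced groups agree in positive degrees (and negative degrees vanish by connectivity). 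Hence, after pointing the connected type $A$, it suffices to prove $\tilde\upH_i(A;L)=0$ for $1 \le i \le k$.

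Using (b), I would first reduce to finitely generated coefficients: writing $L$ as the filtered colimit of its finitely generated subgroups $L_\alpha$ gives $\tilde\upH_i(A;L) \simeq \colim_\alpha \tilde\upH_i(A;L_\alpha)$, so it is enough to treat finitely generated $L$, and this keeps the argument constructive. For finitely generated $L$ I would take a finite free resolution $0 \to \Z^a \to \Z^b \to L \to 0$, obtained from a surjection of a finite free group onto $L$ whose kernel is again finitely generated free as a subgroup of $\Z^b$. Feeding this short exact sequence into (a) produces, for each $i$, an exact fragment
\[
  \tilde\upH_i(A;\Z^b) \to \tilde\upH_i(A;L) \to \tilde\upH_{i-1}(A;\Z^a).
\]
By additivity of homology over the finite direct sums $\Z^a$ and $\Z^b$, together with the hypothesis that $\tilde\upH_j(A;\Z)=0$ for $1 \le j \le k$ and connectedness for $j=0$, both outer groups vanish whenever $1 \le i \le k$ (note $i-1$ then lies in $[0,k-1]$), forcing $\tilde\upH_i(A;L)=0$ in that range. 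Combining this positive-degree vanishing with the degree-$0$ isomorphism from connectedness yields the desired isomorphisms $\upH_i(A;L) \to \upH_i(\One;L)$ for all $i \le k$.

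I expect the main obstacle to be the two pieces of spectral/homological-algebra infrastructure invoked as black boxes in (a) and (b): that a short exact sequence of abelian groups induces a (co)fiber sequence of Eilenberg--MacLane spectra, and hence a long exact sequence in homology, and that homology commutes with filtered colimits in its coefficient argument. Granting these, the only remaining care is constructive bookkeeping: that every abelian group is the filtered colimit of its finitely generated subgroups, and that a finitely generated abelian group admits a length-one free resolution (which rests on subgroups of $\Z^n$ being finitely generated free). All manipulation of the type $A$ itself stays confined to the harmless passage between reduced and unreduced homology of a connected pointed type.
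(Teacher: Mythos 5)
Your argument is the classical universal-coefficient d\'evissage on the coefficient variable, and in classical algebraic topology it is correct (your degree bookkeeping, including the use of connectedness to kill \(\tilde\upH_{i-1}(A;\Z^a)\) when \(i-1=0\), is fine). But two of its steps genuinely fail in the setting of this paper, which works internally in HoTT and insists on fully constructive proofs. First, your reduction (b) to finitely generated coefficients requires forming the filtered colimit at the level of spaces or spectra, i.e.\ \(K(L,n)\simeq\colim_\alpha K(L_\alpha,n)\), and commuting \(\pi_i\) past it. In HoTT, colimits of types indexed by a general filtered poset are not available: a type-valued diagram over a \(1\)-category requires an infinite tower of coherence data, which is a well-known open problem; only \emph{sequential} colimits (and their interaction with homotopy groups) have been developed, and these would cover only countably generated \(L\). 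Second, the finite free resolution \(0\to\Z^a\to\Z^b\to L\to 0\) is not constructively available: constructively, the kernel of a surjection \(\Z^b\to L\) need not be finitely generated, and subgroups of \(\Z^n\) need not be free --- already ``every subgroup of \(\Z\) is cyclic'' implies instances of excluded middle, so finitely generated abelian groups need not be finitely presented in this setting. (Your black box (a), the long exact sequence in homology from a short exact sequence of coefficients, is plausible in HoTT but itself rests on substantial stable infrastructure --- that fiber sequences of spectra are cofiber sequences and survive smashing --- which is not available in the references the paper can draw on.)

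The paper sidesteps all coefficient-side homological algebra by working on the space side instead: since homological \(k\)-acyclicity forces \(A\) to be connected, one points \(A\) and passes to reduced homology; by the suspension isomorphism one may replace \(A\) by \(\susp A\) (shifting degrees) and hence assume \(A\) is simply connected; then \cref{prop:k-acyclic-iff-homologically-so} together with \cref{k-acyclic-iff-connected} (i.e.\ Hurewicz) converts homological \(k\)-acyclicity into \(k\)-connectedness; and finally \cite[Prop.~3.19]{ChristensenScoccola2023} gives that a \(k\)-connected pointed type has vanishing reduced homology with \emph{any} abelian coefficients in degrees \(\le k\) --- in essence just the connectivity of the smash products \(K(L,n)\smashpr A\). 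That route is fully constructive, uses no resolutions or colimit interchange, and works uniformly in \(L\). To salvage your approach you would have to both develop filtered-colimit technology in HoTT and restrict to finitely presented (or countably generated) coefficients, at which point you would be proving less than the stated lemma.
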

\begin{proof}
  It suffices to consider the case where \(A\) is connected,
  and then we assume it is pointed and consider reduced homology.
  Again, by suspending and shifting if necessary, we may assume
  \(A\) is simply connected, so being (homologically) \(k\)-acyclic
  amounts to being \(k\)-connected.
  Now conclude by~\textcite[Prop.~3.19]{ChristensenScoccola2023}.
\end{proof}

With these preliminaries, we are ready to present the following
definition, which refines~\textcite[Def.~1.2]{HausmannHusemoller1979}
by considering \(k\)-acyclicity instead of acyclicity \emph{simpliciter}:
\begin{definition}
  A map \(f : A \to B\) is \emph{homologically \(k\)-acyclic} if any one
  of the following equivalent conditions hold:
  \begin{enumerate}[label=(\roman*)]
  \item\label{it-HH-i}
    All fibers of \(f\) are homologically \(k\)-acyclic.
  \item\label{it-HH-ii}
    For any local coefficient system \(L : B \to \mathrm{AbGroup}\),
    the induced maps
    \[
      f_* : \upH_i(A; f^*L) \to \upH_i(B; L)
    \]
    are isomorphisms for \(i \le k\) and surjective for \(i = k+1\).
  \item\label{it-HH-iii}
    The induced maps
    \[
      f_* : \upH_i(A; f^*\Z\pi_1B) \to \upH_i(B; \Z\pi_1B)
    \]
    are isomorphisms for \(i \le k\) and surjective for \(i = k+1\).
  \item\label{it-HH-iv}
    For each \(b:B\), the map \(f'\) as in \eqref{eq:universalcover}
    induces isomorphisms \(\upH_i(\tilde A) \to \upH_i(\tilde B_b)\)
    for \(i \le k\) and a surjection for \(i = k+1\).
  \end{enumerate}
\end{definition}
\begin{proof}[Proof of the equivalence]
  For \labelcref{it-HH-i} implies \labelcref{it-HH-ii}: We use the Serre Spectral Sequence
  for homology, as developed by~\textcite[Sec.~5.5]{vanDoorn2018}:
  \[
    E^2_{p,q} = \upH_p(B; \lambda b. \upH_q(F(b); L(b))) \Rightarrow
    \upH_{p+q}(A; f^*L),
  \]
  where \(F(b)\) is the fiber of \(f\) at \(b\).  By \labelcref{it-HH-i} and
  \cref{lem:homology-acyclic-coeff}, we have \({\upH_q(F(b); L(b)) = 0}\) for
  \(0<q\le k\), so the first possibly nontrivial differential (by total degree)
  is the transgression
  \[
    d^{k+2}_{k+2,0} : \upH_{k+2}(B; L) \to
    \upH_0(B; \lambda b. \upH_{k+1}(F(b); L(b))).
  \]
  Thus, convergence immediately gives isomorphisms
  \(\upH_i(A; f^*L) \to \upH_i(B; L)\) for \(i \le k\)
  and a short exact sequence
  \[ 0 \to E^\infty_{0,k+1}
    \to \upH_{k+1}(A; f^*L) \to \upH_{k+1}(B; L) \to 0,\]
  yielding \labelcref{it-HH-ii}.

  It is clear that \labelcref{it-HH-ii} implies \labelcref{it-HH-iii}.

  For \labelcref{it-HH-iii} implies \labelcref{it-HH-iv}, we look at the map of
  fibrations induced by a horizontal reading of~\eqref{eq:universalcover}, with
  fibers \(\pi_1(B,b)\).  The implications follow from naturality of the Serre
  Spectral Sequences and the Five Lemma.

  For \labelcref{it-HH-iv} implies \labelcref{it-HH-i}, we fix
  \(b:B\) and use~\cref{lem:acyclicuniversal} to get a fiber sequence
  \(F(b) \to \tilde A \to \tilde B_b\).
  For notational simplicity, we may as well assume that \(B\) is already simply connected,
  as well as pointed at \(b:B\).
  Then we can use the Serre Spectral Sequence
  for homology with constant integral coefficients, in particular, we look at
  naturality with respect to the map of fiber sequences:
  \[
    \begin{tikzcd}
      F(b) \ar[r]\ar[d] & A\ar[r,"f"]\ar[d,"f"'] & B\ar[d,"\id"] \\
      \One \ar[r] & B\ar[r,"\id"'] & B
    \end{tikzcd}
  \]
  The comparison maps the left edge of the domain,
  \(E^2_{0,i} = \upH_0(B;\upH_i(F(b))) = \upH_i(F(b))\),
  to the left edge of the codomain,
  \(E^2_{0,i} = \upH_0(B;\upH_i(\One)) = \upH_i(\One)\),
  yielding \labelcref{it-HH-i}, as desired.
\end{proof}

\begin{corollary}
  A map is \(k\)-acyclic if and only if it is homologically \(k\)-acyclic.
  Any acyclic map is homologically acyclic, and the converse follows from \textup{(WP)}.
\end{corollary}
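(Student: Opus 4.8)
The plan is to reduce the statement entirely to the type-level results already established, exploiting that both notions of acyclicity for maps are defined purely fiberwise. Unfolding the definitions, a map $f : A \to B$ is $k$-acyclic exactly when each fiber $\fib_f(b)$ is a $k$-acyclic type, whereas $f$ is homologically $k$-acyclic exactly when each fiber is a homologically $k$-acyclic type---this is precisely condition~\ref{it-HH-i} of the preceding definition. Hence the first biconditional follows immediately by applying \cref{prop:k-acyclic-iff-homologically-so} to each fiber separately.

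For the sentence that every acyclic map is homologically acyclic, I would first record that an acyclic type has contractible suspension and is therefore $k$-acyclic for every $k$ (contractible types are $k$-connected for all $k$). Consequently the fibers of an acyclic map are $k$-acyclic for all $k$, and by the biconditional just proved the map is homologically $k$-acyclic for all $k$, i.e.\ homologically acyclic. For the converse under Whitehead's principle, I would run the same fiberwise argument in reverse: a homologically acyclic map has fibers that are homologically $k$-acyclic for every $k$, hence homologically acyclic as types, hence acyclic as types by the corollary (under \textup{(WP)}) that every homologically acyclic type is acyclic; therefore the map is acyclic.

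There is essentially no obstacle beyond bookkeeping here: all the substance resides in \cref{prop:k-acyclic-iff-homologically-so}, in the two type-level corollaries relating acyclicity and homological acyclicity, and in the equivalence of the four conditions in the definition of a homologically $k$-acyclic map (whose proof is the genuinely technical part, via the Serre spectral sequence). The only points requiring a moment's care are matching the implicit ``for all $k$'' quantifier in the word \emph{acyclic} against \emph{homologically acyclic}, and noting explicitly that a type with contractible suspension is $k$-acyclic for every $k$, which is what lets the $k$-indexed biconditional collapse to the unindexed statements.
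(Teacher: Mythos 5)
Your proposal is correct and follows exactly the route the paper intends: the paper states this as an immediate corollary because $k$-acyclicity of maps and condition~(i) of homological $k$-acyclicity of maps are both fiberwise notions, so everything reduces to \cref{prop:k-acyclic-iff-homologically-so} and the two type-level corollaries (including the one assuming WP). Your explicit bookkeeping of the ``for all $k$'' quantifiers and the observation that an acyclic type is $k$-acyclic for every $k$ is precisely the implicit content of the paper's argument.
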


We can also make the analogous definition for being a cohomologically acyclic map
using the Serre Spectral Sequence for cohomology,
which has even been formalized~\parencite[Sec.~2.3]{vanDoorn2018}.
But again we only know that homologically acyclic maps are cohomologically acyclic,
as the converse would again require a Universal Coefficient Spectral Sequence and
an argument to ensure finitely presented homology groups.

\section{Examples of acyclic types}\label{sec:examples-acyclic}
We finally give some nontrivial examples of acyclic types.

\subsection{A 2-dimensional acyclic type}
\label{sec:hatcher}
Our first example is Hatcher's $2$-dimensional
complex~\cite[Ex.~2.38]{HatcherAT}.  We import this as the higher inductive type
(HIT) $X$ with constructors:
\[
  \pt : X,\quad a,b : \Omega X, \quad r : a^5=b^3, \quad s : b^3=(ab)^2
\]

\begin{definition}[\flinkspec{synthetic-homotopy-theory}{hatchers-acyclic-type}{algebras-with-the-structure-of-hatchers-acyclic-type} %
  Hatcher structure and algebra]
  A \emph{Hatcher structure} on a pointed type \(A\) is given by identifications
  \[
    a,b : \Omega A, \quad r : a^5=b^3, \quad s : b^3=(ab)^2.
  \]
  A \emph{Hatcher algebra} is a pointed type equipped with Hatcher structure.
\end{definition}

The HIT \(X\) is precisely the \emph{initial} Hatcher algebra.

\begin{lemma}[\flinkspec{synthetic-homotopy-theory}{hatchers-acyclic-type}{loop-spaces-uniquely-have-the-structure-of-a-hatcher-acyclic-type}]%
  \label{unique-Hatcher-structure}
  Every loop space, pointed at \(\refl\), has a unique Hatcher structure.
\end{lemma}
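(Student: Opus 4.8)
The plan is to exhibit the type of Hatcher structures as the fiber of an equivalence, so that contractibility is immediate. Write \(A\) for the given loop space, pointed at \(\refl\). Since \(A\) is itself a loop space, \(\Omega A\) is a \emph{double} loop space, and hence by Eckmann--Hilton~\cite[Thm.~2.1.6]{HoTTBook} its concatenation operation is commutative. Thus \(\Omega A\) is an abelian group object, and I would freely use additive reasoning: the power maps \(x \mapsto x^n\) behave like multiplication by \(n : \Z\), and \((xy)^n \simeq x^n y^n\) by commutativity.

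First I would identify the type of Hatcher structures
\[
  H \colonequiv \sum_{a,b : \Omega A}(a^5 = b^3)\times(b^3 = (ab)^2)
\]
with the fiber over \((\refl,\refl)\) of the ``relations map'' \(\Psi : \Omega A \times \Omega A \to \Omega A \times \Omega A\) defined by \(\Psi(a,b) \colonequiv \bigl(a^5\cdot(b^3)^{-1},\; b^3\cdot((ab)^2)^{-1}\bigr)\). Using that in the group \(\Omega A\) we have \((u = v) \simeq (u\cdot v^{-1} = \refl)\), there is an equivalence \(H \simeq \fib_\Psi(\refl,\refl)\), now matching the two path types of \(H\) with the two coordinates of \(\Psi\).

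The crux is to show that \(\Psi\) is an equivalence. By commutativity, \(\Psi\) agrees with the \(\Z\)-linear endomorphism of \((\Omega A)^2\) given by the integer matrix \(M = \begin{pmatrix} 5 & -3 \\ -2 & 1\end{pmatrix}\): the powers contribute the integer entries, and the genuinely mixed term \((ab)^2\) linearises to \(a^2b^2\). Since \(\det M = -1\) is a unit in \(\Z\), we have \(M \in \GL_2(\Z)\), so \(M\) acts as an automorphism of the abelian group object \((\Omega A)^2\). Concretely, one builds the inverse map directly from \(M^{-1} = \begin{pmatrix} -1 & -3 \\ -2 & -5\end{pmatrix}\) and checks the two round-trip homotopies using the group laws together with commutativity. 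Equivalently, and perhaps more convenient for formalisation, one decomposes \(M\) into elementary transvections, each of which --- e.g.\ \((a,b)\mapsto(a\cdot b^n, b)\) --- is manifestly an equivalence with manifest inverse even before invoking commutativity, reducing the whole verification to the single linearisation step.

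Once \(\Psi\) is known to be an equivalence, all of its fibers are contractible, so \(H \simeq \fib_\Psi(\refl,\refl)\) is contractible; that is, the Hatcher structure on \(A\) is unique. I expect the main obstacle to be precisely the linearisation step: making rigorous that \(\Psi\), whose definition involves the non-separated term \((ab)^2\), coincides with an honest \(\GL_2(\Z)\)-action, and verifying the inverse homotopies. This is exactly where the hypothesis that \(A\) is a loop space enters, via Eckmann--Hilton commutativity of \(\Omega A\); without it \((ab)^2 = abab\) cannot be rearranged into \(a^2b^2\) and the determinant computation breaks down.
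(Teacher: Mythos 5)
Your proof is correct, and it takes a genuinely different route from the paper's, although both hinge on exactly the same key input: Eckmann--Hilton commutativity of the double loop space. The paper never forms a self-map of \((\Omega A)^2\); it contracts the structure type directly: commutativity turns the relation \(b^3=(ab)^2\) into \(b=a^2\), the variable \(b\) is then contracted away as a singleton, the remaining relation becomes \(a^5=a^6\), and cancelling \(a^5\) leaves \(\sum_{a:\Omega^2 A}(a=\refl)\), which is contractible. That sequence of contractions is precisely the Gaussian elimination of your matrix \(M=\bigl(\begin{smallmatrix}5&-3\\-2&1\end{smallmatrix}\bigr)\) carried out at the level of types, so the two arguments have the same mathematical content; the paper's packaging is shorter because it needs Eckmann--Hilton only once, and never has to verify that your map \(\Psi\) is homotopic to its linearization, nor that matrix composition is respected up to homotopy. (Two small cautions on your version: transvections only generate \(\mathrm{SL}_2(\Z)\), so since \(\det M=-1\) your decomposition also needs a sign flip such as \((a,b)\mapsto(a,b^{-1})\) --- harmless, but an extra factor; and ``abelian group object'' overstates what Eckmann--Hilton gives, which is a pointwise commutativity identification rather than a coherent abelian structure --- fortunately pointwise identifications are all your argument uses.) What your formulation buys is that it makes the general principle explicit: on any loop space pointed at \(\refl\), a balanced presentation whose Eckmann--Hilton-linearized relation matrix lies in \(\GL_n(\Z)\) has a contractible type of structures. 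That is exactly the pattern the paper reuses informally for the Higman type in \cref{sec:higman} (where the commutator relations linearize to the identity matrix), and it is the type-theoretic incarnation of Hatcher's own homological argument, in which invertibility of the abelianized relation matrix is what forces \(H_1\) and \(H_2\) to vanish.
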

\begin{proof}
  The type of Hatcher structures on a loop space \(\Omega A\) is
  \[
    \sum_{a,b : \Omega^2 A}\pa*{a^5=b^3} \times \pa*{b^3 = (ab)^2}.
  \]
  By Eckmann-Hilton~\cite[Thm~2.1.6]{HoTTBook}, we have \(ab = ba\), so the last
  component is equivalent to \(b = a^2\), and can be contracted away to
  obtain: \(\sum_{a: \Omega^2 A}\pa*{a^5=a^6}\).
  But, cancelling \(a^5\), this is equivalent to the contractible type
  \(\sum_{a : \Omega^2 A} \pa*{a = \refl}\).
\end{proof}

\begin{proposition}[\flinkspec{synthetic-homotopy-theory}{hatchers-acyclic-type}{hatchers-acyclic-type-is-acyclic}]%
  \label{Hatcher-acyclic}
  The type $X$ is acyclic.
\end{proposition}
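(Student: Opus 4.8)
The plan is to invoke the characterization of acyclic types from \cref{characterization-of-acyclic-types}, specifically its third condition: it suffices to prove that for every type $B$ and all $x,y : B$, the constants map $(x = y) \to (X \to (x = y))$ is an equivalence. The point of this reformulation is that it lets us reduce a statement about arbitrary identity types to a statement about \emph{loop spaces}, which is exactly where \cref{unique-Hatcher-structure} has something to say. So the whole argument will hinge on recognizing the relevant fibers as types of Hatcher structures on loop spaces.

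Since $X$ is pointed at $\pt$, evaluation at the base point gives a map $\ev_\pt : (X \to (x=y)) \to (x=y)$ satisfying $\ev_\pt \circ \const = \id$. Thus the constants map is a section of $\ev_\pt$, and hence it is an equivalence if and only if $\ev_\pt$ is. I would therefore show that $\ev_\pt$ is an equivalence by checking that each of its fibers is contractible. The fiber of $\ev_\pt$ over a point $p : x=y$ is precisely the type of \emph{pointed} maps $(X,\pt) \ptdto (x=y, p)$.

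Next I would exploit that concatenation with $p^{-1}$ yields an equivalence of pointed types $(x=y, p) \simeq (\Omega(B,x), \refl)$, so that the fiber becomes equivalent to $(X \ptdto \Omega(B,x))$. Now the recursion principle of the HIT $X$ (equivalently, its universal property as the initial Hatcher algebra) identifies the type of pointed maps out of $X$ into any pointed type $A$ with the type of Hatcher structures on $A$; applied here it identifies the fiber with the type of Hatcher structures on the loop space $\Omega(B,x)$. By \cref{unique-Hatcher-structure} this type is contractible, so every fiber of $\ev_\pt$ is contractible, $\ev_\pt$ is an equivalence, and therefore the constants map is an equivalence, as required.

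I expect the only genuine obstacle to be bookkeeping rather than mathematics: the substantive computation (the Eckmann--Hilton cancellation showing that a loop space carries a unique Hatcher structure) has already been isolated in \cref{unique-Hatcher-structure}. What remains is to assemble the equivalence $(X \ptdto A) \simeq (\text{Hatcher structures on } A)$ cleanly from the recursion principle of $X$, and to carry the base points carefully through the shift $(x=y,p) \simeq (\Omega(B,x),\refl)$ so that the fiber of $\ev_\pt$ is identified on the nose with a type of Hatcher structures on a loop space.
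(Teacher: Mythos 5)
Your proposal is correct, and its mathematical core coincides with the paper's: both proofs come down to the equivalence \((X \ptdto A) \simeq \HatcherStr(A)\) given by initiality of \(X\), combined with \cref{unique-Hatcher-structure} applied to a loop space. Where you differ is in the reduction to that core. The paper works straight from the definition of acyclicity: for every pointed type \(Y\) it chains \((\susp X \ptdto Y) \simeq (X \ptdto \Omega Y) \simeq \HatcherStr(\Omega Y) \simeq \One\) using the loop--suspension adjunction, and concludes that \(\susp X\) is contractible because it corepresents the same functor as \(\One\). You instead route through \cref{characterization-of-acyclic-types}\,(iii), then convert the constants map into the evaluation map \(\ev_\pt\) via the section/retraction \(3\)-for-\(2\) argument, identify the fibers of \(\ev_\pt\) with pointed mapping types \((X \ptdto ((x=y),p))\), and shift base points by concatenation with \(p^{-1}\) to land in \((X \ptdto \Omega(B,x))\). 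This is a genuinely different wrapper: it never mentions the suspension, but it leans on heavier machinery, since \cref{characterization-of-acyclic-types} rests on the full epimorphism characterization (\cref{acyclic-characterization}), whereas the paper's argument needs only the adjunction \cite[Lem.~6.5.4]{HoTTBook}; your route also requires the base-point bookkeeping you flag at the end, which the paper's corepresentability argument sidesteps entirely. (Amusingly, your \(\ev_\pt\)-fiber trick is exactly the device the paper employs later, in the proof of \cref{acyclic-nilpotent-orth}.) Both arguments are sound; the paper's is the more economical, yours makes explicit how acyclicity of \(X\) manifests as uniqueness of extensions into arbitrary identity types.
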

\begin{proof}
  For all pointed types \(Y\), we have:
  \begin{align*}
    (\susp X \ptdto Y)
    &\simeq (X \ptdto \Omega Y) \\
    &\simeq \HatcherStr(\Omega Y) \\
    &\simeq \One,
  \end{align*}
  where the first equivalence is~\cite[Lem.~6.5.4]{HoTTBook}, the second is the
  universal property of \(X\), and the third is~\cref{unique-Hatcher-structure}.

  Thus, \(\susp X\) has the universal property of the unit type and hence must be
  contractible.
\end{proof}

In his lecture notes on higher topos theory, Charles Rezk
asked~\cite[p.~11]{Rezk2019} whether it is possible to give a purely
type-theoretic proof of the fact that \(X \to \One\) is an epimorphism.
Together with our characterization of the epimorphisms as the acyclic maps,
\cref{Hatcher-acyclic} positively answers Rezk's question.

The nontriviality of the type \(X\) follows from the following result since any
\(0\)-connected map \(X \to \BA_5\) gives a surjection \(\pi_1(X) \to \AG_5\)
by~\cite[Cor.~8.4.8]{HoTTBook}.

\begin{proposition}
  The type $X$ has a $0$-connected map to $\BA_5$, the classifying type of the
  alternating group $\AG_5$.
\end{proposition}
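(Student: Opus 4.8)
The plan is to construct the map $f : X \to \BA_5$ by the universal property (initiality) of $X$ as a Hatcher algebra. This reduces producing $f$ to equipping $\BA_5$, pointed at its basepoint, with a Hatcher structure. Since $\Omega\BA_5 \cong \AG_5$, such a structure amounts to a choice of group elements $a,b : \AG_5$ together with identifications $r : a^5 = b^3$ and $s : b^3 = (ab)^2$. I would obtain these from the classical presentation of $\AG_5$ as the $(2,3,5)$ von~Dyck (icosahedral rotation) group: concretely, pick $a,b$ with $a$ of order $5$, $b$ of order $3$, and $ab$ of order $2$, so that $a^5 = b^3 = (ab)^2 = e$ all hold; the required $r$ and $s$ are then witnessed by composing the identifications $a^5 = e$, $b^3 = e$, and $(ab)^2 = e$. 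Taking $a = xy$ and $b = y^{-1}$ for explicit permutations $x,y$ with $x^2 = y^3 = (xy)^5 = e$ makes this a finite, decidable verification.

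For the connectivity, I would first observe that both $X$ and $\BA_5$ are $0$-connected: the HIT $X$ has a single point constructor (all other constructors are paths), so $\squash{X}_0 \simeq \One$, while $\BA_5$ is connected by definition of a classifying type. For a map between connected types, being $0$-connected is equivalent to inducing a surjection on $\pi_1$. Using the long exact sequence of the fiber~\cite[Sec.~8.4]{HoTTBook}, the fiber $F$ over the basepoint is merely inhabited (since $\BA_5$ is connected) and $\pi_0(F)$ is the set of cosets of $\im(\pi_1(f))$ in $\AG_5$; hence $F$ is $0$-connected precisely when $\pi_1(f)$ is surjective (cf.~\cite[Cor.~8.4.8]{HoTTBook}).

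Finally I would check surjectivity of $\pi_1(f) : \pi_1(X) \to \AG_5$. By construction $f$ sends the generating loops $a,b : \Omega X$ to the chosen elements $a,b : \Omega\BA_5 \cong \AG_5$, so the image of $\pi_1(f)$ contains both $a$ and $b$. As these generate $\AG_5$, the homomorphism $\pi_1(f)$ is surjective and $f$ is therefore $0$-connected. It is worth noting that this argument never requires identifying $\pi_1(X)$ itself (which is the binary icosahedral group): for surjectivity it is enough that the image contains a generating set of $\AG_5$.

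The main obstacle is the group-theoretic input, namely exhibiting specific $a,b : \AG_5$ that simultaneously satisfy $a^5 = b^3 = (ab)^2 = e$ and generate $\AG_5$. This is exactly the statement that $\AG_5$ is the rotation group of the icosahedron, classical but requiring an explicit choice of permutations and a finite check of the relations and of generation. Everything else---the initiality of $X$ and the reduction of $0$-connectivity to surjectivity on $\pi_1$---is routine.
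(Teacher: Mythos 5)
Your proposal is correct and takes essentially the same approach as the paper: both exhibit explicit elements of $\AG_5$ of orders $5$, $3$, and $2$ (so that $a^5 = b^3 = (ab)^2 = e$ witnesses the Hatcher relations), use them to induce a pointed map $X \to \BA_5$, and conclude $0$-connectedness from surjectivity on $\pi_1$ because the chosen elements generate $\AG_5$. The only cosmetic differences are that you build the map by initiality of $X$ as a Hatcher algebra while the paper factors it as $X \to \Trunc{X}_1 \to \BA_5$ via the induced homomorphism $\pi_1(X) \to \AG_5$, and that the paper makes the concrete choice $a = (1\,2\,3\,4\,5)$, $b = (2\,5\,4)$ with $ab = (1\,2)(3\,4)$ rather than appealing to the $(2,3,5)$ von Dyck presentation.
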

\begin{proof}
  Let $a = (1\,2\,3\,4\,5)$ and $b=(2\,5\,4)$ in $\AG_5$. We have
  $ab = (1\,2)(3\,4)$, so these satisfy $a^5=b^3$ and
  $b^3=(ab)^2$, and thus induce a group homomorphism $\pi_1(X) \to \AG_5$,
  corresponding to a (pointed) map ${X \to \Trunc X_1 \to \BA_5}$.  Since $a$
  and~$b$ generate $\AG_5$, the group homomorphism is surjective, so the map
  $X \to \BA_5$ is $0$-connected by~\cite[Cor.~8.8.5]{HoTTBook}.
\end{proof}

\subsection{Higman's type}
\label{sec:higman}

Another interesting example of an acyclic type is the classifying type of
Higman's group $\HG$~\cite{Higman1951} which is given by the presentation
\[
  \HG = \langle\,a,b,c,d \mid a=[d,a], b=[a,b], c=[b,c], d=[c,d]\,\rangle.
\]
(Here, $[x,y]$ denotes the commutator $[x,y]=xyx^{-1}y^{-1}$.)  We will show
that $\BH$ is acyclic, and moreover, that this presentation is
\emph{aspherical}, meaning that the presentation complex is already a $1$-type,
see also~\cite{DyerVasquez1973}.  The presentation complex is easily imported
into HoTT as the HIT $\BH$ with a point constructor $\pt:\BH$, four path
constructors $a,b,c,d:\Omega\BH$, and four $2$-cell constructors corresponding
to the relations.

\begin{proposition}
  The type $\BH$ is acyclic.
\end{proposition}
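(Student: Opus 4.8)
The plan is to mirror the proof of \cref{Hatcher-acyclic} for Hatcher's type, replacing the Hatcher structure by the evident \emph{Higman structure}. Explicitly, a Higman structure on a pointed type $Z$ consists of four loops $a,b,c,d : \Omega Z$ together with identifications witnessing the four relations, so that the type of such structures is
\[
  \mathrm{HigmanStr}(Z) \colonequiv \sum_{a,b,c,d \,:\, \Omega Z}(a = [d,a]) \times (b = [a,b]) \times (c = [b,c]) \times (d = [c,d]).
\]
By construction $\BH$ is the initial Higman algebra, so its universal property yields, for every pointed type $Y$, an equivalence $(\BH \ptdto Y) \simeq \mathrm{HigmanStr}(Y)$.

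The heart of the argument is the analogue of \cref{unique-Hatcher-structure}: every loop space carries a contractible type of Higman structures. Since I will apply the universal property with $\Omega Y$ in place of $Y$, the four loops live in the double loop space $\Omega^2 Y$, where concatenation is commutative by Eckmann--Hilton~\cite[Thm~2.1.6]{HoTTBook}. Hence every commutator $[x,y] = x \bullet y \bullet x^{-1} \bullet y^{-1}$ equals $\refl$, so each relation $v = [w,v]$ becomes equivalent to $v = \refl$. This identifies $\mathrm{HigmanStr}(\Omega Y)$ with
\[
  \sum_{a,b,c,d \,:\, \Omega^2 Y}(a = \refl) \times (b = \refl) \times (c = \refl) \times (d = \refl),
\]
which, the four conditions being independent, splits as a product of four copies of $\sum_{x : \Omega^2 Y}(x = \refl)$, each contractible, and is therefore contractible.

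With this in hand I would assemble the same chain of equivalences as for Hatcher's type: for every pointed $Y$,
\[
  (\susp\BH \ptdto Y) \simeq (\BH \ptdto \Omega Y) \simeq \mathrm{HigmanStr}(\Omega Y) \simeq \One,
\]
using the loop--suspension adjunction~\cite[Lem.~6.5.4]{HoTTBook}, the universal property of $\BH$, and the contractibility just established. Thus $\susp\BH$ has the universal property of the unit type, so it is contractible and $\BH$ is acyclic.

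Given the Hatcher template, the argument is quite direct; the one step needing genuine care is the commutator computation. Unlike the Hatcher relations, which reduced to equations between \emph{powers}, the Higman relations are phrased via brackets, and I must check that Eckmann--Hilton really trivializes $[w,v]$ uniformly in $v,w$, so that the equivalence $(v = [w,v]) \simeq (v = \refl)$ is natural enough to be applied to all four relations at once and that contracting each of $a,b,c,d$ does not disturb the commutators appearing in the other relations. This is the point where a careless argument could fail, but once the brackets are seen to vanish the remaining contraction is routine.
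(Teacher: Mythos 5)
Your proof is correct and is exactly the paper's argument: the paper's own proof is a one-line appeal to Eckmann--Hilton ``similarly to \cref{unique-Hatcher-structure} and \cref{Hatcher-acyclic}'', which is precisely the chain you spell out (initiality of $\BH$ as a Higman algebra, the loop--suspension adjunction, and contractibility of the type of Higman structures on a loop space). Your observation that the commutators $[w,v]$ all reduce to $\refl$ by Eckmann--Hilton, making the four conditions split independently, is the correct and complete justification of the step the paper leaves implicit.
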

\begin{proof}
  By Eckmann--Hilton, similarly
  to~\cref{unique-Hatcher-structure} and \cref{Hatcher-acyclic}.
\end{proof}

To show that $\BH$ is not contractible, we make use of the following
result due to David W\"arn~\cite[Lem.~8, Thm.~9]{Warn2023}.

\begin{theorem}[Wärn]\label{thm:warn}
  Given a span $A \leftarrow R\to B$ of $0$\nobreakdash-truncated maps of
  $1$-types, its pushout $A+_RB$ is a $1$-type, the inclusion maps are
  $0$-truncated, and the gap map is an embedding.
\end{theorem}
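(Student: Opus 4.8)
The plan is to reduce all three assertions to a computation of the identity types of the pushout $P \colonequiv A +_R B$, and to present those identity types as \emph{sequential colimits} of set-level ``zigzag'' data, following Wärn's method. Write $f : R \to A$ and $g : R \to B$ for the span legs, set $i \colonequiv \inl : A \to P$ and $j \colonequiv \inr : B \to P$, and let $H : i \circ f \sim j \circ g$ be the gluing homotopy. The only two inputs I will use are that $A$ and $B$ are $1$-types, so all their identity types are sets, and that $f$ and $g$ are $0$-truncated, so all of their fibers are sets.

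First I would fix $a_0 : A$ and build, by pushout induction, a code family $\mathrm{code}_{a_0} : P \to \mathcal U$ together with an encoding map, and prove by the usual encode--decode argument (with the transport along the generating meridians handled by descent, i.e.\ the flattening lemma) that $\mathrm{code}_{a_0}(x) \simeq (i(a_0) = x)$ for every $x : P$. The content is that $\mathrm{code}_{a_0}$ is itself presented as a sequential colimit $\colim_n Z_n$: the bottom stage over $A$ is the identity type $(a_0 = a)$, and the passage from stage $n$ to stage $n+1$ inserts one extra ``crossing'' of the span, contributing a fiber of $f$ or of $g$ together with an identity type of $A$ or of $B$. Concretely each transition is a $\Sigma$ over $R$ of fibers of the span legs and identity types of $A$ and $B$.

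The truncation estimate is then, in principle, immediate: every stage $Z_n$ is assembled by $\Sigma$-types out of fibers of $f,g$ and identity types of $A,B$, all of which are sets, so each $Z_n$ is a set; invoking the fact that a sequential colimit of $n$-types is again an $n$-type, $\mathrm{code}_{a_0}(x)$ is a set for every $x$, hence $i(a_0) = x$ is a set. A symmetric computation based at points of $B$ handles basepoints $j(b_0)$, and since being a set is a proposition, pushout induction on the endpoint reduces an arbitrary identity type $x = y$ to the two based cases already treated; thus $P$ is a $1$-type. For the inclusions I would read off the \textbf{same} colimit: $\ap_i : (a = a') \to (i(a) = i(a'))$ is identified with the inclusion of the bottom stage $Z_0 = (a = a')$ into $\colim_n Z_n$, which is an embedding once one checks that the transition maps are embeddings (a sequential colimit of sets along injective maps has injective coprojections). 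By the characterization of $0$-truncated maps via $\ap$ being $(-1)$-truncated, this shows $i$ and, symmetrically, $j$ are $0$-truncated. Finally, for the gap map $R \to A \times_P B$, sending $r \mapsto (f r, g r, H_r)$, I would use the analogous ``cross'' code computing $i(a) = j(b)$, whose bottom stage is the set-level fiber product $\sum_{r : R}(f r = a) \times (g r = b)$; the gap map factors through this single-crossing stage, and its fibers are therefore sets, in fact propositions, so it is an embedding.

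The main obstacle is the construction and verification of the path-space-as-sequential-colimit, that is, proving $\mathrm{code}_{a_0}(x) \simeq (i(a_0) = x)$ with the colimit presentation intact; this is exactly the technical heart of Wärn's method, and it is where descent for pushouts and the interaction of identity types with sequential colimits must be marshalled carefully. A secondary but genuine point is that the truncation bound really uses the $0$-truncatedness of $f$ and $g$: without it the crossing steps would contribute $1$-type fibers, the stages $Z_n$ would fail to be sets, and the hypothesis could not be dropped. Once the colimit description and the ``sequential colimits preserve truncation level'' lemma are in hand, all three conclusions—that $P$ is a $1$-type, that the inclusions are $0$-truncated, and that the gap map is an embedding—fall out by reading off the truncation level and the embedding property of the bottom-stage coprojection.
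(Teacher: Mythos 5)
First, for context: the paper does not prove \cref{thm:warn} at all --- it is imported as a black box from W\"arn~\cite{Warn2023} (his Lem.~8 and Thm.~9) and used only in the proof of \cref{Higman-is-nontrivial}. So your attempt can only be measured against W\"arn's own argument, and at the level of strategy you have named it correctly: present the identity types of the pushout as sequential colimits of zigzag data, then read off the truncation level and the embedding properties. Your closing reductions are also sound in outline: the \(1\)-type claim follows from set-valued path types by pushout induction, \(0\)-truncatedness of \(\inl\) and \(\inr\) follows from the bottom-stage coprojection being an embedding (using that a map is \(0\)-truncated iff its \(\ap\) is an embedding), and the gap-map claim follows from the one-crossing stage, whose total space is \(R\).

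The genuine gap is larger than the one you flag, however. You describe each stage \(Z_n\) as a plain \(\Sigma\)-type built from fibers of \(f,g\) and identity types of \(A,B\), with transitions that insert one extra crossing, so that ``each \(Z_n\) is a set'' comes for free. But a sequential colimit of such \(\Sigma\)-types along insertion maps does not compute the identity types of the pushout: it never identifies a zigzag containing a removable crossing with its reduction. Already for the span \(\One \leftarrow \Two \to \One\), whose pushout is \(\Circle\) and whose legs are \(0\)-truncated maps of \(1\)-types, all path data in the stages is contractible, so your stages are just the word types \(\Two^{2n}\); no colimit of these along insertions yields \(\Omega\Circle \simeq \Z\) (for instance, the empty word and the word that crosses twice via the same element of \(\Two\) both represent \(\refl\) but are never identified). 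In W\"arn's construction the step from \(Z_n\) to \(Z_{n+1}\) is instead a cobase change: the new crossing data is glued onto \(Z_n\) along the locus of redundant zigzags. Consequently the stages are themselves iterated pushouts, so their being sets is not an instance of closure of sets under \(\Sigma\) but a statement of the same kind as the theorem being proven; and the embedding property of the transition maps, which you defer with ``once one checks'', is the constructive surrogate of the normal-form theorem for amalgamated free products. Both points are exactly where the hypothesis that \(f\) and \(g\) are \(0\)-truncated does its work: it says precisely that \(\ap_f\) and \(\ap_g\) are embeddings, which is what makes the gluing legs of those cobase changes embeddings. Together with the colimit description itself --- which you explicitly set aside as the ``technical heart'' --- these steps constitute essentially the entire content of W\"arn's result; what remains in your write-up is the comparatively easy bookkeeping around it.
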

\begin{theorem}\label{Higman-is-nontrivial}
  The type $\BH$ is a $1$-type, and the generators $a,b,c,d$ have infinite
  order.
\end{theorem}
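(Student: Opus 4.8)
The plan is to realise $\BH$ as an iterated pushout of classifying types of groups and to apply W\"arn's theorem (\cref{thm:warn}) at each stage to keep the truncation level under control. Group-theoretically, Higman's group decomposes as an amalgamated free product
\[
  \HG \;\cong\; U * _{F} V, \qquad U = G_1 * _{\Z} G_2, \quad V = G_3 * _{\Z} G_4,
\]
where $G_1=\langle a,d\mid dad^{-1}=a^2\rangle$, $G_2=\langle a,b\mid aba^{-1}=b^2\rangle$, $G_3=\langle b,c\mid bcb^{-1}=c^2\rangle$ and $G_4=\langle c,d\mid cdc^{-1}=d^2\rangle$ are each isomorphic to the Baumslag--Solitar group $\mathrm{BS}(1,2)$; here $U$ is the amalgam of $G_1,G_2$ over the common cyclic subgroup generated by $a$, likewise $V$ over $c$, and $F=\langle b,d\rangle$ is the amalgamated subgroup. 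Correspondingly, by inspection of its cell structure the presentation complex $\BH$ is the pushout $\B U +_{\Circle_b \wedgesum \Circle_d} \B V$, where $\B U$ is itself $\B G_1 +_{\Circle_a} \B G_2$ and $\B V = \B G_3 +_{\Circle_c} \B G_4$, and each $\B G_i$ denotes the two-cell presentation complex of the corresponding $\mathrm{BS}(1,2)$ (the circles $\Circle_x$ are the generating loops).

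First I would check that each $\B G_i$ is a $1$-type. Since $\mathrm{BS}(1,2)$ is the HNN extension of $\Z$ along multiplication by $2$, its presentation complex coincides with the mapping torus of the degree-$2$ self-map $d_2$ of the circle, that is, the pushout of the span $\Circle \xleftarrow{[\id,d_2]} \Circle + \Circle \xrightarrow{\nabla} \Circle$. The fold map $\nabla$ is $0$-truncated (its fibres are $\Two$), and $[\id,d_2]$ restricts to an equivalence on one summand and to $d_2$ on the other, so it is $0$-truncated as well (a map out of a coproduct is $0$-truncated as soon as its two restrictions are). Hence W\"arn's theorem shows that $\B G_i$ is a $1$-type, that the two structural inclusions $\Circle \to \B G_i$ are $0$-truncated, and therefore that each generator of $G_i$ has infinite order, since it generates a copy of $\Z$ injecting into $\pi_1(\B G_i) = G_i$.

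Next I would assemble $\B U = \B G_1 +_{\Circle_a} \B G_2$ and $\B V$ analogously. The gluing circle $\Circle_a \to \B G_i$ is $0$-truncated by the previous step, so W\"arn's theorem applies again: $\B U$ and $\B V$ are $1$-types, the inclusions $\B G_i \to \B U$ are $0$-truncated, and the gap map is an embedding (recording that $G_1 \cap G_2 = \langle a\rangle$ inside $U$). The final step realises $\BH = \B U +_{\Circle_b \wedgesum \Circle_d} \B V$, and here lies the main obstacle: to apply W\"arn's theorem one last time I must verify that the two legs $\Circle_b \wedgesum \Circle_d \to \B U$ and $\to \B V$ are $0$-truncated, equivalently that $F = \langle b,d\rangle$ injects as a free group of rank two into both $U$ and $V$. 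Unlike the single-circle gluings, $0$-truncatedness of a map out of a \emph{wedge} is not implied by $0$-truncatedness on each circle separately, so this genuinely requires a joint argument. I would establish it by amalgam ping-pong: using the canonical surjections $G_i \to \Z$ onto the stable-letter quotients one sees that no nonzero power of $b$ or of $d$ lands in the amalgamated subgroup $\langle a\rangle$, so every alternating word in $b,d$ is reduced and hence nontrivial in $U$, and likewise in $V$. In HoTT this reducedness statement is exactly the kind of normal form that W\"arn's analysis of the path spaces of the pushouts $\B U,\B V$ supplies, so the injectivity of $F$—and thus $0$-truncatedness of both legs—follows from (the proof of) \cref{thm:warn}. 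With both legs $0$-truncated, a final application of W\"arn's theorem shows $\BH$ is a $1$-type.

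For the infinite-order claim, each generator $x \in \{a,b,c,d\}$ has infinite order in some $G_i$, and its loop arises in $\BH$ as a composite of $0$-truncated maps $\Circle_x \to \B G_i \to \B U$ (or $\B V$) $\to \BH$. A composite of $0$-truncated maps is $0$-truncated, so the induced homomorphism $\Z \cong \pi_1(\Circle_x) \to \pi_1(\BH) = \HG$ is injective; that is, $x$ has infinite order in $\HG$, as required. I expect the free-subgroup injectivity of the last paragraph to be the only substantial difficulty, the remaining steps being applications of W\"arn's theorem together with the elementary fact that $0$-truncated maps compose.
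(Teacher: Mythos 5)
Your decomposition is the paper's own, up to the cyclic symmetry of the presentation (you amalgamate over $\Circle_b \wedgesum \Circle_d$ where the paper uses $\Bgen{a,c}$), and the intermediate steps---the mapping-torus presentation of the Baumslag--Solitar complexes and the repeated use of \cref{thm:warn}---also match the paper. You have correctly located the crux: $0$-truncatedness of the wedge legs $\Circle_b \wedgesum \Circle_d \to \B U,\ \B V$, i.e., that $\gen{b,d}$ sits inside $U$ and $V$ as a free group of rank two. But that is exactly the step you do not prove. Your plan is classical ping-pong plus the normal form theorem for amalgamated products (``every reduced alternating word is nontrivial''), and you assert that this ``follows from (the proof of) \cref{thm:warn}''. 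It does not follow from the theorem as stated: \cref{thm:warn} says only that the pushout of a span of $0$-truncated maps of $1$-types is a $1$-type, that the cocone inclusions are $0$-truncated, and that the gap map is an embedding---none of which yields Britton's lemma, nor the \emph{joint} injectivity of the subgroup generated by one loop from each side. Internalizing amalgam normal forms in HoTT would be a substantial project in itself, and it is precisely the combinatorial group theory this paper is designed to avoid. The paper instead closes the step internally by descent: it forms the commutative cube with apex $\One$ over $\Bgen a$, $\Bgen b$, $\Bgen c$, whose top face is the wedge pushout $\Bgen{a,c}$, whose bottom face is the amalgam pushout $\Bgen{a,b,c} = \Bgen{a,b} +_{\Bgen b} \Bgen{b,c}$, and whose back faces are argued to be pullbacks by a double-coset computation; descent then makes the front faces pullbacks, so the pullbacks of $\Bgen{a,c} \to \Bgen{a,b,c}$ along the jointly surjective maps $\Bgen{a,b} \to \Bgen{a,b,c}$ and $\Bgen{b,c} \to \Bgen{a,b,c}$ are the $0$-truncated maps $\Bgen a \to \Bgen{a,b}$ and $\Bgen c \to \Bgen{b,c}$, and $0$-truncatedness can be checked after pullback along a surjection. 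Without this argument (or an actual proof of the normal form theorem), your proof is incomplete at its central point.

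There is also a smaller slip upstream. You claim W\"arn's theorem makes ``the two structural inclusions $\Circle \to \B G_i$'' $0$-truncated, hence that \emph{both} generators of each $G_i$ have infinite order. In the mapping-torus pushout of $\Bgen b \leftarrow \Circle + \Circle \rightarrow \Circle$, both cocone inclusions are identified with the circle of the base generator $b$ (the loop of the right-hand circle becomes a conjugate of $b$); the stable letter $a$ is created by the gluing paths and is not in the image of either inclusion. So W\"arn gives only that $\Bgen b \to \Bgen{a,b}$ is $0$-truncated. For $\Bgen a \to \Bgen{a,b}$---which you need when you glue $\B U = \B G_1 +_{\Circle_a} \B G_2$, since $a$ is the stable letter of $G_2$---the paper supplies a separate argument: this map has a retraction $\Bgen{a,b} \to \Bgen a$ sending $b$ to the neutral element (well defined since the relation becomes trivial), and a map of $1$-types admitting a retraction is $0$-truncated. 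This is easy to patch; the gap in the previous paragraph is the real issue.
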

\begin{proof}
  Indeed, $\BH$ can be re-expressed as an iterated pushout as follows:
  \begin{equation}\label{eq:higman-decomp}
    \begin{tikzcd}[column sep=6mm, row sep=6mm]
      \Bgen b\ar[r]\ar[d]\pocorner & \Bgen{b,c}\ar[d] \\
      \Bgen{a,b}\ar[r] & \Bgen{a,b,c}
    \end{tikzcd}
    \quad
    \begin{tikzcd}[column sep=6mm, row sep=6mm]
      \Bgen{a,c}\ar[r]\ar[d]\pocorner & \Bgen{a,b,c}\ar[d] \\
      \Bgen{c,d,a}\ar[r] & \BH
    \end{tikzcd}
  \end{equation}
  Here, each type is the HIT that uses only the constructors of $\BH$ that
  involve the mentioned generators.  In particular, $\Bgen b$ is the circle (the
  classifying type of the free group on one generator, $\mathbb{Z}$) and
  $\Bgen{a,c}$ is the classifying type of the free group on two generators.  We
  need to show that all maps in the span parts are $0$-truncated maps of
  $1$-types, because then the above theorem kicks in, showing in the end that
  $\BH$ is a $1$-type, with all four elements $a,b,c,d$ generating infinite
  cyclic subgroups of $\HG = \pi_1(\BH)$.
  Indeed, if we can show that we have a span of \(0\)-truncated maps between
  \(1\)-types in the right square in \eqref{eq:higman-decomp}, then
  \cref{thm:warn} tells us that \(\BH\) is a \(1\)-type and that, e.g., the map
  \(\Bgen{a,b,c} \to \BH\) is \(0\)-truncated.
  Moreover, if we can then show that we have a span of \(0\)-truncated maps
  between \(1\)-types in the left square in \eqref{eq:higman-decomp}, then
  \cref{thm:warn} implies that the map
  \(\Bgen b \to \Bgen{b,c} \to \Bgen{a,b,c}\) is also \(0\)-truncated.
  Combining this with the above, we see that the composite map
  \(\Bgen b \to \BH\) is 0-truncated, i.e., that the generator \(b\) has
  infinite order in \(\HG\).
  One can similarly derive this for the other generators.

  We start by looking at the types of the form $\Bgen{a,b}$.  These are the
  classifying types of the Baumslag--Solitar~\cite{BS1962} groups
  \[
    \mathrm{BS}(1,2) = \langle\, a,b \mid aba^{-1}=b^2\,\rangle,
  \]
  and are so-called HNN-extensions~\cite{HNN1949,LyndonSchupp2001}, so we have
  coequalizer diagrams,%
  \footnote{%
  In ordinary \(1\)-category theory, every coequalizer is epic. The usual proof
  shows that every (homotopy) coequalizer is \(0\)-epic.
  But coequalizers need not be \(1\)-epic (even coequalizers of \(1\)-types):
  the coequalizer \(\Bgen{b} \to \Bgen{a,b}\) is \(0\)-truncated but not
  \(1\)-epic as this would make it \(0\)-connected
  (by~\cref{1-acyclic-characterization}) and hence an equivalence, which it is
  not.}
  as below left, or equivalently pushouts, as
  below right:
  \[
    \begin{tikzcd}
      \Circle \ar[r,shift left=2,"b"]\ar[r,shift right=2,"b^2"'] & \Bgen b \ar[r] &
      \Bgen{a,b}
    \end{tikzcd}
    \quad
    \begin{tikzcd}
      \Circle+\Circle \ar[r,"\nabla"]\ar[d,"{[1,2]}"']\pocorner & \Circle\ar[d] \\
      \Bgen b \ar[r] & \Bgen{a,b}
    \end{tikzcd}
  \]
  The maps in the span of the pushout square are $0$-truncated maps of
  $1$-types, so~\cref{thm:warn} applies.  The (identical) inclusion maps can be
  identified with the map $\Bgen b \to \Bgen{a,b}$.  The other inclusion,
  $\Bgen a \to \Bgen{a,b}$, is also $0$-truncated, as it has a retraction,
  $\Bgen{a,b} \to \Bgen a$, defined by sending $b$ to the neutral element; this
  is well defined, since the relation becomes $aa^{-1}=1$.

  This takes care of the input span to the left pushout square
  in~\eqref{eq:higman-decomp}.  It remains to see that the maps of the form
  $\Bgen{a,c}\to\Bgen{a,b,c}$ are $0$-truncated.  The follows by
  descent~\cite[Sec.~25]{Rijke2022} from looking at the commutative cube:
  \[
    \begin{tikzcd}
      & \One\ar[dl]\ar[d]\ar[dr] & \\
      \Bgen a\ar[d] & \Bgen b\ar[dl]\ar[dr] & \Bgen c\ar[d] \\
      \Bgen{a,b}\ar[dr] & \Bgen{a,c}\ar[d]\ar[from=2-1,crossing over]\ar[from=2-3,crossing over] & \Bgen{b,c}\ar[dl] \\
      & \Bgen{a,b,c} &
    \end{tikzcd}
  \]
  The back faces are pullbacks: The subgroup $\gen b$ is normal in $\gen{a,b}$,
  and has trivial intersection with the subgroup $\gen a$, so the pullback of
  $\Bgen a\to \Bgen{a,b} \leftarrow \Bgen b$ is the double coset
  $\gen a\backslash \gen{a,b} / \gen b$. Since the product of the two subgroups
  is the whole group, this is contractible, and similarly for the back right
  face.  In addition, the top and bottom faces are pushouts, so the front faces
  are pullbacks as well by descent.  Since the front bottom maps are
  (individually and jointly) surjective, and the maps on the sides are
  $0$-truncated, the map in front is as well, as desired.
\end{proof}
We conclude that $\BH$ is a nontrivial acyclic $1$-type.

There are also analogs of the Higman group with any number $n$ of generators,
and the same argument shows that these classify infinite acyclic groups for
$n\ge4$.  For $n<4$, these groups are trivial, see e.g.~\cite{Samuel2020}
for the case $n=3$.

The usual proofs that the Higman group is not the trivial group,
e.g.~\cite[Prop.~6(b), Sec.~1.4]{Serre1980}, rely on nontrivial results from
combinatorial group theory, specifically that we have embeddings into HNN
extensions and amalgamations~\cite[Thms.~2.1(I) and 2.6, Ch.~IV]{LyndonSchupp2001}.
A noteworthy aspect of our proof is that it completely avoids combinatorial
group theory and associated case distinctions, instead using tools from higher
topos theory such as flattening (cf.~\cref{flattening-avoids-excluded-middle}).

Finally, we note that nullification at any nontrivial acyclic type, such as
$\BH$ or Hatcher's example \(X\) of \cref{sec:hatcher}, provides a nontrivial
modality for which all types are separated, as conjectured
in~\cite[Ex.~6.6]{ChristensenRijke2022}.

\section{Concluding remarks}\label{sec:conclusion}

In this paper we characterized the epimorphisms in univalent mathematics as the
acyclic maps.
The ensuing study of acyclic types, and the relativized versions to \(k\)-types,
led to a further development of synthetic homotopy theory with applications in
group theory.

There are numerous directions for future work. Our primary objective is to
establish the acyclic maps as a modality~\cite{RSS2020}, as Raptis and Hoyois
did in the context of higher topos theory~\cite{Raptis2019,Hoyois2019}.
Moreover, we would hope to show that this modality is accessible, perhaps
under a mild extra assumption. In spaces, it can be explicitly described
as nullification at a small collection of spaces~\cite{BerrickCasacuberta1999}.
An essential ingredient in constructing the modality is Quillen's
plus-construction~\cite[Def.~1.4.1]{Kbook}.

We left open the question of giving a cohomological characterization of \(k\)-acyclicity,
as well as whether we can prove in HoTT that cohomologically acyclic types
are homologically acyclic.

Another thread for future research is to construct acyclic types of a
different nature than the examples presented in this paper, by considering
automorphism groups~\cite{HarpeMcDuff1983}, e.g. \(\Aut(\mathbb N)\), or binate
groups~\cite{Berrick1989}.

Additionally, we could work towards type-theoretic developments of the
Barratt--Priddy(--Quillen) theorem~\cite{BarrattPriddy} and the Kan--Thurston
theorem~\cite{KanThurston1976}. The latter says that an
\(\infty\)\nobreakdash-group can be presented by a pair \((G,P)\) of a group
\(G\) and perfect normal subgroup \(P \triangleleft G\)~\cite{BDH1980}.

Finally, we could try to generalize our results on epimorphisms to arbitrary
wild categories with pullbacks and universal pushouts that satisfy descent. We
might also have to impose the requirement that these categories are locally
cartesian closed for the notion of a dependent epimorphism to make sense.

\section*{Acknowledgements}
We are grateful to Mathieu Anel for~\cref{Whitehead-plus-principles} and to
Fredrik Nordvall Forsberg for detailed comments that helped to improve the
paper.
Further, we thank the anonymous reviewers for their thorough reports and
insightful comments. In particular they drew our attention to an omission in the
proof of \cref{epi-of-groups-characterization} and suggested the current proof
of \cref{acyclic-iff-balanced} (our previous proof relied on the plus principle
in one direction).

\section*{Funding}
The second author was supported by The Royal Society (grant reference
URF\textbackslash R1\textbackslash\\191055).
The third author was supported by the TydiForm project and the MURI grant, US
Air Force Office of Scientific Research, award numbers FA9550-21-1-0024 and
FA9550-21-1-0009 respectively.

\setcounter{biburllcpenalty}{7000}
\setcounter{biburlucpenalty}{8000}
\printbibliography

\end{document}